\numberwithin{equation}{section}
\newtheorem{Theorem}{Theorem}[section]
\newtheorem*{Theorem*}{Theorem}
\newtheorem{Corollary}[Theorem]{Corollary}
\newtheorem{Lemma}[Theorem]{Lemma}
\newtheorem{Proposition}[Theorem]{Proposition}
 { \theoremstyle{definition}
\newtheorem{Definition}[Theorem]{Definition}

\newtheorem{Example}[Theorem]{Example}
\newtheorem{Remark}[Theorem]{Remark} }
\newcommand{\norm}[1]{\lVert#1\rVert}
\newcommand{\abs}[1]{\lvert#1\rvert}
\begin{document}
\allowdisplaybreaks

\renewcommand{\thefootnote}{}

\newcommand{\arXivNumber}{2304.13272}

\renewcommand{\PaperNumber}{007}

\FirstPageHeading

\ShortArticleName{A General Dixmier Trace Formula for the Density of States on Open Manifolds}

\ArticleName{A General Dixmier Trace Formula\\ for the Density of States on Open Manifolds\footnote{This paper is a~contribution to the Special Issue on Global Analysis on Manifolds in honor of Christian B\"ar for his 60th birthday. The~full collection is available at \href{https://www.emis.de/journals/SIGMA/Baer.html}{https://www.emis.de/journals/SIGMA/Baer.html}}}

\Author{Eva-Maria HEKKELMAN~$^{\rm a}$ and Edward MCDONALD~$^{\rm b}$}

\AuthorNameForHeading{E.~Hekkelman and E.~McDonald}

\Address{$^{\rm a)}$~School of Mathematics and Statistics, University of New South Wales,\\
\hphantom{$^{\rm a)}$}~Kensington, NSW 2052, Australia}
\EmailD{\href{mailto:e.hekkelman@unsw.edu.au}{e.hekkelman@unsw.edu.au}}

\Address{$^{\rm b)}$~Department of Mathematics, Penn State University, University Park, PA 16802, USA}
\EmailD{\href{mailto:eam6282@psu.edu}{eam6282@psu.edu}}

\ArticleDates{Received April 27, 2023, in final form January 10, 2024; Published online January 17, 2024}

\Abstract{We give an abstract formulation of the Dixmier trace formula for the density of states. This recovers prior versions and allows us to provide a Dixmier trace formula for the density of states of second order elliptic differential operators on manifolds of bounded geometry satisfying a certain geometric condition. This formula gives a new perspective on Roe's index on open manifolds.}

\Keywords{density of states; index theory; Dixmier trace; singular trace; bounded geometry; manifolds}

\Classification{47N50; 58J42; 53C23}

\renewcommand{\thefootnote}{\arabic{footnote}}
\setcounter{footnote}{0}

\section{Introduction}
Originating from solid-state physics, the density of states (DOS) is used to study electrical and thermal properties of a material. Loosely speaking, it should describe how many quantum states are admitted at each energy level per volume of the material studied. Mathematically, there are multiple ways to define it rigorously. In this paper, we define the DOS as follows, similar to~\mbox{\cite[Section C]{Simon1982}}. Given a (possibly unbounded) lower-bounded self-adjoint operator $H$ on the Hilbert space $L_2(X)$, where $X$ is some metric space with a Borel measure, we assume the existence of the limits
\[
\lim_{R\to \infty} \frac{1}{|B(x_0,R)|}\mathrm{Tr}\bigl(\exp(-tH)M_{\chi_{B(x_0,R)}}\bigr), \qquad t>0,
\]
where $|B(x_0,R)|$ denotes the volume (measure) of the closed ball $B(x_0,R)$ with center $x_0 \in X$ and radius $R$, $\mathrm{Tr}$ is the usual operator trace, $\exp(-tH)$ is obtained via functional calculus, $M_g$ denotes the multiplication operator with respect to the function $g$ and $\chi_{I}$ is the indicator function of the set $I$. The existence of these limits implies, via the Riesz--Markov--Kakutani theorem, the existence of a Borel measure $\nu_H$~\cite[Proposition C.7.2]{Simon1982} such that
\[
\lim_{R\to \infty} \frac{1}{|B(x_0,R)|}\mathrm{Tr}\big(f(H)M_{\chi_{B(x_0,R)}}\big) = \int_{\mathbb{R}}f \,{\rm d}\nu_H, \qquad f\in C_c(\mathbb{R}).
\]
This measure, if it exists, is what we call the density of states of the operator $H.$ This essentially coincides with similar definitions elsewhere in the literature, at least for Schr\"odinger operators on Euclidean space, see~\cite{Simon1982}.

Apart from its origin in physics, the density of states has been proposed by Strichartz~\cite{Strichartz2012} as a substitute for the spectral counting function $n(\lambda,H) := \mathrm{Tr}\big(\chi_{(-\infty,\lambda)}(H)\big)$ in the case where $H$ does not have discrete spectrum.

The DOS has been studied extensively in mathematical contexts. Common research areas are the existence of the DOS~\cite{AizenmanWarzel2015, BerezinShubin1991, CarmonaLacroix1990, DoiIwatsuka2001, PasturFigotin1992, Shubin1979, Simon1982}, the analytical properties of the function $\lambda \mapsto \nu_H((-\infty, \lambda])$~\cite{AizenmanWarzel2015, BourgainKlein2013, CarmonaLacroix1990, PasturFigotin1992} and its asymptotic behaviour as $\lambda$ approaches boundaries of the support of $\nu_H$~\cite{AizenmanWarzel2015, BourgainKlein2013, CarmonaLacroix1990, Lang1991, PasturFigotin1992}. The study of the Anderson localisation phenomenon is inseparably related to the study of the density of states~\cite{AizenmanWarzel2015, CarmonaLacroix1990, Lang1991}.

There are three obvious choices for the metric space $X$ on which we work. First of all, one can simply take $X = \mathbb{R}^d$ as a flat background space as is usually done. Secondly, one can consider a discrete metric space $X$ as a discrete model or a discrete approximation of reality~\cite{AizenmanWarzel2015, AlexanderOrbach1982, BourgainKlein2013, CarmonaLacroix1990, ChayesChayes1986, Hof1993, KirschMuller2006, PasturFigotin1992, Veselic2005, Wegner1981}. Finally, one can take $X$ to be a manifold to study the DOS in even greater generality~\cite{AdachiSunada1993, LenzPeyerimhoff2008, LenzPeyerimhoff2004, PeyerimhoffVeselic2002, Veselic2008}. The present paper will mainly focus on the manifold setting.

The previous papers~\cite{AHMSZ,AMSZ} prove the formula
\begin{equation}
\label{eq:general_form}
\mathrm{Tr}_\omega(f(H) M_w) = \int_{\mathbb{R}} f \, {\rm d}\nu_H, \qquad f\in C_c(\mathbb{R}),
\end{equation}
where $w$ is a fixed weight function, for a certain class of operators $H$ admitting a DOS on, respectively, Euclidean space and certain discrete metric spaces. Here $\mathrm{Tr}_\omega$ denotes a Dixmier trace on the ideal of weak trace-class operators $\mathcal{L}_{1,\infty} (L_2(X))$ (see Section~\ref{S: Prelims}). In this paper, we will extend formula~\eqref{eq:general_form} to a statement in abstract operator theory (Theorem~\ref{main_theorem} below). This abstract statement allows for a new proof of~\eqref{eq:general_form} avoiding the heavy real analytic difficulties encountered in~\cite{AHMSZ, AMSZ}. Furthermore, the abstract operator theoretic reformulation of the density of states formula allows us to generalise~\eqref{eq:general_form}
to certain manifolds of bounded geometry, for the definition of bounded geometry, we refer to Section~\ref{S: Manifolds}.

Formulas like~\eqref{eq:general_form} arise in a multitude of situations. First of all, there is a clear resemblance (and in fact, relation~\cite{AMSZ}) to Connes' trace formula~\cite{Connes1988, LMSZVol2} on Euclidean space
\[
\mathrm{Tr}_\omega\big(M_f (1-\Delta)^{-\frac{d}{2}}\big) = C_d \int_{\mathbb{R}^d} f(t) \, {\rm d}t, \qquad f \in C_c\big(\mathbb{R}^d\big),
\]
for some constant $C_d$ only depending on $d$. Connes' trace formula shows that integration with respect to the Lebesgue measure can be recovered with Dixmier traces. Formula~\eqref{eq:general_form} in turn shows that the same holds true for the DOS measure. Furthermore, a link between the DOS and Dixmier traces has previously appeared in the seminal work by Bellissard, van Elst and Schulz-Baldes~\cite{BellissardVanElst1994}. Noting the relation between Dixmier traces and $\zeta$-function residues~\cite{LSZVol1}, a result by Bourne and Prodan~\cite[Lemma~6.1]{BourneProdan2018} in particular bears some resemblance to the results in this article.

Let us go into more detail concerning the results~\cite{AHMSZ,AMSZ}. The paper~\cite{AMSZ} proves equation~\eqref{eq:general_form} on Euclidean space for Schr\"odinger operators of the form $H = -\Delta + M_V$, with $V\in L_\infty\big(\mathbb{R}^d\big)$ real-valued. The weight function $w$ is of the form \smash{$w(x) = \langle x \rangle^{-d} = \big(1+|x|^2\big)^{- d/2}$}.

In~\cite{AHMSZ}, the setting is discrete metric spaces $(X, d_X)$ satisfying a property called Property~(C)~by the authors. Namely, if $\{r_k\}_{k\in \mathbb{N}}$ denotes the sequence obtained by ordering the set $\{d_X\!(x_0, y)\}_{y{\in} X}$ in increasing manner, then the theorem in~\cite{AHMSZ} requires
\begin{equation}\tag{C}\label{eq:Property_C}
 \lim_{k\to \infty} \frac{|B(x_0, r_{k+1})|}{|B(x_0, r_{k})|} =1.
\end{equation}
With this assumption, equation~\eqref{eq:general_form} is then proven for general lower bounded self-adjoint operators $H$. The function $w$ takes the form
\[
w(x) := \frac{1}{1+|B(x_0, d_X(x_0, x))|}, \qquad x \in X.
\]
We remark that a weaker version of this result can be proven with the techniques developed in this manuscript, but we will not explicitly show this.

In the present paper, we will prove equation~\eqref{eq:general_form} for uniformly elliptic differential operators on certain manifolds of bounded geometry. Like in the discrete case, an additional geometrical condition is needed for the main theorem. We would now first like to discuss this condition.

Let $X$ be a $d$-dimensional Riemannian manifold (assumed to be smooth, orientable and connected throughout this paper), and let $d_X$ be the distance function on $X$ induced by the Riemannian metric. The Riemannian volume of the ball $B(x_0, r)$ is denoted by $|B(x_0, r)|$. Its boundary, $\partial B(x_0, r)$, is a $(d-1)$-dimensional Hausdorff-measurable subset of $X$, and as such we can talk about its volume, calculated with respect to its inherited Riemannian metric. This $(d-1)$-dimensional volume we will also denote as $|\partial B(x_0, r)|$. In fact, it then holds that (see Section~\ref{S: Manifolds})
\[
\frac{{\rm d}}{{\rm d}r}|B(x_0, r)|=|\partial B(x_0, r)|.
\]

What we require of our manifolds is that both $|B(x_0,r)|$ and $|\partial B(x_0,r)|$ grow sufficiently slowly and regularly. Namely, we will ask that the ratios $\frac{|\partial B(x_0,R)|}{|B(x_0,R)|}$ and $\frac{\frac{{\rm d}}{{\rm d}r}|_{r=R}|\partial B(x_0,r)|}{|\partial B(x_0,R)|}$ vanish as~${R\to\infty}$ in the following way.

\begin{Definition}[Property~(D)]\label{D:Property_D}
 Let $X$ be a non-compact Riemannian manifold of bounded geometry. It is said to have Property~(D) if
\begin{equation*}
\left \{ \frac{\abs{\partial B(x_0, k)}}{\abs{B(x_0,k)}}\right\}_{k\in \mathbb{N}} \in \ell_2(\mathbb{N})
\end{equation*}
and
\begin{equation*}
 \lim_{R\to \infty} \frac{\frac{\partial}{\partial r}\big |_{r=R} |\partial B(x_0, r)|}{|\partial B(x_0,R)|} = 0.
\end{equation*}
\end{Definition}

Note that if a function $f\in C^1(\mathbb{R})$ satisfies $\lim_{x\to \infty}\frac{f'(x)}{f(x)}= 0$, then $\log f(x) = o(x)$ and hence $f(x) = {\rm e}^{o(x)}$. Therefore, if the manifold $X$ satisfies Property~(D),
 then necessarily both $|B(x_0,r)| ={\rm e}^{o(r)}$ and $|\partial B(x_0, r)| = {\rm e}^{o(r)}$. A quick calculation shows that Property~(D) still admits volume growth of the order $|B(x_0, r)| = \exp \big(r^{\frac{1}{2}-\varepsilon}\big)$. The conditions listed mostly serve to prevent erratic behaviour of the growth. Observe the similarity in this sense to Property~(C) for discrete metric spaces quoted above. Property~(D) is obviously satisfied for Euclidean spaces.

To add, in the cited literature on the DOS in a manifold setting, one frequently has a manifold~$X$ with a discrete, finitely generated group $\Gamma$ of isometries of $(X,g)$ which acts freely and properly discontinuously on $X$ such that the quotient $X/\Gamma$ is compact~\cite{AdachiSunada1993, LenzPeyerimhoff2008, LenzPeyerimhoff2004, PeyerimhoffVeselic2002, Veselic2008}. We shall treat this example in more detail in Section~\ref{S: Example}. In each of the cited papers, a recurring assumption is that $\Gamma$ is amenable. This turns out to be equivalent with the existence of an expanding family of bounded domains $D_j\subset X$ such that
\[
\lim_{j \to \infty} \frac{|\partial_h D_j|}{|D_j|} = 0, \qquad \forall h>0,
\]
where $\partial_h D_j := \{x \in D_j \colon d(x, \partial D_j) \leq h \}$~\cite{AdachiSunada1993}. This is clearly a closely related property, and we will show that it is weaker than Property~(D), i.e., Property~(D) in this setting implies that $\Gamma$ is amenable.

To formulate the main theorem, we need to specify the class of operators for which it works. The following definition is essentially the same as~\cite{Kordyukov1991} and the $C^\infty$-bounded differential operators defined in~\cite[Appendix 1]{Shubin1992}.
\begin{Definition}
A differential operator $P$ on a $d$-dimensional Riemannian manifold $X$ of bounded geometry is called a uniform differential operator (of order $m$) if it can be expressed in normal coordinates in a neighbourhood of each point $x \in X$ as
\[
P = \sum_{|\alpha|\leq m} a_{\alpha, x}(y) D^\alpha,\qquad D^{\alpha} = {\rm i}^{-|\alpha|}\partial_{y_1}^{\alpha_1}\cdots \partial_{y_d}^{\alpha_d}
\]
and for all multi-indices $\beta$, we have
\[
\big\lvert D^\beta a_{\alpha, x}(0)\big\rvert \leq C_{\alpha, \beta}, \qquad \abs{\alpha}\leq m.
\]
Following the notation of~\cite{Kordyukov1991}, we denote this by $P \in BD^m(X)$.

Furthermore, let $\sigma_x(y, \xi) = \sum_{\abs{\alpha}=m} a_{\alpha, x}(y) \xi^\alpha$ be the principal symbol of $P$ near $x.$ We say that $P \in BD^m(X)$ is uniformly elliptic, denoted $P\in EBD^m(X)$, if there exists $\varepsilon>0$ such that
\[
\abs{\sigma_x(0,\xi)}\geq \varepsilon \abs{\xi}^m,\qquad \xi\in \mathbb{R}^d,\quad x\in X.
\]
\end{Definition}
A similar definition applies to operators acting between sections of vector bundles of bounded geometry; see~\cite{Shubin1992}.

\begin{Theorem}\label{T: main manifold thm}
 Let $(X,g)$ be a non-compact Riemannian manifold of bounded geometry with Property~$($D$)$. Let $P \in EBD^2(X)$ be self-adjoint and lower-bounded, and let $w$ be the function on $X$ defined by \[ w(x) = (1+\abs{B(x_0,d_X(x,x_0))})^{-1},\qquad x \in X.\]

 Then $f(P)M_w$ is an element of $\mathcal{L}_{1,\infty}$ for all compactly supported functions $f\in C_c(\mathbb{R})$. If $P$ admits a density of states $\nu_P$, we have for all extended limits $\omega$
 \[
 \mathrm{Tr}_{\omega}(f(P)M_w) = \int_{\mathbb{R}} f(\lambda)\, {\rm d}\nu_P(\lambda), \qquad f\in C_c(\mathbb{R}).
 \]
\end{Theorem}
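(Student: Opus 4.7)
The strategy is to deduce Theorem~\ref{T: main manifold thm} from the abstract Dixmier trace formula (Theorem~\ref{main_theorem}) applied to $A = f(P)$ together with the weight $M_w$. The proof then splits into two parts: establishing the weak-trace-class membership $f(P)M_w \in \mathcal{L}_{1,\infty}$, and verifying a trace asymptotic along an appropriate filtration of multiplication projections by indicators of balls centred at $x_0$.

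For the $\mathcal{L}_{1,\infty}$-membership, I would first note that because $P$ is bounded below and $f \in C_c(\mathbb{R})$, the function $\lambda \mapsto f(\lambda)(1+\lambda)^k$ is bounded for every $k$, so $f(P) = [f(P)(1+P)^k]\cdot (1+P)^{-k}$ with bounded first factor. It therefore suffices to show $(1+P)^{-k} M_w \in \mathcal{L}_{1,\infty}$ for $k$ sufficiently large (say $k = \lceil d/2 \rceil$). This is a Cwikel-type estimate: uniform ellipticity of $P$ combined with bounded geometry produces uniform Gaussian kernel bounds on $\mathrm{e}^{-tP}$, and hence on $(1+P)^{-k}$, in the style of the $BD^m$-calculus of Kordyukov. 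Since $w(x) = (1+|B(x_0, d_X(x_0,x))|)^{-1}$ has distribution function $|\{x\in X\colon w(x)>s\}|\leq s^{-1}$ with respect to Riemannian volume, it lies in weak $L_1(X)$, and a uniformly local Cwikel inequality adapted to bounded geometry yields the claim without yet invoking Property~(D).

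To invoke Theorem~\ref{main_theorem}, I would take the filtration $\{M_{\chi_{B(x_0,k)}}\}_{k\in\mathbb{N}}$ and verify that $|B(x_0, k+1)|/|B(x_0, k)|\to 1$; this is where Property~(D) enters. The second condition in Definition~\ref{D:Property_D} forces $|\partial B(x_0, r)|$ to vary slowly on unit intervals, and the identity $\frac{\rm d}{{\rm d}r}|B(x_0, r)| = |\partial B(x_0, r)|$ gives
\[
\frac{|B(x_0, k+1)|}{|B(x_0, k)|} = 1 + \frac{\int_k^{k+1}|\partial B(x_0, r)|\,{\rm d}r}{|B(x_0, k)|} \longrightarrow 1,
\]
using also the $\ell_2$-summability condition for the overall rate. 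The DOS hypothesis then supplies the trace limit
\[
\lim_{k\to\infty}\frac{\mathrm{Tr}(f(P) M_{\chi_{B(x_0, k)}})}{|B(x_0, k)|} = \int_{\mathbb{R}} f\, {\rm d}\nu_P,
\]
and Theorem~\ref{main_theorem} then delivers $\mathrm{Tr}_\omega(f(P)M_w) = \int f\, {\rm d}\nu_P$ for every extended limit $\omega$.

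The main technical hurdle will be the Cwikel-type inequality on a general bounded geometry manifold: there one typically only obtains uniformly local Schatten bounds, and converting these into a global weak-trace-class bound for $(1+P)^{-k} M_w$ requires a careful summation over a locally finite cover of $X$ in which the decay of $w$ compensates for the multiplicity of the cover. A secondary subtlety, more of a bookkeeping nature, is matching the quantitative hypotheses of Theorem~\ref{main_theorem} to the two conditions of Definition~\ref{D:Property_D}; the $\ell_2$-summability condition is presumably tailored exactly to the error estimates in the abstract theorem, and one must check that the somewhat exotic volume growth still allowed by Property~(D) (up to order $\exp\bigl(r^{1/2-\varepsilon}\bigr)$) is compatible with both the Cwikel step and the abstract convergence argument.
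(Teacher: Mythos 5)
Your overall architecture is right -- reduce to the abstract Theorem~\ref{main_theorem}, prove $\mathcal{L}_{1,\infty}$-membership via a Cwikel-type estimate adapted to bounded geometry with a locally finite cover, and use that $w$ lies in a weak-$L_1$-type space -- but there are two substantive gaps.

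First, you have misidentified the role of Property~(D). You propose to use it to show $|B(x_0,k+1)|/|B(x_0,k)|\to 1$ and feed this into a ``filtration'' hypothesis of Theorem~\ref{main_theorem}. But Theorem~\ref{main_theorem} has no filtration hypothesis; its two hypotheses are $(i)$ $\exp(-tP)W\in\mathcal{L}_{1,\infty}$ and $(ii)$ $\exp(-tP)[P,W]\in\mathcal{L}_1$. Condition $(i)$, along with the fact that $w\in\ell_{1,\infty}(L_\infty)$, follows from bounded geometry alone (via the Grimaldi--Pansu estimate) and does \emph{not} need Property~(D). Property~(D) -- and specifically the $\ell_2$-summability of $|\partial B(x_0,k)|/|B(x_0,k)|$ together with the second-derivative vanishing -- is used precisely to control the commutator: one shows that for $L$ a first- or second-order $BD$-operator with no constant term, $Lw\in\ell_1(L_\infty)$, because $L w$ picks up factors of $\tilde w'$ and $\tilde w''$ whose sizes are governed by $|\partial B|$ and its derivative. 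Without this you cannot verify hypothesis $(ii)$, and the abstract theorem does not apply. The convergence $|B(x_0,k+1)|/|B(x_0,k)|\to 1$, which you focus on, plays no direct role: the right-hand side limit in Theorem~\ref{main_theorem} is $\lim_{\varepsilon\to 0}\varepsilon\,\mathrm{Tr}\big(\mathrm{e}^{-tP}\chi_{[\varepsilon,\infty)}(M_w)\big)$, and the spectral projection $\chi_{[\varepsilon,\infty)}(M_w)$ is the indicator of a metric ball $B(x_0,R_\varepsilon)$ with $\varepsilon^{-1}\approx 1+|B(x_0,R_\varepsilon)|$, so this limit equals the DOS integral directly from the definition of $\nu_P$; no further volume-ratio control is needed.

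Second, Theorem~\ref{main_theorem} only gives the formula for $f(\lambda)=\mathrm{e}^{-t\lambda}$ (it is stated in terms of $\mathrm{e}^{-tP}$, not a general $f(P)$, because Duhamel/commutator bookkeeping is available for the heat semigroup). You write that you would apply it directly with ``$A=f(P)$,'' but that is not licensed by the statement. The passage to arbitrary $f\in C_c(\mathbb{R})$ requires a separate density argument (Stone--Weierstrass on the Laplace transforms $\{\mathrm{e}^{-t\cdot}\}_{t>0}$, together with continuity of both sides in $f$), which is a genuine additional step. Once these two points are fixed -- Property~(D) feeding into the commutator trace-class condition, and a final density step from heat kernels to $C_c(\mathbb{R})$ -- your approach matches the paper's.
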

For the definition of the space of weak trace-class operators $\mathcal{L}_{1,\infty}$, see Section~\ref{S: Prelims}.

Note that manifolds $X$ with a group $\Gamma$ acting freely and properly discontinuously on $X$ by isometries such that $X/\Gamma$ is compact are of bounded geometry (see Section~\ref{S: Manifolds}), although for these manifolds to have Property~(D) it is necessary that $\Gamma$ has subexponential growth.

One aspect of Theorem~\ref{T: main manifold thm} is that the Dixmier trace on the left-hand side is defined regardless of whether $P$ admits a DOS. Hence, the left-hand side can be interpreted as a generalisation of the density of states.

Furthermore, observe that Euclidean space is a manifold of bounded geometry satisfying Property~(D), and Schr\"odinger operators $H=-\Delta + M_V$ with smooth bounded potential $V$ are operators of the required class. Therefore, besides some minor smoothness assumptions the main result in this paper is a generalisation of~\cite{AMSZ}.

However, whereas the proofs in~\cite{AMSZ} were based on delicate singular value estimates particular to Euclidean space, our approach here is more abstract.

The general theorem that is the motor behind this paper is the following. It is a significant generalisation of~\cite[Theorem~5.7]{AMSZ}.

\begin{Theorem}\label{main_theorem}
 Let $W$ and $P$ be linear operators, such that $P$ is self-adjoint and lower-bounded and $W$ is positive and bounded. Assume that for every $t>0$, we have
 \begin{enumerate}\itemsep=0pt
 \item[$(i)$] $\exp(-tP)W \in \mathcal{L}_{1,\infty},$ 
 \item[$(ii)$] $\exp(-tP)[P,W] \in \mathcal{L}_1$. 
 \end{enumerate}
 Then, for every extended limit $\omega$,
 \[
 \mathrm{Tr}_{\omega}\big({\rm e}^{-tP}W\big) = \lim_{\varepsilon\to 0} \varepsilon \mathrm{Tr}\big({\rm e}^{-tP}\chi_{[\varepsilon,\infty)}(W)\big),
 \]
 whenever the limit on the right-hand side exists.
\end{Theorem}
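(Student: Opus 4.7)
The plan is to reduce Theorem~\ref{main_theorem} to the classical Tauberian theorem for Dixmier traces of positive operators, using hypothesis~(ii) to symmetrize the non-self-adjoint product $e^{-tP}W$.

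First, I will use Duhamel's formula
\[
[e^{-tP/2},W]=-\int_0^{t/2}e^{-sP}[P,W]e^{-(t/2-s)P}\,ds
\]
together with hypothesis~(ii) to conclude $[e^{-tP/2},W]\in\mathcal{L}_1$. The algebraic identity $e^{-tP}W-e^{-tP/2}We^{-tP/2}=e^{-tP/2}[e^{-tP/2},W]$ then shows that $e^{-tP}W$ and $A:=e^{-tP/2}We^{-tP/2}$ differ by a trace-class operator. Since the Dixmier trace annihilates $\mathcal{L}_1$ and $A$ is positive in $\mathcal{L}_{1,\infty}$, I obtain $\mathrm{Tr}_\omega(e^{-tP}W)=\mathrm{Tr}_\omega(A)$.

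Second, I will invoke the classical Tauberian theorem for Dixmier traces of positive operators: if $\lim_{\varepsilon\to 0}\mathrm{Tr}((A-\varepsilon)_+)/\log(1/\varepsilon)=L$ exists, then $\mathrm{Tr}_\omega(A)=L$ for every extended limit $\omega$. To connect this limit to the hypothesis on the right-hand side, I use the layer-cake decomposition $(W-\varepsilon)_+=\int_\varepsilon^{\norm{W}}\chi_{[s,\infty)}(W)\,ds$ together with cyclicity of the ordinary trace to get
\[
\mathrm{Tr}(e^{-tP}(W-\varepsilon)_+)=\int_\varepsilon^{\norm{W}}\mathrm{Tr}\bigl(e^{-tP}\chi_{[s,\infty)}(W)\bigr)\,ds.
\]
Under the hypothesis $\varepsilon\mathrm{Tr}(e^{-tP}\chi_{[\varepsilon,\infty)}(W))\to L$, this integral equals $L\log(1/\varepsilon)(1+o(1))$ as $\varepsilon\to 0$.

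The crucial remaining step is to show
\[
\mathrm{Tr}((A-\varepsilon)_+)-\mathrm{Tr}\bigl(e^{-tP}(W-\varepsilon)_+\bigr)=o(\log(1/\varepsilon))\qquad\text{as }\varepsilon\to 0,
\]
so that the Tauberian limit $\mathrm{Tr}((A-\varepsilon)_+)/\log(1/\varepsilon)\to L$ holds. Morally, this says that $e^{-tP/2}(W-\varepsilon)_+e^{-tP/2}$ differs from $(A-\varepsilon)_+$ by a trace-class operator whose trace is uniformly bounded in $\varepsilon$. The hardest part will be making this rigorous, given that the function $\lambda\mapsto(\lambda-\varepsilon)_+$ is only Lipschitz, not smooth. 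A clean route is to mollify it to a smooth function $g_\varepsilon$, apply Helffer--Sj\"ostrand or Peller-type operator-Lipschitz estimates to control the commutator $[e^{-tP/2},g_\varepsilon(W)]$ uniformly in $\varepsilon$ via hypothesis~(ii), and then pass to a limit. Combining all three stages yields $\mathrm{Tr}_\omega(e^{-tP}W)=L$ for every extended limit $\omega$.
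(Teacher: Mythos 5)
Your first two stages are in good shape, and your symmetrization via Duhamel's formula echoes what the paper does. But the third stage — the claim that $\mathrm{Tr}\bigl((A-\varepsilon)_+\bigr)-\mathrm{Tr}\bigl({\rm e}^{-tP}(W-\varepsilon)_+\bigr)=o(\log(1/\varepsilon))$ — is where the argument genuinely breaks, and the commutator/mollification route cannot repair it. The issue is not the non-smoothness of $\lambda\mapsto(\lambda-\varepsilon)_+$ but that $(BWB-\varepsilon)_+$ and $B(W-\varepsilon)_+B$ are different objects \emph{even when $B$ and $W$ commute}, i.e., when every commutator vanishes. Concretely, take $P=\mathrm{diag}\bigl(\tfrac12\log(k+1)\bigr)$, $W=\mathrm{diag}\bigl((k+1)^{-1/2}\bigr)$ and $t=1$: then $[P,W]=0$, ${\rm e}^{-P}W=\mathrm{diag}\bigl((k+1)^{-1}\bigr)\in\mathcal{L}_{1,\infty}$, and $A={\rm e}^{-P/2}W{\rm e}^{-P/2}={\rm e}^{-P}W$. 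A short computation gives $\mathrm{Tr}\bigl((A-\varepsilon)_+\bigr)\sim\log(1/\varepsilon)$ while $\mathrm{Tr}\bigl({\rm e}^{-P}(W-\varepsilon)_+\bigr)\sim 2\log(1/\varepsilon)$, so the difference is of order $\log(1/\varepsilon)$, not $o(\log(1/\varepsilon))$. Here your proof would output $\mathrm{Tr}_\omega({\rm e}^{-P}W)=2$, whereas the Dixmier trace of $\mathrm{diag}\bigl((k+1)^{-1}\bigr)$ is $1$. This is not a counterexample to the theorem — hypothesis $(i)$ fails at $t<1$, in particular at $t=1/4$ — but your argument only invokes $(i)$ at $t$ and, via Duhamel, $(ii)$ at $t/4$, and those are satisfied. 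So the sketch proves a strictly stronger statement which is false: the ``for every $t>0$'' hypothesis must enter the comparison between the two quantities in an essential way, and it does not in your outline.

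The paper avoids this comparison entirely. Instead of passing to $(A-\varepsilon)_+$, it inserts an auxiliary variable $s$ and writes $W^s=s\int_0^{\norm{W}}r^{s-1}\chi_{[r,\infty)}(W)\,{\rm d}r$, so that $\mathrm{Tr}\bigl({\rm e}^{-tP}W^s\bigr)=s\int_0^{\norm{W}}r^{s-1}\mathrm{Tr}\bigl({\rm e}^{-tP}\chi_{[r,\infty)}(W)\bigr)\,{\rm d}r$; an elementary Abelian argument (Lemma~\ref{L: trace formula}) identifies $\lim_{s\downarrow1}(s-1)\mathrm{Tr}({\rm e}^{-tP}W^s)$ with the right-hand side of the theorem. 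The heavy lifting that your Tauberian step was trying to shortcut is instead delegated to~\cite[Theorem~1.3.20]{LMSZVol2} (Theorem~\ref{LSZ2ethm} here), which equates $\mathrm{Tr}_\omega({\rm e}^{-tP}W)$ with $\lim_{s\downarrow1}(s-1)\mathrm{Tr}({\rm e}^{-tP}W^s)$ under a trace-norm growth bound $\|{\rm e}^{-tP/2}W^s\|_1=o((s-1)^{-2})$. Verifying that bound is precisely where ${\rm e}^{-tP/4}W\in\mathcal{L}_{1,\infty}$ — that is, hypothesis $(i)$ at a \emph{smaller} time — enters through the Araki--Lieb--Thirring inequality. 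In the example above $\mathrm{diag}\bigl((k+1)^{-5/8}\bigr)={\rm e}^{-P/4}W\notin\mathcal{L}_{1,\infty}$, so the paper's machinery correctly does not fire. If you want to keep a $(A-\varepsilon)_+$-style Tauberian approach, you would have to build this dependence on ${\rm e}^{-tP/4}W$ into your comparison step, and the mollification/operator-Lipschitz idea as written does not do that.
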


As an application of Theorem~\ref{T: main manifold thm}, we will look at Roe's index theorem on open manifolds~\cite{Roe1988a}. Roe's index theorem is one approach of many that extends Atiyah--Singer's index theorem~\cite{AtiyahSinger1963} to non-compact manifolds, namely non-compact manifolds of bounded geometry that admit a~\textit{regular exhaustion}. For the precise definition, we refer to~\cite[Section~6]{Roe1988a}, but for this introduction it suffices to know that Property~(D) is a stronger assumption.

Given a \textit{compact} manifold $X$ with two vector bundles $E,F \to X$ and an elliptic differential operator $D\colon \Gamma(E) \to \Gamma(F)$, the local index formula equates the Fredholm index of $D$ to the integral of a differential form given by topological data denoted here simply by $\mathbf{I}(D)$,
\begin{equation}\label{eq:Atiyah-Singer}
\operatorname{Ind}(D) = \int_X \mathbf{I}(D).
\end{equation}
A special case of the index formula can be proved with the McKean--Singer formula, which in this case states that
\[
 \operatorname{Ind}(D) = \mathrm{Tr}\big(\eta {\rm e}^{-t\mathcal{D}^2}\big),
\]
where $\eta$ is the grading operator on the bundle $E\oplus F\to X,$ and $\mathcal{D}$ is the self-adjoint operator acting on $\Gamma(E\oplus F)$ by the formula
\[
 \mathcal{D} = \begin{pmatrix} 0 & D^* \\ D & 0\end{pmatrix}.
\]

If $X$ is not compact, neither side of~\eqref{eq:Atiyah-Singer} is well defined in general. In the setting of a~non-compact Riemannian manifold $X$ with a regular exhaustion, and with a graded Clifford bundle~${S \to X}$ also of bounded geometry (see Section~\ref{S:Roe}), which comes with a natural first-order elliptic differential operator, the Dirac operator $D\colon \Gamma(S_+) \to \Gamma(S_-)$, Roe modifies both sides of equation~\eqref{eq:Atiyah-Singer} as follows. Defining a linear functional $m$ on bounded $d$-forms via an averaging procedure, the right-hand side simply becomes $m(\mathbf{I}(D))$. For the left-hand side, Roe defines an algebra of uniformly smoothing operators $\mathcal{U}_{-\infty}$, and shows that elliptic operators are invertible modulo $\mathcal{U}_{-\infty}$. Recall that for operators that are invertible modulo compact operators (Fredholm operators), the Fredholm index is an element of the $K$-theory group $K_0(\mathcal{K}((\mathcal{H})) = \mathbb{Z}$~\cite{Wegge-Olsen1993}. In this case, we can similarly define an abstract index of an elliptic operator $D$ as an element of~$K_0(\mathcal{U}_{-\infty})$, by observing that $\mathcal{U}_{-\infty}$ forms an ideal in what Roe defines as uniform operators~$\mathcal{U}$. Furthermore, using the functional $m$ one can define a trace $\tau$ on $\mathcal{U}_{-\infty}$. This trace can be extended to a trace on the matrix algebras $M_n\big(\mathcal{U}_{-\infty}^+\big)$ \big(with $\mathcal{U}_{-\infty}^+$ denoting the unitisation of $\mathcal{U}_{-\infty}$\big) by putting $\tau(1)=0$ when passing to the unitisation, and then tensoring with the usual trace on $M_n(\mathbb{C})$. The tracial property gives that this map descends to a map called the dimension-homomorphism $\dim_\tau\colon K_0(\mathcal{U}_{-\infty}) \to \mathbb{R}$. These ingredients give Roe's index theorem:
\begin{equation*}
\dim_\tau(\operatorname{Ind}(D)) = m(\mathbf{I}(D)).
\end{equation*}

The nature of the averaging procedure that Roe develops is such that if $D^2$ admits a density of states, Theorem~\ref{T: main manifold thm} leads to a Dixmier trace reformulation of the analytical index $\dim_\tau(\operatorname{Ind}(D))$. In Section~\ref{S:Roe}, we prove
\[
\dim_\tau(\operatorname{Ind}(D)) = \mathrm{Tr}_\omega\big(\eta \exp\bigl(-tD^2\bigr)M_w\big), \qquad t>0,
\]
where $\eta$ is the grading on $S$.

\section{Preliminaries}\label{S: Prelims}

\subsection{Operator theory}
In this section, we will briefly summarise the operator theory used in this paper. For a more thorough exposition, we refer to~\cite{LSZVol1, LMSZVol2}.

We work on complex Hilbert spaces $\mathcal{H}$, on which the set of bounded operators is denoted by~$\mathcal{B}(\mathcal{H})$ and the ideal of compact operators by $\mathcal{K}(\mathcal{H}).$ The operator norm on $\mathcal{B}(\mathcal{H})$ is denoted~$\|\cdot\|.$ For any compact operator $T \in \mathcal{K}(\mathcal{H})$, an eigenvalue sequence $\lambda(T) = \{\lambda(k,T)\}_{k=0}^\infty$ is a~sequence of the eigenvalues of $T$ listed with multiplicity, ordered such that $ \{|\lambda(k,T)|\}_{k=0}^\infty$ is non-increasing. The singular value sequence $\mu(T) = \{\mu(k,T)\}_{k=0}^\infty$ of $T$ is defined by
\[
\mu(k,T) :=\lambda(k,|T|), \qquad k \geq 0.
\]
Equivalently,
\[
\mu(k,T) = \inf\{\|T-R\|\colon \mathrm{rank}(R)\leq k\}.
\]
Let $\ell_\infty$ denote the space of complex-valued bounded sequences indexed by $\mathbb{N}.$ For $x\in \ell_\infty$, we will denote by $\mu(x) = \{\mu(k,x)\}_{k=0}^\infty \in \ell_\infty$ the decreasing rearrangement of $ \{|x_k|\}_{k=0}^\infty.$ This is consistent with the notation for the singular value sequence of an operator in the sense that if $\mathrm{diag}(x)$ is the operator given by a diagonal matrix with entries $\{x_k\}_{k=0}^\infty,$ then $\mu(\mathrm{diag}(x)) = \mu(x).$

For $p,q\in (0,\infty)$, recall the definition of the Lorentz sequence spaces $\ell_{p,q}$, $\ell_{p,\infty}$ and $\ell_{\infty, q}$ as the spaces of sequences $x\in \ell_\infty$ such that
\begin{gather*}
\| x \|_{p,q} := \left( \sum_{k=0}^\infty (k+1)^{\frac{q}{p}-1} \mu(k,x)^q \right)^{\frac{1}{q}}<\infty,
\\
\| x \|_{p, \infty} := \sup_{k\geq 0}\, (k+1)^{\frac{1}{p}} \mu(k,x) <\infty,
\end{gather*}
and
\[
\| x \|_{\infty, q} := \left( \sum_{k=0}^\infty (k+1)^{-1} \mu(k,x)^q \right)^{\frac{1}{q}}<\infty,
\]
respectively. The space $\ell_{\infty, \infty}$ is defined as $\ell_{\infty, \infty} := \ell_\infty$, and $\ell_{p,p}$ is denoted as $\ell_p$. Using the previously defined singular value sequences, we give the definition of the Lorentz ideals $\mathcal{L}_{p,q}$ for~$p,q \in (0,\infty]$ as the quasi-normed spaces of compact operators $T$ such that
\[
\|T\|_{p,q} := \| \mu(T)\|_{\ell_{p,q}} <\infty.
\]
Like for the sequence spaces, $\mathcal{L}_{p,p}$ is denoted as $\mathcal{L}_p$, and $\mathcal{L}_{\infty, \infty} := \mathcal{B}(\mathcal{H})$. Indeed, the operator norm $\| \cdot \|$ on $\mathcal{B}(\mathcal{H})$ is sometimes denoted by $\| \cdot \|_\infty$ for clarity. The ideal $\mathcal{L}_1$ is the familiar ideal of trace-class operators (on which we can define the usual operator trace $\mathrm{Tr}$), and $\mathcal{L}_{1,\infty}$ is called the ideal of weak trace-class operators.

These quasi-norms have the property that, for $p,q \in (0,\infty],$
\[
\|ABC\|_{p,q} \leq \|A\|_{\infty} \|B\|_{p,q} \|C\|_{\infty}, \qquad B \in \mathcal{L}_{p,q}, \quad A, C \in \mathcal{B}(\mathcal{H}).
\]
We will have need of the following inequality:
\[
\|AB\|_{1} \leq 2\|A\|_{1,\infty} \|B\|_{\infty,1}, \qquad A \in \mathcal{L}_{1,\infty}, \quad B \in \mathcal{L}_{\infty, 1},
\]
which can easily be checked via the definitions of these quasi-norms and the inequality
\[
\mu(2k,AB)\leq \mu(k,A)\mu(k,B).
\]

A linear functional $\phi\colon \mathcal{L}_{1,\infty} \to \mathbb{C}$ is called a trace if
\[
\phi(TB) = \phi(BT), \qquad T\in \mathcal{L}_{1,\infty},\quad B\in \mathcal{B}(\mathcal{H}).
\]

In contrast to the situation on $\mathcal{L}_1$, continuous traces on $\mathcal{L}_{1,\infty}$ are far from unique. In particular, one important class of traces are called Dixmier traces. A linear functional $\omega$ on $\ell_{\infty}$ is called an extended limit if $\omega(x) = 0$ for all $x\in \ell_\infty$ that converge to $0$ at infinity and $\omega(\mathbf{1}) = 1,$ where $\mathbf{1} = \{1\}_{k=0}^\infty$ is the constant sequence. For any such extended limit, the formula
\[
\mathrm{Tr}_\omega(T) := \omega \Bigg( \Bigg \{ \frac{1}{\log(2+N)} \sum_{k=0}^N \lambda(k,T)\Bigg \}_{N=0}^\infty\Bigg), \qquad T\in \mathcal{L}_{1,\infty}
\]
defines a continuous trace on $\mathcal{L}_{1,\infty}$~\cite[Theorem 6.1.2]{LSZVol1}, called a Dixmier trace. From the definition, one can immediately see that $\mathcal{L}_1$ is in the kernel of any Dixmier trace.

\subsection{Left-disjoint families of operators}
In this subsection, we recall some facts about sums of left-disjoint families of operators, and prove and estimate for their $\mathcal{L}_{p}$-norm and $\mathcal{L}_{p,\infty}$-norm. A family  $ \{T_j \}_{j=0}^\infty$
 of bounded linear operators on a Hilbert space $\mathcal{H}$ is \emph{left-disjoint} if $T_j^*T_k = 0$ for all $j\neq k.$

Given a sequence $\big\{T_j\big\}_{j=0}^\infty$ of bounded linear operators on $\mathcal{H}$, let
\[
 \bigoplus_{j=0}^\infty T_j
\]
denote the operator on $\mathcal{H}\otimes \ell_2(\mathbb{N})$ given by
\[
 \sum_{j=0}^\infty T_j\otimes e_je_j^*,
\]
where $e_je_j^*$ is the rank $1$ projection onto the orthonormal basis element $e_j \in \ell_2.$

Note that
\[
 \mu\Bigg(\bigoplus_{j=0}^\infty T_j\Bigg) = \mu\Bigg(\bigoplus_{j=0}^\infty \mathrm{diag}\big(\mu\big(T_j\big)\big)\Bigg).
\]
To put it differently, the singular value sequence of the direct sum $\bigoplus_{j=0}^\infty T_j$ is the decreasing rearrangement of the sequence indexed by $\mathbb{N}^2$ given by
\[
 \big\{\mu\big(k,T_j\big)\big\}_{j,k= 0}^\infty.
\]
By definition, we have $\|T\|_{p_1,p_2} = \|\mu(T)\|_{\ell_{p_1,p_2}}.$ Therefore,
\[
 \Bigg\|\bigoplus_{j=0}^\infty T_j\Bigg\|_{p_1, p_2} = \big\|\big\{\mu\big(k,T_j\big)\big\}_{j,k\geq 0}\big\|_{\ell_{p_1, p_2}(\mathbb{N}^2)}.
\]
Now let $q>0$. We have for each $j$ that  $\mu(k,T_j)\leq (k+1)^{-\frac1q}\|T_j\|_{q,\infty}$.  Therefore,
\[
 \Bigg\|\bigoplus_{j=0}^\infty T_j\Bigg\|_{p_1,p_2} \leq \big\|\big\{(1+k)^{-\frac1q}\|T_j\|_{q,\infty}\big\}_{j,k\geq 0}\big\|_{\ell_{p_1,p_2}(\mathbb{N}^2)}.
\]
Lemma 4.3 of~\cite{LevitinaSukochev2020} implies that if $q$ is sufficiently small, then there exists a constant $C_{p_1,p_2,q}$ such that for any sequence $\{x_j\}_{j=0}^\infty$, we have
\[
 \bigl\|\bigl\{(1+k)^{-\frac1q}x_j\bigr\}_{j,k\geq 0}\bigr\|_{\ell_{p_1,p_2}(\mathbb{N}^2)} \leq C_{p_1,p_2,q}\|\{x_j\}_{j=0}^\infty\|_{\ell_{p_1,p_2}}.
\]
Taking $x_j=\|T_j\|_{q,\infty}$ and sufficiently small $q$ (depending on $p_1$, $p_2$), we have
\begin{equation}\label{direct_sum_quasinorm_bound}
 \Bigg\|\bigoplus_{j=0}^\infty T_j\Bigg\|_{p_1,p_2} \leq C_{p_1,p_2,q} \big\|\{\|T_j\|_{q,\infty}\}_{j=0}^\infty \big\|_{p_1,p_2}.
\end{equation}

Combining~\cite[Proposition 2.7 and Lemma 2.9]{LevitinaSukochev2020} gives the following result.
\begin{Lemma}\label{disjointification}
 Let $0<p<2,$ and let $\big\{T_j\big\}_{j=0}^\infty$ be a left-disjoint family of bounded linear operators. There exist constants $C_p$, $C_p'$ such that
 \begin{align*}
& \Bigg\|\sum_{j=0}^\infty T_j\Bigg\|_{p,\infty}\leq C_p\Bigg\|\bigoplus_{j=0}^\infty T_j\Bigg\|_{p,\infty},\\
& \Bigg\|\sum_{j=0}^\infty T_j\Bigg\|_{p}\leq C_p'\Bigg\|\bigoplus_{j=0}^\infty T_j\Bigg\|_{p}.
 \end{align*}
\end{Lemma}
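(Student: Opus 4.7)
The plan is to exploit the algebraic consequence of left-disjointness. Since $T_j^*T_k = 0$ for $j\neq k$, the cross-terms in $\bigl(\sum_j T_j\bigr)^*\bigl(\sum_j T_j\bigr)$ vanish, and we obtain
\[
\Bigl|\sum_j T_j\Bigr|^2 = \sum_j T_j^*T_j.
\]
An identical computation on $\mathcal{H}\otimes\ell_2(\mathbb{N})$ gives $\bigl|\bigoplus_j T_j\bigr|^2 = \bigoplus_j T_j^*T_j$, whose eigenvalue sequence is, by the block-diagonal structure, the decreasing rearrangement of the union of the spectra of the positive operators $T_j^*T_j$.

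Setting $A_j := T_j^*T_j$ and $q := p/2 \in (0,1)$, the definition of $\mu$ yields
\[
\Bigl\|\sum_j T_j\Bigr\|_{p,\infty}^2 = \Bigl\|\sum_j A_j\Bigr\|_{q,\infty}, \qquad \Bigl\|\bigoplus_j T_j\Bigr\|_{p,\infty}^2 = \Bigl\|\bigoplus_j A_j\Bigr\|_{q,\infty},
\]
and similarly $\bigl\|\sum_j T_j\bigr\|_p^2 = \bigl\|\sum_j A_j\bigr\|_q$ and $\bigl\|\bigoplus_j T_j\bigr\|_p^2 = \bigl\|\bigoplus_j A_j\bigr\|_q$. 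Hence the lemma reduces entirely to the positive-operator disjointification inequality
\[
\Bigl\|\sum_j A_j\Bigr\|_{q,\infty} \leq C \Bigl\|\bigoplus_j A_j\Bigr\|_{q,\infty}, \qquad A_j \geq 0,
\]
together with its $\|\cdot\|_q$ analogue; these are precisely the content of Proposition~2.7 and Lemma~2.9 of~\cite{LevitinaSukochev2020} which one would invoke at this step.

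I expect the main obstacle to be this final positive-operator inequality. Left-disjointness of the $T_j$ imposes no orthogonality on the right-supports of the $A_j$, so the spectrum of $\sum_j A_j$ is not a pointwise rearrangement of the union of the spectra of the $A_j$, and the two sides cannot be compared term-by-term. One must instead truncate each $A_j$ at carefully chosen spectral levels and exploit a $q$-triangle inequality of the form $\|\sum_j x_j\|_{q,\infty}^q \lesssim \sum_j \|x_j\|_{q,\infty}^q$, available in the Lorentz quasi-Banach spaces only for $q<1$; this is precisely why the restriction $0 < p < 2$ is essential in the statement.
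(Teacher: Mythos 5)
Your reduction is correct and coincides with the paper's proof, which consists solely of the citation ``Combining Proposition~2.7 and Lemma~2.9 of~\cite{LevitinaSukochev2020}'': left-disjointness gives $\bigl|\sum_j T_j\bigr|^2 = \sum_j T_j^*T_j$ and $\bigl|\bigoplus_j T_j\bigr|^2 = \bigoplus_j T_j^*T_j$, and since $\mu(k,|T|^2) = \mu(k,T)^2$ one has $\|T\|_{p,\infty}^2 = \bigl\||T|^2\bigr\|_{p/2,\infty}$ and $\|T\|_p^2 = \bigl\||T|^2\bigr\|_{p/2}$, reducing the lemma to the positive-operator disjointification bound at exponent $q=p/2<1$, which is exactly what the cited results supply.

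One small correction to your closing paragraph, which does not invalidate the proposal but does mischaracterise where the difficulty lies: a $q$-subadditivity estimate of the form $\bigl\|\sum_j x_j\bigr\|_{q,\infty}^q \lesssim \sum_j \|x_j\|_{q,\infty}^q$, even if it were available, would bound $\bigl\|\sum_j A_j\bigr\|_{q,\infty}^q$ by $\sum_j \|A_j\|_{q,\infty}^q$, and this last quantity is \emph{not} controlled by $\bigl\|\bigoplus_j A_j\bigr\|_{q,\infty}^q$. Take each $A_j$ to be rank one with eigenvalue $(j+1)^{-1/q}$: then $\sum_j\|A_j\|_{q,\infty}^q = \sum_j (j+1)^{-1}$ diverges, while $\bigl\|\bigoplus_j A_j\bigr\|_{q,\infty} = 1$. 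The positive-operator disjointification inequality therefore requires a genuine Rotfeld-type rearrangement argument and not naive quasi-subadditivity; since you invoke~\cite{LevitinaSukochev2020} at this step rather than attempting to reprove it, your overall argument stands.
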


A combination of~\eqref{direct_sum_quasinorm_bound} and Lemma~\ref{disjointification} immediately yields the following:
\begin{Corollary}\label{disjointification_corollary}
 Let $0<p<2,$ and let $\big\{T_j\big\}_{j=0}^\infty$ be a left-disjoint family of operators. There exist $q, q'>0$ $($depending on $p)$ and constants $C_{p,q}, C'_{p,q'}>0$ such that
 \begin{align*}
& \Bigg\|\sum_{j=0}^\infty T_j\Bigg\|_{p,\infty}\leq C_{p,q}\big\|\{\|T_j\|_{q,\infty}\}_{j=0}^\infty\big\|_{\ell_{p,\infty}},\\
& \Bigg\|\sum_{j=0}^\infty T_j\Bigg\|_{p}\leq C'_{p,q'}\big\|\{\|T_j\|_{q',\infty}\}_{j=0}^\infty\big\|_{\ell_{p}}.
 \end{align*}
\end{Corollary}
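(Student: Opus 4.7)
The plan is to obtain this corollary by directly chaining the two estimates that have just been established: Lemma~\ref{disjointification}, which reduces sums of left-disjoint operators to direct sums, and the inequality~\eqref{direct_sum_quasinorm_bound}, which bounds the Lorentz quasi-norm of a direct sum by a Lorentz sequence norm of the individual Lorentz quasi-norms. Since the corollary is flagged in the text as an immediate consequence of the two, the proof is essentially a one-line chain of inequalities with bookkeeping for the constants and the parameter $q$.

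Concretely, for the first inequality I would start from $\sum_{j=0}^\infty T_j$, invoke Lemma~\ref{disjointification} (which requires precisely $0<p<2$ together with left-disjointness of $\{T_j\}$) to get
\[
\Bigl\|\sum_{j=0}^\infty T_j\Bigr\|_{p,\infty} \leq C_p \Bigl\|\bigoplus_{j=0}^\infty T_j\Bigr\|_{p,\infty},
\]
and then apply~\eqref{direct_sum_quasinorm_bound} with $(p_1,p_2)=(p,\infty)$, choosing $q>0$ small enough (depending on $p$) for Lemma~4.3 of~\cite{LevitinaSukochev2020} to apply. This produces the bound $C_{p,\infty,q}\bigl\|\{\|T_j\|_{q,\infty}\}_{j=0}^\infty\bigr\|_{\ell_{p,\infty}}$. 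Composing yields the first claim with $C_{p,q}:=C_p\cdot C_{p,\infty,q}$.

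For the second inequality the argument is structurally identical: I apply the second half of Lemma~\ref{disjointification} (with constant $C_p'$) and then~\eqref{direct_sum_quasinorm_bound} with $(p_1,p_2)=(p,p)$ and a possibly different parameter $q'>0$, again chosen small enough relative to $p$ for the cited lemma from~\cite{LevitinaSukochev2020} to be applicable. The final constant is $C'_{p,q'}:=C_p'\cdot C_{p,p,q'}$.

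There is no substantive obstacle here: the genuine content sits in Lemma~\ref{disjointification} and in the derivation of~\eqref{direct_sum_quasinorm_bound}, both of which ultimately lean on the quoted result from~\cite{LevitinaSukochev2020}. The only point worth being careful about is that $q$ and $q'$ must be selected small enough to satisfy the hypothesis of that lemma in the $\ell_{p,\infty}$ and $\ell_p$ cases respectively; once that choice is made, the two inequalities follow by composition and no further argument is required.
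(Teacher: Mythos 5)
Your proof is correct and follows precisely the route the paper intends: the paper itself introduces the corollary with the phrase ``A combination of~\eqref{direct_sum_quasinorm_bound} and Lemma~\ref{disjointification} immediately yields the following,'' and your chaining of those two estimates, with the parameters $q$ and $q'$ chosen small enough for the cited lemma from~\cite{LevitinaSukochev2020}, is exactly that combination.
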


\begin{Remark}
 For $p=1$, we can take any $0<q,q'<1$.
\end{Remark}

\subsection{Preliminaries on manifolds}
\label{S: Manifolds}
As mentioned previously, all manifolds in this paper
are smooth, non-compact and connected.

\begin{Definition}
Let $(X,g)$ be a Riemannian manifold. The injectivity radius $i(x)$ at a point $x \in X$ is defined as
\[
i(x):= \sup \{ R \in \mathbb{R}_{\geq 0} \colon \exp_x|_{B(0,R)} \text{ is a diffeomorphism}\},
\]
where $\exp_x$ is the exponential map at $x$ and $B(0,R) \subseteq T_xX$ is the metric ball with radius $R$ centered around the origin. The injectivity radius of the manifold $X$ is defined as \[i_g := \inf_{x \in X} i(x).\]
\end{Definition}

It is a theorem that the injectivity radius map \begin{align*}
 i \colon \ X&\to (0, \infty],\\
 x&\mapsto i(x)
\end{align*}
is continuous~\cite[Proposition~III.4.13]{Sakai1996}.

\begin{Definition}\label{D:bounded_geometry}
A Riemannian manifold $(X,g)$ has \textit{bounded geometry} if the injectivity radius~$i_g$ satisfies
$i_g >0$,  and the Riemannian curvature tensor $R$ and all its covariant derivatives are uniformly bounded.
\end{Definition}

This is the definition as in~\cite[Definition 1.1]{Kordyukov1991},~\cite[Chapter II]{Eichhorn2008} and~\cite[Chapter 3]{Eichhorn2007}. Every open manifold admits a metric of bounded geometry~\cite{Greene1978}.

As is well known (see, e.g.,~\cite[Lemma 2.4]{Kordyukov1991}), bounded geometry implies that there exists~${r_0>0}$ and a countable set $\{x_j\}_{j=0}^\infty$ of points in $X$ such that
\begin{enumerate}\itemsep=0pt
 \item{} $X = \bigcup_{j=0}^\infty B(x_j,r_0)$.
 \item{} Each ball $B(x_j,r_0)$ is a chart for the exponential normal coordinates based at $x_j$.
 \item{} The covering $\{B(x_j,r_0)\}_{j=0}^\infty$ has finite order, meaning that there exists $N$ such that each ball intersects at most $N$ other balls.
 \item{} $\sup_j \mathrm{Vol}(B(x_j,r_0)) < \infty.$
 \item{} There exists a partition of unity $\{\psi_j\}_{j=0}^\infty$ subordinate to $\{B(x_j,r_0)\}$ such that for every~$\alpha$, we have $\sup_{j,x} |\partial^{\alpha}\psi_j(x)|<\infty,$ where $\partial^{\alpha}$ is taken in the exponential normal coordinates of~$B(x_j,r_0).$
\end{enumerate}
Without loss of generality, $r_0$ can be taken smaller or equal to $1$ (see~\cite[Lemma~2.3]{Kordyukov1991}).

We will briefly refer to the scale of Hilbert Sobolev spaces $\{H^s(X)\}_{s\in \mathbb{R}}$ defined over $X$ in \cite[Section 3]{Kordyukov1991} and which have several equivalent characterisations, see, e.g.,~\cite{GrosseSchneider2013}. The most important characterisation for us is that if $P \in BD^m(X),$ then $P$ defines a bounded linear operator from $H^{s+m}(X)$ to $H^s(X)$ for every $s\in \mathbb{R}$.

\begin{Remark}
 It follows from the bounded geometry assumption that there exist constants $c,C>0$ such that for every $x\in X$ and $R>0$, we have
 \[
 |B(x,R)| \leq C\exp(cR).
 \]
\end{Remark}

Note that for almost every $r>0$, we have
\[
 \frac{{\rm d}}{{\rm d}r}|B(x,r)| = |\partial B(x,r)|.
\]
See~\cite[Propositions III.3.2 and~III.5.1]{Chavel2006}.

A standard example of a manifold of bounded geometry is a covering space of a compact manifold. The following is well known, but lacking a reference we supply a proof for the reader's convenience.
\begin{Proposition}
Let $X$ be a complete $d$-dimensional Riemannian manifold with metric $g$. Let $\Gamma$ be a discrete, finitely generated subgroup of the isometries of $(X,g)$ which acts freely and properly discontinuously on $X$ such that the quotient $X/\Gamma$ is a compact $(d$-dimensional$)$ Riemannian manifold. Then $X$ has bounded geometry.
\begin{proof}
Since $\Gamma$ acts cocompactly on $X$, there exists a compact set $L \!\subseteq\! X$ such that  ${\bigcup_{\gamma \in \Gamma} \gamma L \!=\! X}$. Indeed, since the open balls $B(x, 1)$, $x\in X$ project onto an open cover of $X/\Gamma$, there exists a~finite collection $\{ B(x_i,1)\}_{i=1}^N$ that projects onto $X/\Gamma$. Defining $L := \bigcup_{i=1}^N \overline{B(x_i,1)}$ gives the claimed compact set $L$.

Set \[i_L := \inf_{x \in L} i(x), \]
where $i(x)$ is the injectivity radius at the point $x$. Since the injectivity radius is a continuous function on $X$, $i(x)>0$ for all $x\in X$, and $L$ is compact, it follows that $i_L > 0$. Since $\Gamma$ acts by isometries, for any $\gamma \in \Gamma$, we have $i(\gamma x) = i(x)$, see, for example, the proof of~\cite[Theorem~III.5.4]{Sakai1996}. Therefore, $\inf_{x\in X} i(x) = i_L >0$.

Next, since the curvature tensor $R$ is smooth, clearly $R$ and all its covariant derivatives are bounded on $L$. Let $\Phi\colon X \to X$ be the isometry by which $\gamma \in \Gamma$ acts. Then for all $x\in X$ $\Phi_x^*\colon T_xX \to T_{\Phi(x)}X$ is an isomorphism, and by~\cite[p.~41]{Sakai1996},
\begin{align*}
 &\Phi_x^*(\nabla_{\xi}\eta)= \nabla_{\Phi_x^*\xi} \Phi_x^*\eta,\\
 &\Phi_x^*(R(\eta, \xi)\zeta)= R(\Phi_x^*\eta, \Phi_x^*\xi)\Phi_x^*\zeta,
\end{align*}
where $\xi, \eta, \zeta \in T_xX$. Combine the facts that $R$ and its covariant derivatives are bounded on~$L$, that $\bigcup_{\gamma \in \Gamma} \gamma L = X$, and that isometries preserve $R$ and taking covariant derivatives in the above manner, and we can conclude that $R$ and its covariant derivatives are uniformly bounded on~$X$.
\end{proof}
\end{Proposition}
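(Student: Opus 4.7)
The plan is to verify the two conditions in Definition~\ref{D:bounded_geometry} directly: a positive lower bound on the injectivity radius, and uniform boundedness of the Riemann curvature tensor together with all its covariant derivatives on $X$. Both reductions will rest on the observation that cocompactness of the $\Gamma$-action lets us replace ``uniformly on $X$'' with ``uniformly on a compact set'', provided we can transport the bound back using the $\Gamma$-action.

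First I would exploit cocompactness to produce a compact set $L \subseteq X$ whose translates satisfy $\bigcup_{\gamma \in \Gamma} \gamma L = X$. Since the open balls $B(x,1)$ for $x \in X$ descend to an open cover of the compact quotient $X/\Gamma$, finitely many of them suffice; taking the union of their closures gives the required $L$. From here both properties will be obtained from the same template: establish a bound on $L$ (using compactness together with continuity of the quantity in question), then use $\Gamma$-invariance to spread that bound across all of $X$.

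For the injectivity radius, the cited continuity of $i\colon X \to (0,\infty]$ from~\cite{Sakai1996} together with compactness of $L$ and pointwise positivity of $i$ immediately yields $i_L := \inf_{x \in L} i(x) > 0$. Because every $\gamma \in \Gamma$ acts as an isometry, one has $i(\gamma x) = i(x)$, so $i$ is $\Gamma$-invariant and the covering property forces $i_g = i_L > 0$. For the curvature, smoothness of $R$ and each $\nabla^k R$ makes the pointwise norm a continuous function on $X$, hence bounded on $L$. The tensorial identities I need are that isometries intertwine both the Levi-Civita connection and the curvature tensor, so $\|\nabla^k R\|$ is itself a $\Gamma$-invariant function on $X$; combining this with the bound on $L$ and $\bigcup_\gamma \gamma L = X$ yields the desired uniform bound.

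I do not anticipate a serious obstacle: the whole argument is a compactness-plus-equivariance package. The only place requiring mild care is recording precisely that isometries commute with covariant differentiation and preserve the Riemann tensor pointwise in norm (so that $\|\nabla^k R\|$ really is $\Gamma$-invariant), but this is standard and can be cited from~\cite{Sakai1996}.
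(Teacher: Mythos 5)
Your proposal matches the paper's proof essentially line for line: the same construction of the compact set $L$ from finitely many balls $B(x_i,1)$ covering $X/\Gamma$, the same continuity-plus-compactness argument for the injectivity radius lower bound, the same use of isometry-invariance of $i$, and the same equivariance of $\nabla$ and $R$ to transport curvature bounds from $L$ to all of $X$. It is correct and takes the same route, so there is nothing further to add.
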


\begin{Remark}\label{R: metric balls lower bound}
A non-compact Riemannian manifold of bounded geometry has infinite volume. This can be checked easily via the covering $X = \bigcup_{j=0}^\infty B(x_j,r_0)$ below Definition~\ref{D:bounded_geometry}, and the observation that $\inf_{j} |B(x_j, r_0)| > 0$~\cite{Kodani1988}.
\end{Remark}

\section{Proof of Theorem~\ref{main_theorem}}
Part of the proof in~\cite{AMSZ} used the identity
\begin{equation}\label{eq: Abelian Rd}
 \lim_{s \downarrow 1} (s-1)\int_{\mathbb{R}^d} F(x)\big(1+|x|^2\big)^{-\frac{s}{2}}\,{\rm d}x = \lim_{R\to\infty} R^{-d}\int_{B(0,R)}F(x)\,{\rm d}x
\end{equation}
for any bounded measurable function $F$ on $\mathbb{R}^d$ such that the right-hand side exists.

Equation~\eqref{eq: Abelian Rd} can be generalised in the following manner. The proof is essentially the same as~\cite[Lemma 6.1]{AMSZ}.

\begin{Lemma}\label{L: trace formula}
 Let $A$ and $B$ be bounded linear operators, $B\geq 0$, such that $AB^s \in \mathcal{L}_1$ for every $s>1.$ Then
 \[
 \lim_{s\downarrow 1}(s-1)\mathrm{Tr}(AB^s) = \lim_{\varepsilon\to 0} \varepsilon \mathrm{Tr}(A\chi_{[\varepsilon,\infty)}(B))
 \]
 whenever the limit on the right exists.
\end{Lemma}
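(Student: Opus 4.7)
The plan is to reduce the statement to a standard Abelian theorem by expressing $\mathrm{Tr}(AB^s)$ as a Mellin-type integral of the spectral distribution function
\[
F(t) \;:=\; \mathrm{Tr}\bigl(A\chi_{[t,\infty)}(B)\bigr),\qquad t>0.
\]
First, $F$ is well defined: writing $A\chi_{[t,\infty)}(B) = (AB^s)\,B^{-s}\chi_{[t,\infty)}(B)$ and using $\bigl\|B^{-s}\chi_{[t,\infty)}(B)\bigr\|_\infty\le t^{-s}$, the operator $A\chi_{[t,\infty)}(B)$ lies in $\mathcal{L}_1$ for every $t>0$, and clearly $F(t)=0$ for $t>\|B\|$.

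The first technical step is the identity
\[
\mathrm{Tr}(AB^s)\;=\;s\int_0^{\|B\|} t^{s-1}F(t)\,{\rm d}t, \qquad s>1,
\]
obtained from the scalar formula $\lambda^s=s\int_0^\infty t^{s-1}\chi_{[t,\infty)}(\lambda)\,{\rm d}t$. Expanding $\mathrm{Tr}(AB^s)=\sum_n\langle B^s e_n,A^*e_n\rangle$ against an orthonormal basis $\{e_n\}$ and applying Fubini to the complex spectral measures $\mu_n:=\langle {\rm d}E(\cdot)e_n,A^*e_n\rangle$ produces the identity term-by-term. The subsequent exchange of $\sum_n$ and $\int_0^{\|B\|}$ is justified by Tonelli together with the estimate $\|A\chi_{[t,\infty)}(B)\|_1\le t^{-s'}\|AB^{s'}\|_1$ for any $s'\in(1,s)$, which makes the majorant $t\mapsto st^{s-1}\|A\chi_{[t,\infty)}(B)\|_1$ integrable on $(0,\|B\|]$.

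Setting $G(t):=tF(t)$, the hypothesis that $L:=\lim_{\varepsilon\to 0^+}\varepsilon F(\varepsilon)$ exists, combined with $G\equiv 0$ on $(\|B\|,\infty)$, makes $G$ bounded on $(0,\infty)$, with $G(t)\to L$ as $t\to 0^+$. Write $(s-1)\mathrm{Tr}(AB^s)=s(s-1)\int_0^{\|B\|}t^{s-2}G(t)\,{\rm d}t$; the substitution $t={\rm e}^{-u/(s-1)}$ yields
\[
(s-1)\mathrm{Tr}(AB^s)\;=\;s\int_{-(s-1)\log\|B\|}^{\infty}{\rm e}^{-u}\,G\bigl({\rm e}^{-u/(s-1)}\bigr)\,{\rm d}u.
\]
As $s\downarrow 1$ the lower limit tends to $0$, and for each $u>0$ the argument ${\rm e}^{-u/(s-1)}\to 0^+$, so $G\bigl({\rm e}^{-u/(s-1)}\bigr)\to L$. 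For $s$ in any fixed neighbourhood of $1$ the lower limit is bounded and $|G|\le M$ supplies the integrable dominator $M{\rm e}^{-u}$, so dominated convergence delivers $\lim_{s\downarrow 1}(s-1)\mathrm{Tr}(AB^s)=\int_0^{\infty}L{\rm e}^{-u}\,{\rm d}u=L$.

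The main obstacle is the first step: rigorously justifying the Mellin-type representation of $\mathrm{Tr}(AB^s)$, since $\chi_{[t,\infty)}(B)$ is not uniformly trace class as $t\downarrow 0$ and $A$ need not be self-adjoint or positive. Once that identity is secured, the remainder is the routine Karamata-style Abelian argument above, modelled on~\cite[Lemma~6.1]{AMSZ}.
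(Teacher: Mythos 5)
Your proposal is correct and shares the paper's first (and key) step verbatim: both proofs establish the Mellin-type identity
\[
\mathrm{Tr}(AB^s)=s\int_0^{\|B\|}r^{s-1}F(r)\,{\rm d}r,\qquad F(r)=\mathrm{Tr}\bigl(A\chi_{[r,\infty)}(B)\bigr),
\]
via the spectral theorem and Fubini (you give the Fubini/Tonelli justification in somewhat more detail, which is welcome). Where you diverge is in closing out the Abelian argument. The paper subtracts $c/(s-1)$ directly, splits the resulting integral at a small threshold $R$ where $|rF(r)-c|<\varepsilon$, and shows the remainder is $O(s-1)+s\varepsilon$ as $s\downarrow 1$. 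You instead substitute $t={\rm e}^{-u/(s-1)}$ to write $(s-1)\mathrm{Tr}(AB^s)=s\int_{-(s-1)\log\|B\|}^\infty{\rm e}^{-u}G({\rm e}^{-u/(s-1)})\,{\rm d}u$ with $G(t)=tF(t)$ and invoke dominated convergence. Both routes work; yours trades the hands-on $\varepsilon$-$R$ bookkeeping for a tidy change of variables and a dominated-convergence pass, at the cost of first establishing that $G$ is bounded on $(0,\infty)$. On that point you are slightly too quick: boundedness near $0$ (from the existence of $L$) and vanishing beyond $\|B\|$ do not by themselves give boundedness on the middle region $[\delta,\|B\|]$, but the estimate $|F(t)|\le t^{-s'}\|AB^{s'}\|_1$ you already recorded supplies exactly this, so the gap is only cosmetic. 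Overall the two arguments are of comparable length and rigour; yours has the small stylistic advantage of making the Laplace-transform structure of the Abelian theorem visible.
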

\begin{proof}
 Writing $B^s = \int_0^{\|B\|_{\infty}} \lambda^s\,{\rm d}E_B(\lambda)$, $\lambda^s = s\int_0^\lambda r^{s-1}\,{\rm d}r$
 and applying Fubini's theorem yields
 \[
 \mathrm{Tr}(AB^s) = s\int_0^{\|B\|_{\infty}} r^{s-1} \mathrm{Tr}(A\chi_{[r,\infty)}(B))\,{\rm d}r.
 \]
 Assume without loss of generality that $\|B\|_{\infty} = 1.$
 Writing $F(r) := \mathrm{Tr}(A\chi_{[r,\infty)}(B))$, our assumption is that
 \[
 F(r) \sim \frac{c}{r},\qquad r\to 0.
 \]
 and
 \[
 \mathrm{Tr}(AB^s) = s\int_0^1 r^{s-1}F(r)\,{\rm d}r.
 \]
 Let $\varepsilon>0,$ and choose $R>0$ sufficiently small such that if $0<r<R$, then
 \[
 |rF(r)-c|<\varepsilon.
 \]
 We write $\mathrm{Tr}(AB^s)-\frac{c}{s-1}$ as
 \[
 \mathrm{Tr}(AB^s)-\frac{c}{s-1} = c+s\int_0^1 \bigl(r^{s-1}F(r) - c r^{s-2}\bigr)\,{\rm d}r.
 \]
 Therefore,
 \begin{align*}
 \bigg|\mathrm{Tr}(AB^s)-\frac{c}{s-1}\bigg| & \leq |c|+s\int_0^1 r^{s-2}|rF(r)-c|\,{\rm d}r\\
 & \leq |c|+s\int_R^1 r^{s-2}|rF(r)-c|\,{\rm d}r + \frac{s\varepsilon}{s-1}.
 \end{align*}
 It follows that
 \[
 |(s-1)\mathrm{Tr}(AB^s)-c| = O(s-1)+s\varepsilon,\qquad s\downarrow 1.
 \]
 Since $\varepsilon$ is arbitrary, this completes the proof.
\end{proof}

We will make use of the following theorem, which is~\cite[Theorem~1.3.20]{LMSZVol2}.
\begin{Theorem}\label{LSZ2ethm}
 Let $A$ and $B$ be positive bounded linear operators such that $\big[B,A^{\frac12}\big] \in \mathcal{L}_1$ and $AB\in \mathcal{L}_{1,\infty}$. If
 \[
 \big\|A^{\frac12}B^s\big\|_1 = o\big((s-1)^{-2}\big),\qquad s\downarrow 1,
 \]
 then for every extended limit $\omega$, we have
 \[
 \mathrm{Tr}_{\omega}(AB) = \lim_{s\downarrow 1} (s-1)\mathrm{Tr}(AB^s)
 \]
 if the limit exists.
\end{Theorem}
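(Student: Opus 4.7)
The plan is to reduce the statement to the standard $\zeta$-function residue formula for Dixmier traces of \emph{positive} weak-trace-class operators, with the commutator hypothesis used to absorb the non-commutativity of $A$ and $B$. Because $\mathrm{Tr}_\omega$ is a trace on $\mathcal{L}_{1,\infty}$ and $AB \in \mathcal{L}_{1,\infty}$, I would first rewrite
\[
\mathrm{Tr}_\omega(AB) = \mathrm{Tr}_\omega\bigl(A^{1/2}\cdot A^{1/2}B\bigr) = \mathrm{Tr}_\omega(T), \qquad T:= A^{1/2}BA^{1/2},
\]
where $T\geq 0$ and $T \in \mathcal{L}_{1,\infty}$ (same quasi-norm as $AB$ up to constants). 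For such a positive $T$, the classical Abel-type identity
\[
\mathrm{Tr}(T^s) = s\int_0^{\|T\|}r^{s-1}\mathrm{Tr}\bigl(\chi_{[r,\infty)}(T)\bigr)\,{\rm d}r
\]
together with a Hardy--Littlewood--Karamata Tauberian theorem (see the positive-operator special case of earlier chapters in LSZ Vol.~I) gives $\mathrm{Tr}_\omega(T) = \lim_{s\downarrow 1}(s-1)\mathrm{Tr}(T^s)$ whenever the right-hand side exists. This reduces the theorem to the comparison
\[
(s-1)\bigl[\mathrm{Tr}(AB^s) - \mathrm{Tr}(T^s)\bigr] \longrightarrow 0, \qquad s\downarrow 1.
\]

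By cyclicity, $\mathrm{Tr}(AB^s) = \mathrm{Tr}(A^{1/2}B^sA^{1/2})$, so the task is to control $A^{1/2}B^sA^{1/2} - (A^{1/2}BA^{1/2})^s$ in trace norm at rate $o((s-1)^{-1})$. The natural tool is an integral representation for the fractional power. For $s\in(1,2)$ and a positive bounded operator $X$,
\[
X^s = X^2 \cdot \frac{\sin \pi(s-1)}{\pi}\int_0^{\infty} \lambda^{s-2} (X+\lambda)^{-1}\,{\rm d}\lambda
\]
(Balakrishnan/Stieltjes formula). Applying this with $X=B$ after sandwiching by $A^{1/2}$, and with $X=T$ directly, the difference becomes
\[
A^{1/2}B^sA^{1/2} - T^s = \frac{\sin\pi(s-1)}{\pi}\int_0^\infty \lambda^{s-1}\, E_\lambda \, {\rm d}\lambda,
\]
where the integrand $E_\lambda$ can be rewritten using the resolvent identity
\[
A^{1/2}(B+\lambda)^{-1} - (T+\lambda)^{-1}A^{1/2} = (T+\lambda)^{-1}[T,A^{1/2}](B+\lambda)^{-1} \cdot A^{-?}
\]
(handled carefully so as to avoid inverting $A$, by moving one $A^{1/2}$ through the resolvent at a time) so that $E_\lambda$ is a trace-class operator whose trace norm factors as $\|[B,A^{1/2}]\|_1$ times an operator-norm-controlled resolvent expression. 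Finally, using the hypothesis $\|A^{1/2}B^s\|_1 = o((s-1)^{-2})$ to bound the "$\int \lambda^{s-1}(B+\lambda)^{-1}$" pieces that would otherwise diverge as $s\downarrow 1$, I would obtain $\|A^{1/2}B^sA^{1/2}-T^s\|_1 = o((s-1)^{-1})$, giving the desired conclusion via $|\mathrm{Tr}(\cdot)| \leq \|\cdot\|_1$.

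The main obstacle is the commutator/resolvent bookkeeping in the middle step. The factor $\sin\pi(s-1)\sim \pi(s-1)$ gains one factor of $(s-1)$, so the naive $\int_0^\infty\lambda^{s-1}(\ldots)\,{\rm d}\lambda$ is allowed to blow up like $(s-1)^{-2}$ for the estimate to close after multiplication by $(s-1)$ from the statement --- this is precisely the rate built into the hypothesis $\|A^{1/2}B^s\|_1 = o((s-1)^{-2})$. A second subtlety is that $B$ may have $0$ in its spectrum, so the Balakrishnan integral must be written with care to ensure each intermediate resolvent factor stays bounded, and the commutator expansion must be arranged so that no factor of $B^{-1}$ or $A^{-1}$ appears. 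Once the identity is set up cleanly, the $\mathcal{L}_1$-bound is straightforward from the commutator condition and the growth hypothesis.
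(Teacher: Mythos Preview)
The paper does not prove this theorem: it is quoted verbatim as \cite[Theorem~1.3.20]{LMSZVol2} and used as a black box, so there is no proof in the paper to compare against.

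Regarding your sketch on its own merits: steps 1 and 2 are sound. With $T:=A^{1/2}BA^{1/2}$ one has $AB-T=-A^{1/2}[B,A^{1/2}]\in\mathcal{L}_1$, so $\mathrm{Tr}_\omega(AB)=\mathrm{Tr}_\omega(T)$, and for positive $T\in\mathcal{L}_{1,\infty}$ the Karamata-type Tauberian theorem does yield $\mathrm{Tr}_\omega(T)=\lim_{s\downarrow 1}(s-1)\mathrm{Tr}(T^s)$ whenever the limit exists.

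The gap is entirely in step 3, and you flag it yourself. The displayed resolvent identity is not correct: $(T+\lambda)^{-1}A^{1/2}$ and $A^{1/2}(B+\lambda)^{-1}$ are not related through a single commutator $[T,A^{1/2}]$ or $[B,A^{1/2}]$, because $T+\lambda=A^{1/2}BA^{1/2}+\lambda$ does not factor as $A^{1/2}(\text{something})A^{1/2}$ once $\lambda>0$. If instead you expand each integrand via $X^2(X+\lambda)^{-1}=X-\lambda+\lambda^2(X+\lambda)^{-1}$ and subtract, you pick up a term $\lambda(1-A)$ whose contribution $\int_0^\infty \lambda^{s-1}(1-A)\,{\rm d}\lambda$ is divergent, so the operator-level bound $\|A^{1/2}B^sA^{1/2}-T^s\|_1=o((s-1)^{-1})$ cannot be obtained this way. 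You also have not said concretely how the hypothesis $\|A^{1/2}B^s\|_1=o((s-1)^{-2})$ enters: ``bounds the $\int\lambda^{s-1}(B+\lambda)^{-1}$ pieces'' is not an argument, since that integral is an operator-valued expression whose connection to the trace-norm quantity in the hypothesis is exactly what needs to be established. The outline has the right shape, but the comparison step as written does not close and would need a genuinely different organisation of the commutator expansion.
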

Hence, if $\big[B,A^{\frac12}\big] \in \mathcal{L}_1$ and $\big\|A^{\frac12}B^s\big\|_1 = o\big((s-1)^{-2}\big),$ then by Lemma~\ref{L: trace formula}
\[
 \mathrm{Tr}_{\omega}(AB) = \lim_{\varepsilon\to 0} \varepsilon\mathrm{Tr}(A\chi_{[\varepsilon,\infty)}(B))
\]
whenever the limit on the right exists.

Recall the Araki--Lieb--Thirring inequality~\cite{Kosaki1992}
\begin{equation}\label{alt_inequality}
 \|AB\|_{r,\infty}^r \leq {\rm e}\|A^rB^r\|_{1,\infty},\qquad r > 1
\end{equation}
and the numerical inequality
\begin{equation}\label{zeta_inequality}
 \|X\|_{\infty,1} = \sum_{k=0}^\infty \frac{\mu(k,X)}{k+1} \leq \|X\|_{q,\infty}\zeta\bigg(1+\frac1q\bigg),
\end{equation}
obtained by simply writing out the definitions. Here $\zeta$ is the Riemann zeta function, and it obeys $\zeta\big(1+\frac1q\big) \sim q$ as $q\to\infty.$

\begin{Corollary}
 Let $A$ and $B$ be positive bounded linear operators such that
 \begin{enumerate}\itemsep=0pt
 \item[$(i)$] $\big[A^{\frac12},B\big] \in \mathcal{L}_1,$ 
 \item[$(ii)$] $A^{\frac14}B \in \mathcal{L}_{1,\infty}.$ 
 \end{enumerate}
 Then for every extended limit $\omega$, we have
 \[
 \mathrm{Tr}_{\omega}(AB) = \lim_{s\downarrow 1} (s-1)\mathrm{Tr}(AB^s)
 \]
 if the limit exists.
\end{Corollary}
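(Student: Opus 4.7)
The plan is to verify the three hypotheses of Theorem~\ref{LSZ2ethm} applied to the operators $A$ and $B$ of the Corollary. Two of these are immediate: the commutator condition $[B,A^{1/2}] \in \mathcal{L}_1$ is hypothesis (i) up to sign, and writing $AB = A^{3/4}(A^{1/4}B)$ exhibits $AB$ as the product of a bounded operator with an element of $\mathcal{L}_{1,\infty}$, so $AB \in \mathcal{L}_{1,\infty}$ by the ideal property. The remaining hypothesis, $\|A^{1/2}B^s\|_1 = o((s-1)^{-2})$ as $s \downarrow 1$, is the technical core.

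For this I would use the identity
\[
A^{1/2} B^s = B^{s/2} A^{1/2} B^{s/2} + [A^{1/2}, B^{s/2}]\,B^{s/2}.
\]
The first summand is positive, so its trace norm coincides with its trace, and by cyclicity $\|B^{s/2}A^{1/2}B^{s/2}\|_1 = \mathrm{Tr}(A^{1/2}B^s) = \|A^{1/4}B^{s/2}\|_2^2$. To bound the latter, I would apply the Araki--Lieb--Thirring inequality~\eqref{alt_inequality} with $r = 2/s$ (which exceeds $1$ for $1 < s < 2$) to the pair $A^{1/4}, B^{s/2}$, obtaining
\[
\|A^{1/4}B^{s/2}\|_{2/s,\infty}^{2/s} \leq \mathrm{e}\,\|A^{1/(2s)}B\|_{1,\infty} \leq \mathrm{e}\,\|A\|_\infty^{1/(2s)-1/4}\,\|A^{1/4}B\|_{1,\infty},
\]
where the final step uses $1/(2s) \geq 1/4$ and the ideal property. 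This bound is uniform in $s$ on a right-neighbourhood of $1$, so $\mu_n(A^{1/4}B^{s/2}) \leq C(n+1)^{-s/2}$ with $C$ uniform in $s$, giving $\|A^{1/4}B^{s/2}\|_2^2 \leq C^2\zeta(s) = O((s-1)^{-1})$.

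For the second summand, $\|[A^{1/2},B^{s/2}]B^{s/2}\|_1 \leq \|B\|_\infty^{s/2}\,\|[A^{1/2},B^{s/2}]\|_1$, so it suffices to show that $\|[A^{1/2},B^{s/2}]\|_1$ stays bounded as $s \downarrow 1$. I would deduce this from $[A^{1/2},B] \in \mathcal{L}_1$ via a double operator integral representation of $[A^{1/2}, g(B)]$ for the function $g(x) = x^{s/2}$, exploiting that $g$ is H\"older of order $s/2 \in [1/2,1]$ bounded away from $0$ near $s=1$. Combining the two bounds yields $\|A^{1/2}B^s\|_1 = O((s-1)^{-1}) = o((s-1)^{-2})$, and Theorem~\ref{LSZ2ethm} then gives the corollary.

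The main obstacle is justifying the DOI estimate on $\|[A^{1/2},B^{s/2}]\|_1$: the divided difference $(\lambda^{s/2}-\mu^{s/2})/(\lambda-\mu)$ is unbounded on $[0,\|B\|_\infty]^2$ for $s < 2$, so the naive $L^\infty$-type DOI bound $\|[A^{1/2},g(B)]\|_1 \leq \|\delta g\|_\infty\|[A^{1/2},B]\|_1$ is insufficient. A refined argument must either exploit the spectral structure of $B$ near $0$ together with the compactness of $A^{1/4}B$, or invoke a Peller-type H\"older DOI estimate yielding the required uniform bound.
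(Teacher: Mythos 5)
Your reduction to Theorem~\ref{LSZ2ethm} is right, your preliminary verifications of $[A^{1/2},B]\in\mathcal{L}_1$ and $AB\in\mathcal{L}_{1,\infty}$ match the paper's (implicit) reasoning, and your estimate of the first summand $B^{s/2}A^{1/2}B^{s/2}$ via the Araki--Lieb--Thirring inequality is valid. The gap lies exactly where you suspect: you need $\big\|\big[A^{1/2},B^{s/2}\big]\big\|_1$ bounded uniformly for $s$ near $1$, and, as you observe, there is no route from $\big[A^{1/2},B\big]\in\mathcal{L}_1$ to such a bound. The function $x\mapsto x^{s/2}$ for $s<2$ is not operator Lipschitz at $0$, and H\"older-type double operator integral estimates degrade the Schatten index, so one cannot expect an $\mathcal{L}_1$ conclusion from an $\mathcal{L}_1$ hypothesis on the un-powered commutator. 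This is a genuine obstruction, not a technical detail one can paper over.

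The paper avoids this issue entirely by choosing the \emph{asymmetric} decomposition
\[
A^{1/2}B^s = \big[A^{1/2},B\big]B^{s-1} + BA^{1/2}B^{s-1},
\]
moving only a single whole power of $B$ past $A^{1/2}$. The commutator that appears is exactly the one given as trace class, and $B^{s-1}$ is a bounded multiplier, so $\big\|\big[A^{1/2},B\big]B^{s-1}\big\|_1\le \big\|\big[A^{1/2},B\big]\big\|_1\|B\|_\infty^{s-1}$ is bounded. The whole singularity in $s$ then sits in $BA^{1/2}B^{s-1}=\big(BA^{1/4}\big)\big(A^{1/4}B^{s-1}\big)$, which is controlled using the H\"older inequality $\|XY\|_1\lesssim\|X\|_{1,\infty}\|Y\|_{\infty,1}$ together with the zeta-function inequality~\eqref{zeta_inequality} and Araki--Lieb--Thirring~\eqref{alt_inequality} with parameter $r=\tfrac{1}{s-1}$, giving $\big\|A^{1/4}B^{s-1}\big\|_{\infty,1}=O\big((s-1)^{-1}\big)$. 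Note this is precisely the combination of tools (zeta bound plus ALT yielding a uniform $\mathcal{L}_{q,\infty}$ bound) that you already applied correctly to your first summand; deploying it after the paper's decomposition closes the argument, whereas your symmetric splitting into $B^{s/2}\cdot B^{s/2}$ introduces a fractional-power commutator that the hypotheses do not control.
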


\begin{proof}
 Let $1<s<2.$ We have
 \begin{align*}
 \big\|A^{\frac12}B^s\big\|_1 &{}\leq \big\|\big[A^{\frac12},B\big]B^{s-1}\big\|_1+\big\|BA^{\frac12}B^{s-1}\big\|_1 \\
 &{}\leq \big\|\big[A^{\frac12},B\big]\big\|_1\|B\|_{\infty}^{s-1}+\big\|BA^{\frac14}\big\|_{1,\infty}\big\|A^{\frac14}B^{s-1}\big\|_{\infty,1}.
 \end{align*}
 By the numerical inequality~\eqref{zeta_inequality} above,
 \[
 \big\|A^{\frac14}B^{s-1}\big\|_{\infty,1} \lesssim (s-1)^{-1}\big\|A^{\frac14}B^{s-1}\big\|_{\frac{1}{s-1},\infty}.
 \]
 Since $s<2$, we have $\frac{1}{s-1}>1$, and hence the Araki--Lieb--Thirring inequality~\eqref{alt_inequality} delivers
 \[
 \big\|A^{\frac14}B^{s-1}\big\|_{\frac{1}{s-1},\infty}^{\frac{1}{s-1}} \leq {\rm e}\big\|A^{\frac{1}{4(s-1)}}B\big\|_{1,\infty}\leq {\rm e}\big\|A^{\frac14}\big\|_{\infty}^{\frac{2-s}{s-1}}\big\|A^{\frac14}B\big\|_{1,\infty}.
 \]
 This verifies the assumptions of Theorem~\ref{LSZ2ethm}.
\end{proof}

In our case, we have an operator $P$ which is self-adjoint and lower-bounded, which means $\exp(-tP)$ is positive and bounded for all $t>0$, and we take $B=W$. Then the assumptions become
\begin{equation*}
 [\exp(-tP),W] \in \mathcal{L}_1,\qquad \exp(-tP)W \in \mathcal{L}_{1,\infty}
\end{equation*}
for every $t>0.$

The former condition can be modified to one which is easier to verify in geometric examples.
\begin{Lemma}[Duhamel's formula]\label{Duhamel formula}
Let $P$ be a lower-bounded self-adjoint operator on a Hilbert space $H$, and let $W$ be a bounded operator. Then
\[
[\exp(-tP),W] = -\int_0^t \exp(-sP)[P,W]\exp(-(t-s)P)\,{\rm d}s.
\]
\begin{proof}
The method of proof is standard, see for example~\cite[Lemma~5.2]{AzamovCarey2009}. Define $F\colon [0,t] \to \mathcal{L}(H)$ by $F(s) = \exp(-sP)W\exp(-(t-s)P)$. Since $P$ is lower-bounded, $\exp(-sP)$ is bounded for all $s \in [0,t]$. Hence the derivative of $F(s)$ in the strong operator topology is
\begin{align*}
F'(s) &= -P\exp(-sP)W\exp(-(t-s)P) + \exp(-sP)W P \exp(-(t-s)P) \\
&= -\exp(-sP)[P,W]\exp(-(t-s)P),
\end{align*} in the sense that \[\lim_{h\to 0} \frac{1}{h}(F(s+h)-F(s))\xi = F'(s)\xi, \qquad \xi \in H.\]
Therefore, by the fundamental theorem of calculus for Banach space-valued functions, we can conclude that for $\xi \in H$, \begin{align*}
[\exp(-tP),W]\xi &= (F(t)-F(0))\xi= \int_{0}^t F'(s) \xi \,{\rm d}s\\
&= -\int_0^t \exp(-sP)[P,W]\exp(-(t-s)P) \xi \,{\rm d}s.
\tag*{\qed}
\end{align*}
\renewcommand{\qed}{}
\end{proof}
\end{Lemma}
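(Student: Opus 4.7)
The plan is to use the classical semigroup interpolation trick. Introduce the operator-valued path
\[ F(s) = \exp(-sP)\,W\,\exp(-(t-s)P), \qquad s \in [0,t], \]
whose endpoints satisfy $F(t) - F(0) = \exp(-tP)W - W\exp(-tP) = [\exp(-tP), W]$. The identity to be proved is then equivalent to $F(t) - F(0) = \int_0^t F'(s)\,{\rm d}s$ with the stated expression for $F'(s)$, so the task reduces to differentiating $F$ by the product rule and justifying the fundamental theorem of calculus in the strong operator topology.

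The key enabling observation is that lower-boundedness of $P$ (say $P \geq -c$) makes $\lambda \mapsto \lambda\, {\rm e}^{-s\lambda}$ bounded on $\sigma(P) \subseteq [-c,\infty)$ for every $s>0$, so by spectral calculus $P\exp(-sP)$ extends to a bounded operator and $\{\exp(-sP)\}_{s\geq 0}$ is a strongly continuous semigroup. Consequently, for any $\xi \in H$ and any $s \in [0,t)$, the vector $\exp(-(t-s)P)\xi$ lies in $\mathrm{Dom}(P^\infty)$, and each of the expressions $WP\exp(-(t-s)P)\xi$, $P\exp(-sP)\,W\exp(-(t-s)P)\xi$, and therefore $[P,W]\exp(-(t-s)P)\xi$ is well defined. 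Differentiating $s\mapsto F(s)\xi$ strongly via the product rule for each semigroup factor (with $\tfrac{d}{ds}\exp(-sP) = -P\exp(-sP)$ and $\tfrac{d}{ds}\exp(-(t-s)P) = P\exp(-(t-s)P)$) yields, for $s \in (0,t)$,
\[ \tfrac{d}{ds}F(s)\xi = -P\exp(-sP)W\exp(-(t-s)P)\xi + \exp(-sP)WP\exp(-(t-s)P)\xi = -\exp(-sP)[P,W]\exp(-(t-s)P)\xi. \]

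The final step is the Banach-valued fundamental theorem of calculus applied to the $H$-valued curve $s\mapsto F(s)\xi$. Continuity of $s\mapsto F'(s)\xi$ on $(0,t)$ follows from joint strong continuity of the semigroup together with local boundedness of $\|P\exp(-sP)\|_\infty$ away from $s=0$; strong continuity of $s\mapsto F(s)\xi$ extends all the way to the closed interval $[0,t]$, so Bochner integrability of $F'(s)\xi$ on every $[\delta, t-\delta]$ combined with a limiting argument as $\delta\downarrow 0$ identifies the integral with $F(t)\xi - F(0)\xi$. Since $\xi \in H$ is arbitrary, this gives the claimed strong-operator identity.

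The main technical obstacle is purely domain-theoretic: $W$ is only assumed bounded, so $[P,W] = PW - WP$ is not globally defined, and one cannot manipulate it as a single bounded operator. I would address this by staying in the strong-operator sense throughout, exploiting the smoothing property of $\exp(-(t-s)P)$ to guarantee that every vector on which $P$ or $W$ is made to act lies in the appropriate domain, and by exercising care when $s$ approaches the endpoints $0$ and $t$, where one of the semigroup factors loses its smoothing effect.
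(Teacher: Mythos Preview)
Your proposal is correct and follows essentially the same approach as the paper: define $F(s)=\exp(-sP)W\exp(-(t-s)P)$, differentiate it strongly using the product rule to obtain $F'(s)=-\exp(-sP)[P,W]\exp(-(t-s)P)$, and then apply the fundamental theorem of calculus for Banach-valued functions. Your treatment is in fact somewhat more careful than the paper's, which glosses over the domain issues and the endpoint behaviour that you explicitly flag.
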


\begin{Lemma}
 Let $P$ be a self-adjoint lower bounded linear operator
 and let $W$ be a bounded self-adjoint operator such that $[P,W]$ makes sense, and
 \[
 \exp(-tP)[P,W] \in \mathcal{L}_1,\qquad t > 0.
 \]
 Then
 \[
 [\exp(-tP),W]\in \mathcal{L}_1,\qquad t > 0.
 \]
\end{Lemma}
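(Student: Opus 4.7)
The plan is to invoke Duhamel's formula from the previous lemma, which identifies $[\exp(-tP),W]$ with the strong-operator integral
\[
-\int_0^t \exp(-sP)[P,W]\exp(-(t-s)P)\,{\rm d}s,
\]
and then upgrade this to a Bochner integral converging in $\mathcal{L}_1$ by bounding the trace norm of the integrand uniformly for $s \in (0,t)$. Absolute convergence of the Bochner integral in the Banach space $\mathcal{L}_1$ will force the limit, which must coincide with the strong-operator integral, to lie in $\mathcal{L}_1$.

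To obtain the uniform trace-norm estimate, I would split the interval $[0,t]$ into $[t/2,t]$ and $[0,t/2]$. On $[t/2,t]$, the semigroup factorisation $\exp(-sP) = \exp(-(s-t/2)P)\exp(-(t/2)P)$ together with H\"older's inequality for the ideal $\mathcal{L}_1$ gives
\[
\|\exp(-sP)[P,W]\exp(-(t-s)P)\|_1 \leq \|\exp(-(s-t/2)P)\|_\infty \|\exp(-(t/2)P)[P,W]\|_1 \|\exp(-(t-s)P)\|_\infty.
\]
The middle factor is finite by the hypothesis applied at $t/2$, and the two operator-norm factors are uniformly bounded in $s$ because $P$ is lower-bounded, so $\|\exp(-rP)\|_\infty \leq \exp(rC)$ for some $C \geq 0$ and all $r \in [0,t]$.

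For $s \in [0,t/2]$, I would exploit the self-adjointness of $P$ and $W$, which gives $[P,W]^* = -[P,W]$ and hence
\[
\|\exp(-sP)[P,W]\exp(-(t-s)P)\|_1 = \|\exp(-(t-s)P)[P,W]\exp(-sP)\|_1,
\]
so the previous estimate applies with the roles of $s$ and $t-s$ swapped. Altogether the integrand has uniformly bounded trace norm on $(0,t)$, the Bochner integral converges in $\mathcal{L}_1$, and it agrees (as an operator) with the strong-operator integral supplied by Duhamel's formula. The only subtle point, more a nuisance than a real obstacle, is that the hypothesis is given pointwise in $t$ with no a priori $s$-uniformity near $s=0$ or $s=t$; the semigroup factorisation together with the adjoint trick resolves this cleanly.
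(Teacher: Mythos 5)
Your proof is correct and follows essentially the same route as the paper: apply Duhamel's formula, split the integral at $s=t/2$, use the semigroup factorisation $\exp(-sP)=\exp(-(s-t/2)P)\exp(-(t/2)P)$ to peel off the trace-class factor $\exp(-(t/2)P)[P,W]$ on one half, and exploit the anti-self-adjointness of $[P,W]$ (equivalently, passing to adjoints in $\mathcal{L}_1$) to handle the other half by symmetry. You are somewhat more explicit than the paper both about the role of $W$'s self-adjointness and about upgrading the strong-operator Duhamel integral to an $\mathcal{L}_1$-valued Bochner integral, but the decomposition, the key estimate, and the reliance on the hypothesis at time $t/2$ are the same.
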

\begin{proof}
 By the Duhamel formula (Lemma~\ref{Duhamel formula}),
 \begin{align*}
 [\exp(-tP),W]={}& -\int_0^{\frac{t}{2}}\exp(-sP)[P,W]\exp(-(t-s)P)\,{\rm d}s\\
 & -\int_{\frac{t}{2}}^t \exp(-sP)[P,W]\exp(-(t-s)P)\,{\rm d}s.
 \end{align*}
 For $0<s<\frac{t}{2}$, we have
 \[
 \|\exp(-sP)[P,W]\exp(-(t-s)P)\|_1 \leq \bigg\|[P,W]\exp\biggl(-\frac{t}{2}P\biggr)\bigg\|_1
 \]
 while for $\frac{t}{2}<s<t,$
 \[
 \|\exp(-sP)[P,W]\exp(-(t-s)P)\|_1 \leq \bigg\|\exp\biggl(-\frac{t}{2}\biggr)[P,W]\bigg\|_1.
 \]
 Hence, the triangle inequality for weak integrals, we have
 \[
 \|[\exp(-tP),W]\|_1 \leq t\|[P,W]\|_1.
 \tag*{\qed}
 \]
 \renewcommand{\qed}{}
\end{proof}

Combining the results of this section yields the following.
\begin{Theorem}
 Let $P$ be a self-adjoint lower-bounded linear operator, and $W$ a positive bounded linear operator such that
 \[
 \exp(-tP)W\in \mathcal{L}_{1,\infty},\qquad \exp(-tP)[P,W] \in \mathcal{L}_{1}
 \]
 for every $t>0.$
 Then for every extended limit $\omega$, we have
 \[
 \mathrm{Tr}_{\omega}(\exp(-tP)W) = \lim_{s\downarrow1} (s-1)\mathrm{Tr}(\exp(-tP)W^s)
 \]
 conditional on the existence of the right-hand side.
\end{Theorem}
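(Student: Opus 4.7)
The plan is to apply the Corollary stated just above in the excerpt (with hypotheses $[A^{1/2},B]\in\mathcal{L}_1$ and $A^{1/4}B\in\mathcal{L}_{1,\infty}$) to the specific choice $A = \exp(-tP)$ and $B = W$. Since $P$ is self-adjoint and lower-bounded, functional calculus yields that $\exp(-tP)$ is a positive bounded operator for every $t>0$, so both $A$ and $B$ are positive and bounded as the Corollary requires. The entire theorem then reduces to verifying the two operator-ideal hypotheses (i) and (ii) of that Corollary for this choice, using the two standing hypotheses of the present theorem together with the Duhamel-type Lemma proved immediately above.

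For condition (ii), $A^{1/4}B = \exp(-\tfrac{t}{4}P)W$. This lies in $\mathcal{L}_{1,\infty}$ directly by the first standing hypothesis of the theorem, applied at the positive parameter $t/4$ in place of $t$ (the hypothesis is assumed to hold for \emph{every} positive exponent). For condition (i), $[A^{1/2},B] = [\exp(-\tfrac{t}{2}P),W]$. This is precisely the conclusion of the preceding Lemma, whose input assumption $\exp(-sP)[P,W]\in\mathcal{L}_1$ for every $s>0$ coincides verbatim with the second standing hypothesis of the present theorem (take $s=t/2$). With both hypotheses of the Corollary verified, its conclusion $\mathrm{Tr}_\omega(AB) = \lim_{s\downarrow 1}(s-1)\mathrm{Tr}(AB^s)$ gives exactly the claimed identity, on the same conditional assumption that the right-hand side exists.

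There is really no serious obstacle at this stage: the analytical content has already been absorbed into the earlier Corollary (which used the Araki--Lieb--Thirring inequality~\eqref{alt_inequality} and the Lorentz-norm bound~\eqref{zeta_inequality} to bound $\|A^{1/2}B^s\|_1$ as $s\downarrow 1$) and into the Duhamel Lemma (which converts the commutator hypothesis on $[P,W]$ into a commutator hypothesis on $[\exp(-sP),W]$). The only point to keep an eye on is that both standing hypotheses are required for \emph{all} $t>0$ simultaneously, so that rescaling $t\mapsto t/2$ and $t\mapsto t/4$ when invoking the Corollary and the Duhamel Lemma does not cost anything; this is automatic from the quantifier structure in the hypothesis.
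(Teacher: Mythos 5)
Your proof is correct and matches the paper's own argument exactly: the paper also combines the Corollary (with $A=\exp(-tP)$, $B=W$) with the Duhamel-based Lemma reducing $[\exp(-tP),W]\in\mathcal{L}_1$ to $\exp(-tP)[P,W]\in\mathcal{L}_1$, and likewise relies on the hypotheses holding for all $t>0$ to absorb the rescalings $t\mapsto t/2$, $t\mapsto t/4$. The only implicit ingredient you did not flag (but which is automatic) is that $W$ being positive makes it self-adjoint, as required by the Lemma.
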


More to the point, we have
\[
 \mathrm{Tr}_{\omega}(\exp(-tP)W) = \lim_{\varepsilon\to 0} \varepsilon\mathrm{Tr}(\exp(-tP)\chi_{[\varepsilon,\infty)}(W))
\]
whenever the right-hand side limit exists. This completes the proof of Theorem~\ref{main_theorem}.

\section{Manifolds of bounded geometry}
Let us now shift our attention to the case where we have a Riemannian manifold of bounded geometry $X$ and a self-adjoint, lower-bounded operator $P \in EBD^2(X)$.

Estimates of the form
\[
 \|M_fg(-{\rm i}\nabla)\|_{p,\infty} \leq c_p\|f\|_p\|g\|_{p,\infty}
\]
for $p>2$ are sometimes called Cwikel-type estimates. Here, $f$ and $g$ are function on $\mathbb{R}^d,$
see \cite[Chapter 4]{Simon2005}. Similar estimates with $p<2$ were obtained earlier by Birman and Solomyak \cite{BirmanSolomyak1969}.
In particular, it follows from~\cite[Theorem~4.5]{Simon2005} that if $f$ is a measurable function on $\mathbb{R}^d$ then for every $t>0$ and $0<p<2$ that we have
\[
 \big\|M_f{\rm e}^{t\Delta}\big\|_{p} \lesssim_{t,p} \bigg(\sum_{k\in \mathbb{Z}^d} \|f\|_{L_\infty(k+[0,1)^d)}^p\bigg)^{\frac1p}
\]
and similarly
\[
 \big\|M_f{\rm e}^{t\Delta}\big\|_{p,\infty} \lesssim_{t,p} \big\|\{\|f\|_{L_{\infty}(k+[0,1)^d)}\big\|_{p,\infty}.
\]
We seek similar estimates for functions $f$ on manifolds of bounded geometry. In place of the decomposition of $\mathbb{R}^d$ into cubes, $\mathbb{R}^d = \bigcup_{k\in \mathbb{Z}}^d [0,1)^d+k$, we will use the partition of unity constructed according to Section~\ref{S: Manifolds}.

Namely, we will show that for $0<p<2$, we have $\exp(-tP)M_f \in \mathcal{L}_{p,\infty}$ whenever $f\in \ell_{p,\infty}(L_\infty)$, and $\exp(-tP)M_f \in \mathcal{L}_{p}$ whenever $f\in \ell_{p}(L_\infty),$
where $\ell_{p,\infty}(L_\infty)$ and $\ell_p(L_{\infty})$ are certain function spaces on $X.$

\begin{Definition}
 Let $\{x_j\}_{j=0}^\infty$ and $r_0$ be as in Section~\ref{S: Manifolds}. Given a bounded measurable function~$f$ on~$X$, define
 \[
 \|f\|_{\ell_{p,\infty}(L_\infty)} := \big\| \{\|f\|_{L_{\infty}(B(x_j,r_0))}\}_{j=0}^\infty\big\|_{\ell_{p,\infty}}
 \]
 and
 \[
 \|f\|_{\ell_{p}(L_\infty)} := \big\| \{\|f\|_{L_{\infty}(B(x_j,r_0))}\}_{j=0}^\infty\big\|_{\ell_{p}}.
 \]
\end{Definition}

Let $B$ be an open ball in $\mathbb{R}^d$ such that at every point $x\in X$, we have a normal coordinate system $\exp_x \circ e\colon B \to B(x, r_0)$, where $e$ is the identification of $\mathbb{R}^d$ with the tangent space $T_xX$ and $\exp_x$ is the Riemannian exponential map~\cite[Proposition~1.2]{Kordyukov1991}. Via these maps, for each $x\in X$, we can pullback the metric $g$ restricted to $B(x, r_0)$ to a metric $g^x$ on $B$.

\begin{Definition}
A Riemannian metric $g$ on $B$ can be represented uniquely by the $d^2$ smooth functions
\[
g_{ij}:= g(\partial_i, \partial_j) \in C^\infty(B).
\]
Define $\mathrm{Riem}_b(B)$ as the set of Riemannian metrics for which the functions $g_{ij}$ extend to smooth functions in $C^\infty\bigl(\overline{B}\bigr)$. We equip $\mathrm{Riem}_b(B)$ with the topology induced by the embedding $\mathrm{Riem}_b(B)\allowbreak \subseteq (C^\infty\bigl(\overline{B}\bigr))^{d^2}$, where we take the usual topology on $C^\infty\bigl(\overline{B}\bigr)$ defined by the seminorms
\[
p_N(f) := \max_{x\in \overline{B}} \{|D^\alpha f(x)|\colon |\alpha|\leq N \}.
\]
\end{Definition}

The following is essentially~\cite{Eichhorn1991}, see in particular the discussion below Theorem A. See also the related statement~\cite[Proposition~2.4]{Roe1988a}.
\begin{Proposition}\label{p:bounded_metric}
Let $X$ be a manifold of bounded geometry. The functions $g^x_{ij}$ considered as a~family of smooth functions parametrized by $i$, $j$ and by a point $x\in X$, can be extended to $\overline{B}$ and then lie in a bounded subset of $C^\infty\bigl(\overline{B}\bigr).$
\end{Proposition}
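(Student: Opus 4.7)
The proposition is essentially a classical theorem of Eichhorn: on a manifold of bounded geometry, the metric tensor expressed in normal coordinates is uniformly controlled in $C^\infty$ independently of the base point. My approach is to derive this from bounded-geometry estimates on the Riemann tensor by realising $g^x_{ij}$ through Jacobi fields along radial geodesics, and then translating smooth ODE-dependence into uniform $C^\infty$ bounds in the Euclidean parameter. First I would fix $B$ to be an open Euclidean ball of radius strictly less than $i_g$, so that for every $x \in X$ the map $\exp_x \circ e \colon B \to B(x,r_0)$ is a diffeomorphism and moreover $\overline{B}$ is still contained in the injectivity domain; this uses the hypothesis $i_g > 0$ and the freedom (coming from Section \ref{S: Manifolds}) to shrink $r_0$.

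The core identity I would exploit is that in normal coordinates at $x$, the coordinate vector field $\partial_{y^i}\bigm|_{y}$ is the value at $t = |e(y)|$ of a Jacobi field $J^{(x)}_i$ along the radial geodesic $\gamma^{(x)}_v(t) = \exp_x(tv)$ (with $v = e(y)/|e(y)|$) satisfying $J^{(x)}_i(0) = 0$ and $(\nabla_t J^{(x)}_i)(0) = e(e_i)$. Consequently $g^x_{ij}(y) = g(J^{(x)}_i(|y|), J^{(x)}_j(|y|))$. The Jacobi fields satisfy the linear second order ODE $\nabla_t^2 J + R(J,\dot\gamma)\dot\gamma = 0$ whose coefficients are the components of the Riemann tensor along $\gamma^{(x)}_v$. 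The bounded-geometry hypothesis supplies uniform sup-norm bounds on $\nabla^k R$ for every $k$, so standard smooth dependence on coefficients, initial data, and the parameters $(x,v)$ delivers uniform bounds on $J^{(x)}_i$ and all its $t$-derivatives over $[0, r_0]$, uniformly in $x \in X$ and unit vectors $v$.

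To upgrade pointwise Jacobi-field bounds to uniform $C^\infty$-control of $g^x_{ij}$ on $\overline{B}$, I would argue inductively on the order of Euclidean partial derivative $\partial^\alpha$. At each order, $\partial^\alpha g^x_{ij}$ is expressed as a universal polynomial in lower-order Euclidean derivatives of $g^x_{kl}$ and in covariant derivatives $(\nabla^m R)^x$ evaluated along the radial geodesic — the Christoffel symbols giving the conversion between Euclidean and covariant differentiation. Closing the induction shows $\|g^x_{ij}\|_{C^N(\overline B)}$ is bounded by a universal function of the constants $\sup_X|\nabla^m R|$ for $m \leq N{-}2$, hence bounded uniformly in $x$; this is exactly the statement that the $g^x_{ij}$ form a bounded subset of $C^\infty(\overline B)$.

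\textbf{Main obstacle.} The conceptual content (ODE smoothness with bounded coefficients) is clean, but the bookkeeping in the inductive step — controlling how many factors of the inverse metric and how many $\partial$-derivatives of Christoffel symbols enter $\partial^\alpha g^x_{ij}$ — is the genuinely delicate part. In practice I would cite Eichhorn's Theorem A (referenced just before the proposition) rather than reproduce this combinatorial argument, and only sketch the Jacobi-field identification that makes the uniformity transparent. A secondary, more mundane obstacle is ensuring $\overline B$ (not just $B$) lies in the good domain; this is handled by choosing the radius of $B$ to be any constant strictly less than $i_g$.
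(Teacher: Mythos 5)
Your proposal ultimately lands where the paper does: the paper supplies no proof of this proposition, only the citations to Eichhorn (the discussion below Theorem~A in \cite{Eichhorn1991}) and to \cite[Proposition~2.4]{Roe1988a}, and you likewise conclude that in practice you would cite Eichhorn's Theorem~A. So at the level of what actually constitutes the ``proof'' in the paper, you are aligned. The Jacobi-field sketch you add is the standard conceptual route to this result and is correctly organised: restrict to $\overline{B}$ of radius strictly less than $i_g$, identify coordinate vector fields with Jacobi fields along radial geodesics, use bounded geometry to bound the ODE coefficients $\nabla^k R$ uniformly, and then convert $t$- and $v$-derivatives of Jacobi fields into Euclidean $\partial^\alpha$-derivatives of $g^x_{ij}$ by induction. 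One small slip worth correcting: the Jacobi field $J^{(x)}_i$ with $J^{(x)}_i(0)=0$, $(\nabla_t J^{(x)}_i)(0)=e(e_i)$ satisfies $\partial_{y^i}\big|_y = \tfrac{1}{t}J^{(x)}_i(t)$ at $t=|e(y)|$, so $g^x_{ij}(y) = t^{-2}\,g\bigl(J^{(x)}_i(t),J^{(x)}_j(t)\bigr)$ rather than $g(J^{(x)}_i,J^{(x)}_j)$ as written; without the $t^{-2}$ your formula gives $0$ instead of $\delta_{ij}$ at the centre. This does not affect the soundness of the plan (the normalised field $t^{-1}J^{(x)}_i(t)$ extends smoothly to $t=0$), but it is the kind of detail that the inductive bookkeeping you flag as delicate would have to track.
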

In principle, the identification $e$ of $\mathbb{R}^d$ with the tangent space $T_xX$ can vary from point to point, and therefore
the matrix elements $g^x_{ij}$ are not uniquely defined. However this does not change the fact that for any given identification, the result of Proposition~\ref{p:bounded_metric} holds.

\begin{Corollary}\label{c:compactmetrics}
Let $(X, g)$ be a manifold with bounded geometry, and let $\{B(x_j, r_0)\}_{j \in \mathbb{N}}$ be an open cover of $X$ as in Section~$\ref{S: Manifolds}$. The set
\[
\{ g^{x_j} \colon j \in \mathbb{N}\} \subset \mathrm{Riem}_b(B)
\]
is a pre-compact subset of $C^\infty\bigl(\overline{B}\bigr).$
\end{Corollary}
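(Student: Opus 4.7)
The plan is to derive the corollary as a direct consequence of Proposition~\ref{p:bounded_metric} together with the Montel property of the Fréchet space $C^\infty(\overline{B})$. Since $\mathrm{Riem}_b(B)$ carries the topology induced from the embedding into $\bigl(C^\infty(\overline{B})\bigr)^{d^2}$, precompactness in $\mathrm{Riem}_b(B)$ reduces to precompactness of each family of matrix entries $\{g^{x_j}_{ij}\}_{j\in\mathbb{N}}$ in $C^\infty(\overline{B})$.

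First I would recall from Proposition~\ref{p:bounded_metric} that the full family $\{g^x_{ij}\}_{x\in X}$, and in particular its countable subfamily $\{g^{x_j}_{ij}\}_{j\in\mathbb{N}}$ for fixed $i,j$, is bounded in $C^\infty(\overline{B})$ — i.e., for every seminorm $p_N$ we have $\sup_j p_N(g^{x_j}_{ij}) < \infty$. The next step is to invoke the fact that $\overline{B}$ is compact, so by the classical Arzelà--Ascoli theorem a bounded subset of $C^{N+1}(\overline{B})$ is equicontinuous, hence precompact, when viewed inside $C^N(\overline{B})$. Applying this at each level $N$ and extracting a diagonal subsequence, any sequence in a $C^\infty$-bounded family admits a subsequence that converges in every $C^N(\overline{B})$, that is, converges in the Fréchet topology of $C^\infty(\overline{B})$. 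This is precisely the statement that $C^\infty(\overline{B})$ is a Montel space.

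Finally, I would assemble the pieces: apply the Montel/Arzelà--Ascoli argument to each of the finitely many matrix-entry families $\{g^{x_j}_{ij}\}_{j\in\mathbb{N}}$ (for $1 \le i,j \le d$) and combine via a finite diagonal extraction to obtain precompactness of $\{g^{x_j}\}_{j\in\mathbb{N}}$ in $\bigl(C^\infty(\overline{B})\bigr)^{d^2}$, and hence in $\mathrm{Riem}_b(B)$ with its induced topology.

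The only conceptual subtlety — really the only step that is not bookkeeping — is the possible non-uniqueness of the identifications $e\colon \mathbb{R}^d \to T_{x_j}X$ used to define each $g^{x_j}$, which was flagged after Proposition~\ref{p:bounded_metric}. I would simply note, as there, that whatever choices are made the $C^\infty$-boundedness conclusion of Proposition~\ref{p:bounded_metric} is independent of them, so the Montel argument applies uniformly.
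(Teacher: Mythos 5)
Your proposal is correct and follows essentially the same route as the paper: invoke Proposition~\ref{p:bounded_metric} to get boundedness and then use the Heine--Borel (Montel) property of $C^\infty\bigl(\overline{B}\bigr)$ to conclude precompactness. The only difference is cosmetic — the paper cites Rudin for the Montel property where you sketch its Arzel\`a--Ascoli/diagonal-subsequence proof — so the arguments coincide in substance.
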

\begin{proof}
By Proposition~\ref{p:bounded_metric}, the functions $g^{x_k}_{ij}$ lie in a bounded set in $C^\infty\bigl(\overline{B}\bigr)$. Since the closure of a bounded set is also bounded~\cite[Section~IV.2]{Conway1990}, and since $C^\infty\bigl(\overline{B}\bigr)$ has the Heine--Borel property~\cite[Section~1.9]{GrandpaRudin}, the assertion follows.
\end{proof}

Given $g \in \mathrm{Riem}_b(B)$, we denote $\Delta_g^D$ the self-adjoint realisation of $\Delta_g$ on $B$ with Dirichlet boundary conditions. Explicitly, $\Delta_g^D$ is defined as the operator associated to the closure of the quadratic form
\[
 q_g(u,v) = \int_{B} \sqrt{|g(x)|}\sum_{i,j} g^{i,j}(x)\partial_i u(x)\overline{ \partial_j v(x)}\,{\rm d}x,\qquad u,v\in C^\infty_c(B).
\]
\begin{Lemma}\label{l:uniformlaplacianbound}
Let $x_j$ and $g^{x_j} \in \mathrm{Riem}_b(B)$ be as in Corollary~$\ref{c:compactmetrics}$.
We have
\[
\sup_{j \in \mathbb{N}} \bigl\|\bigl(1-\Delta_{g^{x_j}}^D\bigr)^{-1}\bigr\|_{\mathcal{L}_{\frac{d}{2},\infty}(L_2(B)) }<\infty.
\]
\end{Lemma}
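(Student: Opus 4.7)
The plan is to deduce the uniform Schatten bound from two ingredients: uniform ellipticity of the metrics $g^{x_j}$ (extracted from the pre-compactness in Corollary~\ref{c:compactmetrics}) and Weyl's law for the Dirichlet Laplacian on the bounded domain $B$.

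First I would use pre-compactness of $\{g^{x_j}\}$ in $C^\infty(\overline{B})$ to obtain uniform ellipticity: the matrices $\bigl(g^{x_j}_{ik}(y)\bigr)_{ik}$ and their inverses $\bigl((g^{x_j})^{ik}(y)\bigr)_{ik}$ lie in a pre-compact, hence bounded, subset of the space of continuous symmetric-matrix-valued functions on $\overline{B}$. Since bounded geometry, via Rauch-type comparison in normal coordinates on a ball of radius $r_0$ not exceeding the injectivity radius, furnishes a uniform lower bound on the eigenvalues of these matrices, there exist $0 < c \leq C < \infty$ such that
\[
 c\,\delta_{ik} \leq g^{x_j}_{ik}(y) \leq C\,\delta_{ik}
\]
as matrices, uniformly in $y \in \overline{B}$ and $j \in \mathbb{N}$; in particular $\sqrt{|g^{x_j}|}$ is uniformly bounded above and below.

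Next I would compare $\Delta^D_{g^{x_j}}$ with the Euclidean Dirichlet Laplacian $\Delta^D_0$ on $B$. Conjugating $\Delta^D_{g^{x_j}}$ by the unitary $u \mapsto |g^{x_j}|^{1/4} u$ from $L_2\bigl(B,\sqrt{|g^{x_j}|}\,{\rm d}x\bigr)$ onto $L_2(B,{\rm d}x)$ gives a self-adjoint operator on $L_2(B)$ with the same spectrum. The uniform ellipticity established above makes the form $q_{g^{x_j}}$ comparable to the flat form $q_0(u,v) = \int_B \sum_i \partial_i u\, \overline{\partial_i v}\,{\rm d}x$, and the weight $\sqrt{|g^{x_j}|}$ is uniformly comparable to $1$; the min--max principle then yields a bound $\lambda_k\bigl(-\Delta^D_{g^{x_j}}\bigr) \geq c'\,\lambda_k\bigl(-\Delta^D_0\bigr)$ for all $k$, with $c' > 0$ independent of $j$.

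Finally, Weyl's law for the Dirichlet Laplacian on the bounded domain $B \subset \mathbb{R}^d$ gives $\lambda_k\bigl(-\Delta^D_0\bigr) \sim c_d\, k^{2/d}$ as $k \to \infty$, so
\[
 \mu_k\bigl((1-\Delta^D_{g^{x_j}})^{-1}\bigr) = \bigl(1 + \lambda_k\bigl(-\Delta^D_{g^{x_j}}\bigr)\bigr)^{-1} \lesssim (1+k)^{-2/d}
\]
uniformly in $j$, which is precisely the required $\mathcal{L}_{d/2,\infty}$ bound. The main obstacle is the uniform ellipticity step, since pre-compactness in $C^\infty(\overline{B})$ alone does not preclude degenerate positive semi-definite limits; one must appeal to bounded geometry (not just to Corollary~\ref{c:compactmetrics}) to rule this out. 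Once uniform ellipticity is in hand, the remaining steps are standard eigenvalue comparisons.
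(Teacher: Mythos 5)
Your proof is correct and follows essentially the same route as the paper's. Both proofs hinge on the same three ingredients: (i) a uniform ellipticity bound, namely a constant $c>0$ independent of $j$ such that the Dirichlet form $q_{g^{x_j}}$ dominates $c$ times the flat Dirichlet form $q_0$; (ii) a comparison of $\Delta_{g^{x_j}}^D$ with the Euclidean Dirichlet Laplacian $\Delta_0^D$ on $B$; and (iii) Weyl's law for $\Delta_0^D$. The only variation is in step (ii): you pass to the weighted space $L_2\bigl(B,\sqrt{|g^{x_j}|}\,{\rm d}x\bigr)$ via the multiplier $u\mapsto|g^{x_j}|^{1/4}u$ and then invoke the min--max principle to compare eigenvalues, whereas the paper never changes the Hilbert space at all -- it takes $\Delta_g^D$ to be the form operator of $q_g$ on $L_2(B,{\rm d}x)$ and instead writes
\[
\bigl(1-\Delta_g^D\bigr)^{-1} = \bigl(1-\Delta_0^D\bigr)^{-\frac12}\Bigl[\bigl(1-\Delta_0^D\bigr)^{\frac12}\bigl(1-\Delta_g^D\bigr)^{-1}\bigl(1-\Delta_0^D\bigr)^{\frac12}\Bigr]\bigl(1-\Delta_0^D\bigr)^{-\frac12},
\]
bounding the operator norm of the bracketed factor by $c_g^{-1}$ directly from the form inequality. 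The two comparisons are equivalent in strength, and both deliver the weak-Schatten estimate once combined with $\bigl\|\bigl(1-\Delta_0^D\bigr)^{-\frac12}\bigr\|_{\mathcal{L}_{d,\infty}}<\infty$. Your closing observation is also on target: the paper's one-line assertion that $\inf_j c_{g^{x_j}}>0$ is precisely where bounded geometry (via Proposition~\ref{p:bounded_metric} and the fact that $g^{x_j}_{ik}(0)=\delta_{ik}$ in normal coordinates), and not merely the pre-compactness recorded in Corollary~\ref{c:compactmetrics}, does the real work -- pre-compactness by itself would allow degenerate limits -- and your Rauch-type comparison argument is a legitimate way to supply that step.
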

\begin{proof}
Let $g$ be a metric on the closed unit ball $\overline{B},$ and let $c_g$ and $c_G$ be positive constants such that
\[
 c_g\left(\sum_{k=1}^d |\xi_k|^2\right) \leq \sum_{k,l=1}^d \sqrt{|g(x)|}g^{k,l}(x)\xi_k\overline{\xi_l} \leq C_g\left(\sum_{k=1}^d |\xi_k|^2\right)
\]
for all $\xi \in \mathbb{C}^d.$
We will prove that there is a constant $k_d$ such that
\begin{equation}\label{e:metric_uniformity}
\big\|\bigl(1-\Delta_{g}^D\bigr)^{-1}\big\|_{\mathcal{L}_{\frac{d}{2},\infty}(L_2(B)) } \leq k_dc_g^{-\frac12}.
\end{equation}
Since
\[
 \inf_{j} c_{g^{x_j}} > 0,
\]
\eqref{e:metric_uniformity} implies the result.

Let $q_0$ denote the Dirichlet quadratic form on $B.$ That is,
\[
 q_0(u,v) := \sum_{j} \int_B \partial_j u(x)\overline{\partial_j v}(x)\,{\rm d}x,\qquad u,v \in C^\infty_c(B).
\]
The Dirichlet Laplacian $\Delta_0^D$ on $B$ is defined as the operator associated to the closure of the quadratic form $q_0$ (see, e.g.,~\cite[Example~7.5.26]{Simon-course-IV}).
By the definitions of $q_g,$ $c_g$ and $C_g$, we have
\[
 c_gq_0(u,u) \leq q_g(u,u) \leq C_gq_0(u,u),\qquad u\in C^\infty_c(G).
\]
It follows that the form domains of the closures of $q_0$ and $q_g$ coincide, we denote this space~$H^1_0(B).$
The preceding inequality implies, in particular, that
\[
 c_g\big\|\big(1-\Delta_0^D\big)^{\frac12}u\big\|_{L_2(B)}^2 \leq \big\|\big(1-\Delta_g^D\big)^{\frac12}u\big\|_{L_2(B)}^2,\qquad u\in H^1_0(B).
\]
By standard results in quadratic form theory (see, e.g.,~\cite[equation (7.5.29)]{Simon-course-IV}), $1-\Delta_g^D$ defines a~topological linear isomorphism from $H^1_0(B)$ to its dual \smash{$\big(H^1_0(B)\big)^{^*}$}. Therefore, replacing $u$ with $\big(1-\Delta_g^D\big)^{-1}v$ for $v\in \big(H^1_0\big)^*$, we arrive at
\[
 c_g\big\|\big(1-\Delta_0^D\big)^{\frac12}\big(1-\Delta_g^D\big)^{-1}v\big\|_{L_2(B)}^2 \leq \big\|\big(1-\Delta_g^D\big)^{-\frac12}v\big\|_{L_2(B)}^2,\qquad v \in \big(H^1_0(B)\big)^*.
\]
Replacing $v$ with $\big(1-\Delta_0^D\big)^{\frac12}w$ for $w \in L_2(B)$ gives
\[
 \big\|\big(1-\Delta_0^D\big)^{\frac12}\big(1-\Delta_g^D\big)^{-1}\big(1-\Delta_0\big)^{\frac12}w\big\|_{L_2(B)}^2 \leq c_g^{-1}\|w\|_{L_2(B)}^2.
\]
Recall that $\big\|\big(1-\Delta_0^D\big)^{-\frac12}\big\|_{\mathcal{L}_{d,\infty}(L_2(B)) }<\infty$ by standard Weyl asymptotics. Therefore,
\begin{align*}
 \big\|\big(1-\Delta_g^D\big)^{-1}\big\|_{\mathcal{L}_{\frac{d}{2},\infty}(L_2(B))}\leq{}& \big\|\big(1-\Delta_0^D\big)^{-\frac12}\big\|_{\mathcal{L}_{d,\infty}(L_2(B)) }^2 \\
 &{}\times \big\|\big(1-\Delta_0^D\big)^{\frac12}\big(1-\Delta_g^D\big)^{-1}\big(1-\Delta_0\big)^{\frac12}\big\|_{\mathcal{B}(L_2(B))}\\
\leq{}& c_g^{-\frac12}\big\|\big(1-\Delta_0^D\big)^{-\frac12}\big\|_{\mathcal{L}_{d,\infty}(L_2(B)) }^2.
\end{align*}
Defining $k_d = \bigl\|\bigl(1-\Delta_0^D\bigr)^{-\frac12}\bigr\|_{\mathcal{L}_{d,\infty}(L_2(B)) }^2$
completes the proof of~\eqref{e:metric_uniformity},
and hence of the lemma.
\end{proof}

\begin{Proposition}
For all $q>0$, there exists $M>0$ such that
\[
 \sup_{j} \big\|M_{\psi_j}(1-\Delta)^{-M}\big\|_{q,\infty} < \infty,
\]
where $\psi_j$ is the partition of unity subordinate to $\{B(x_j, r_0)\}_{j \in \mathbb{N}}$ mentioned in Section~$\ref{S: Manifolds}$.
\end{Proposition}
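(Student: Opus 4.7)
The strategy is to localize via the partition of unity and reduce the estimate to the uniform local bound in Lemma \ref{l:uniformlaplacianbound}, exploiting the precompactness of the family of pulled-back metrics provided by Corollary \ref{c:compactmetrics}.

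First, combining Lemma \ref{l:uniformlaplacianbound} with the inclusion $\mathcal{L}_{p,\infty}^M \subset \mathcal{L}_{p/M,\infty}$ (which for positive operators follows from $\mu(k,T^M)=\mu(k,T)^M$), I would obtain
\[
\sup_{j}\bigl\|(1-\Delta_{g^{x_j}}^D)^{-M}\bigr\|_{\mathcal{L}_{d/(2M),\infty}(L_2(B))} < \infty
\]
for every positive integer $M$. Fixing $M$ large enough that $d/(2M)\leq q$ and using the continuous inclusion $\mathcal{L}_{d/(2M),\infty}\subset \mathcal{L}_{q,\infty}$ gives a uniform bound in $\mathcal{L}_{q,\infty}(L_2(B))$ for the local Dirichlet resolvents.

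Second, I would transfer the estimate to the global resolvent via the normal coordinate charts. The map $\exp_{x_j}$ induces a unitary identification $U_j : L_2\bigl(B,\sqrt{|g^{x_j}|}\,dy\bigr) \to L_2(B(x_j,r_0)) \subset L_2(X)$; by Corollary \ref{c:compactmetrics}, the density $\sqrt{|g^{x_j}|}$ is uniformly bounded above and below, so $L_2\bigl(B,\sqrt{|g^{x_j}|}\,dy\bigr)$ is uniformly isomorphic to $L_2(B)$, and the pullback $\tilde\psi_j := \psi_j\circ\exp_{x_j}$ lies in a bounded subset of $C^\infty(\overline{B})$ by property (v) of the partition of unity. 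Introduce a smooth cutoff $\eta_j$ supported in $B(x_j,r_0)$, identically $1$ on $\operatorname{supp}\psi_j$, with derivatives uniformly bounded in $j$, and decompose
\[
M_{\psi_j}(1-\Delta)^{-M} = M_{\psi_j}M_{\eta_j}(1-\Delta)^{-M}M_{\eta_j} + M_{\psi_j}(1-\Delta)^{-M}(1-M_{\eta_j}).
\]

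The main obstacle lies in the first summand: the global resolvent is not directly unitarily equivalent to the Dirichlet resolvent on $B$ because of boundary effects on $\partial B$. Since $\eta_j$ is supported well inside $B(x_j,r_0)$, resolvent identity expansions show that $M_{\eta_j}(1-\Delta)^{-M}M_{\eta_j}$ differs from $U_j M_{\tilde\eta_j}(1-\Delta_{g^{x_j}}^D)^{-M}M_{\tilde\eta_j}U_j^*$ only by a uniformly smoothing remainder, whose $\mathcal{L}_{q,\infty}$-norm is bounded independently of $j$. The second summand is handled by Combes--Thomas-type exponential decay of the kernel of $(1-\Delta)^{-M}$ between the disjoint supports of $\psi_j$ and $1-\eta_j$, which on bounded geometry manifolds is uniform in $j$ (bounded geometry gives at most exponential volume growth, hence the product of the kernel bound and the volume of supports remains controlled). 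Combining the two estimates with the uniform $\mathcal{L}_{q,\infty}$ bound for the local Dirichlet operators from the first step yields the claimed uniform bound.
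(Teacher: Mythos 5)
Your overall strategy (localise with cutoffs, use the uniform Dirichlet bound from Lemma~\ref{l:uniformlaplacianbound}, control off-diagonal terms by decay) is genuinely different from what the paper does, and it runs into two real obstacles that are not resolved in the proposal.

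The paper sidesteps the boundary-effect problem you correctly identify by working with an exact algebraic identity rather than a resolvent comparison: it writes $(1-\Delta)^M M_{\psi_j} = V_j^*(1-\Delta_{g^{x_j}})^M M_{\phi_j}V_j$, where only the \emph{differential expression} (not the self-adjoint Dirichlet realisation) appears on the right, multiplies on the right by $V_j^*(1-\Delta_{g^{x_j}}^D)^{-M}V_j$, and the whole discrepancy is captured in a single commutator term $(1-\Delta)^{-M}[(1-\Delta)^M,M_{\psi_j}]$, which is shown to be a \emph{bounded} operator uniformly in $j$ (a differential operator of order $2M-1$, composed with a smoothing of order $2M$). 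There is no resolvent remainder to estimate and no off-diagonal piece at all. Your first step (raising the Dirichlet bound to the $M$-th power via $\mu(k,T^M)=\mu(k,T)^M$) is correct and equivalent to what the paper uses implicitly when it takes $M$ large.

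The first gap in your route is the claim that ``resolvent identity expansions show that $M_{\eta_j}(1-\Delta)^{-M}M_{\eta_j}$ differs from $U_j M_{\tilde\eta_j}(1-\Delta_{g^{x_j}}^D)^{-M}M_{\tilde\eta_j}U_j^*$ only by a uniformly smoothing remainder.'' The standard resolvent identity $A^{-1}-B^{-1}=A^{-1}(B-A)B^{-1}$ does not apply because $1-\Delta$ (acting on $L_2(X)$) and $1-\Delta_{g^{x_j}}^D$ (acting on $L_2(B)$ with Dirichlet boundary conditions) live on different Hilbert spaces with incompatible domains; the naive extension-by-zero of Dirichlet eigenfunctions is not in the domain of the global Laplacian. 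Making this comparison precise requires either a parametrix construction or careful handling of a boundary-layer term, neither of which is sketched, and the uniformity in $j$ of the resulting remainder estimate would need to be tracked through. The second gap is in the off-diagonal term $M_{\psi_j}(1-\Delta)^{-M}(1-M_{\eta_j})$: Combes--Thomas estimates give pointwise or weighted operator-norm bounds, not $\mathcal{L}_{q,\infty}$ bounds, and for $q<1$ you need singular-value decay beyond trace class. Moreover the decay rate of the kernel of $(1-\Delta)^{-M}$ is a fixed constant which need not beat the exponential volume growth rate permitted by bounded geometry; to make this work one has to, e.g., replace $1-\Delta$ by $\lambda-\Delta$ for a large spectral shift $\lambda$ and then argue that the two resolvents have comparable Schatten quasi-norms. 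None of this is spelled out, and without it the claim that the second piece is ``bounded'' in $\mathcal{L}_{q,\infty}$ uniformly in $j$ is not established.
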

\begin{proof}
The proof is inspired by the proof of~\cite[Lemma~3.4.8]{SukochevZanin2018}. Let $V_j\colon L_2(X) \to L_2(B, g^{x_j})$ be the partial isometry mapping $\xi \in L_2(X)$ to $V\xi (z) = f|_{B(x_j, r_0)} \circ \exp_x \circ e \in L_2(B, g^{x_j})$. Denote $V_j\psi_j := \phi_j$. Then, by construction,
\begin{gather*}
M_{\psi_j} = V_j^* M_{\phi_j} V_j,
\\
(1-\Delta)^M M_{\psi_j} = V_j^* \big(1-\Delta_{g^{x_j}}\big)^M M_{\phi_j} V_j.
\end{gather*}
It follows that
\begin{align*}
 (1-\Delta)^M M_{\psi_j} V_j^* \big(1-\Delta_{g^{x_j}}^D\big)^{-M} V_j &{}= V_j^* \big(1-\Delta_{g^{x_j}}\big)^M M_{\phi_j} \big(1-\Delta_{g^{x_j}}^D\big)^{-M} V_j\\
 &{}=V_j^* \big[\big(1-\Delta_{g^{x_j}}\big)^M, M_{\phi_j}\big] \big(1-\Delta_{g^{x_j}}^D\big)^{-M} V_j + V_j^* M_{\phi_j} V_j\\
 &{}= \big[(1-\Delta)^M, M_{\psi_j}\big] V_j^* \big(1-\Delta_{g^{x_j}}^D\big)^{-M} V_j + M_{\psi_j}.
\end{align*}
Multiplying both sides by $(1-\Delta)^{-M}$ and rearranging gives
\begin{gather*}
 (1-\Delta)^{-M}M_{\psi_j} = M_{\psi_j} V_j^* \big(1-\Delta_{g^{x_j}}^D\big)^{-M} V_j \\
 \hphantom{(1-\Delta)^{-M}M_{\psi_j} =}{}
 - (1-\Delta)^{-M}\big[(1-\Delta)^M, M_{\psi_j}\big] V_j^* \big(1-\Delta_{g^{x_j}}^D\big)^{-M} V_j.
\end{gather*}
We claim that
\[
\sup_{j \in \mathbb{N}} \big\| (1-\Delta)^{-M}\big[(1-\Delta)^M, M_{\psi_j}\big] \big\|_\infty < \infty.
\]
For every $\alpha$, we have $\sup_{j,x} |\partial^{\alpha}\psi_j(x)|<\infty,$ where $\partial^{\alpha}$ is taken in the exponential normal coordinates of $B(x_j,r_0),$ and therefore $\big[(1-\Delta)^M, M_{\psi_j}\big]$ is a uniform differential operator of order $2M-1$ with coefficients that are uniform in $j$. Using~\cite[Theorem~3.9]{Kordyukov1991}, it follows that $\big[(1-\Delta)^M, M_{\psi_j}\big]$ is a bounded operator from $L_2(X)$ to the Sobolev space $H_{1-2M}(X)$ with norm uniform in $j$. By~\cite[Proposition~4.4]{Kordyukov1991}, $(1-\Delta)^{-M}$ is a bounded operator from that space into $L_2(X)$, and so the claim holds.

Since the norm of $V_j$ is equal to one, via Lemma~\ref{l:uniformlaplacianbound}, we get for $M$ large enough
\begin{gather*}
 \sup_{j \in \mathbb{N}} \big\|(1-\Delta)^{-M}M_{\psi_j}\big\|_{q,\infty}\leq \sup_{j \in \mathbb{N}} \big( \big\| M_{\psi_j} \big\|_\infty \! \cdot \big\|\big(1-\Delta_{g^{x_j}}\big)^{-M}\big\|_{q,\infty} \big)\\
 \hphantom{\sup_{j \in \mathbb{N}} \big\|(1-\Delta)^{-M}M_{\psi_j}\big\|_{q,\infty}\leq}{}
 + \sup_{j \in \mathbb{N}} \big(\big\| (1-\Delta)^{-M}\!\big[(1-\Delta)^M, M_{\psi_j}\big] \big\|_\infty \! \cdot \big\|\big(1-\Delta_{g^{x_j}}\big)^{-M}\big\|_{q,\infty} \big) \\
 \hphantom{\sup_{j \in \mathbb{N}} \big\|(1-\Delta)^{-M}M_{\psi_j}\big\|_{q,\infty}}{}
 <\infty.
\tag*{\qed}
\end{gather*}
\renewcommand{\qed}{}
\end{proof}

It follows from this proposition that for every $q>0$ and every $j,$ there exists $M>0$ and a~constant $C_M$ independent of $j$ such that
\begin{equation}\label{weak_bounded_cwikel_estimate}
 \big\|M_{f\psi_j}(1-\Delta)^{-M}\big\|_{q,\infty} \leq C_M\|f\|_{L_{\infty}(B(x_j,r_0))}.
\end{equation}

Let $\big\{\psi_{j}^{(0)}\big\}_{j=0}^\infty, \big\{\psi_{j}^{(2)}\big\}_{j=0}^\infty , \ldots , \big\{\psi_{j}^{(N)}\big\}_{j=0}^\infty$ be a partition of $\{\psi_{j}\}_{j=0}^\infty$ into disjointly supported subfamilies. That is, for all $0\leq k\leq N,$ the functions $\big\{\psi_{j}^{(k)}\big\}_{j=0}^\infty$ are disjointly supported, and for every $j\geq 0$ there exists
a unique $0\leq k\leq N$ such that $\psi_{j} \in \big\{\psi_l^{(k)}\big\}_{l=0}^\infty.$

\begin{Theorem}
 Let $0<p<2.$ For sufficiently large $M$, we have
 \begin{align*}
 &\big\|M_f(1-\Delta)^{-M}\big\|_{p,\infty}\leq C_{p,M,N}\|f\|_{\ell_{p,\infty}(L_\infty)},\\
 &\big\|M_f(1-\Delta)^{-M}\big\|_{p}\leq C_{p,M,N}'\|f\|_{\ell_{p}(L_\infty)}.
 \end{align*}
\end{Theorem}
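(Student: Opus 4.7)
The plan is to decompose $M_f$ using the partition of unity $\{\psi_j\}_{j=0}^\infty$, reduce to left-disjoint sums via the grouping $\{\psi_j^{(k)}\}_{k=0}^N$, and then apply Corollary~\ref{disjointification_corollary} together with the localised estimate~\eqref{weak_bounded_cwikel_estimate}.

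First I would write
\[
M_f(1-\Delta)^{-M} = \sum_{k=0}^{N}\sum_{j=0}^\infty M_{f\psi_j^{(k)}}(1-\Delta)^{-M}.
\]
For a fixed $k$, the family $\bigl\{M_{f\psi_j^{(k)}}(1-\Delta)^{-M}\bigr\}_{j=0}^\infty$ is left-disjoint: indeed, for $j\neq l$,
\[
\bigl(M_{f\psi_j^{(k)}}(1-\Delta)^{-M}\bigr)^*\bigl(M_{f\psi_l^{(k)}}(1-\Delta)^{-M}\bigr)=(1-\Delta)^{-M}M_{\overline{f\psi_j^{(k)}}\,f\psi_l^{(k)}}(1-\Delta)^{-M}=0,
\]
because the functions $\psi_j^{(k)}$ and $\psi_l^{(k)}$ have disjoint supports within the $k$-th subfamily.

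Next, for $0<p<2$, choose $q$ (small enough to apply Corollary~\ref{disjointification_corollary}) and $M$ large enough for~\eqref{weak_bounded_cwikel_estimate} to hold with that $q$. Then Corollary~\ref{disjointification_corollary} gives, for each $k$,
\[
\Bigl\|\sum_{j=0}^\infty M_{f\psi_j^{(k)}}(1-\Delta)^{-M}\Bigr\|_{p,\infty}\leq C_{p,q}\,\Bigl\|\bigl\{\|M_{f\psi_j^{(k)}}(1-\Delta)^{-M}\|_{q,\infty}\bigr\}_{j=0}^\infty\Bigr\|_{\ell_{p,\infty}}.
\]
Applying the pointwise bound~\eqref{weak_bounded_cwikel_estimate} yields
\[
\|M_{f\psi_j^{(k)}}(1-\Delta)^{-M}\|_{q,\infty}\leq C_M\|f\|_{L_\infty(B(x_{j_k},r_0))},
\]
where $j_k$ is the index in the original partition corresponding to the $j$-th element of the $k$-th subfamily. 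Since $\ell_{p,\infty}$ is monotone under passage to subsequences, each of these expressions is dominated by $C_M\|f\|_{\ell_{p,\infty}(L_\infty)}$. Summing over $k=0,\ldots,N$ using the quasi-triangle inequality on $\mathcal{L}_{p,\infty}$ produces the first inequality, with a constant that depends on $p$, $M$ and $N$. The $\mathcal{L}_p$ statement is proved identically, using the second part of Corollary~\ref{disjointification_corollary} in place of the first and replacing $\ell_{p,\infty}$ everywhere by $\ell_p$.

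The only delicate point is making the choice of exponents consistent: one fixes $p$, chooses a $q$ small enough that Corollary~\ref{disjointification_corollary} applies, and then chooses $M$ large enough that the local Cwikel-type estimate~\eqref{weak_bounded_cwikel_estimate} holds with that $q$. Since for any $q>0$ such an $M$ exists (and can be chosen uniformly in $j$), this causes no difficulty. Everything else is a routine combination of the ingredients already prepared in this section.
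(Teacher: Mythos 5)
Your proposal is correct and follows essentially the same route as the paper's proof: decompose $f$ via the partition of unity, split into the $N+1$ disjointly supported subfamilies, apply the quasi-triangle inequality, invoke Corollary~\ref{disjointification_corollary} for each left-disjoint family, and then feed in the localised estimate~\eqref{weak_bounded_cwikel_estimate} after choosing $q$ small and then $M$ large. The only material you add beyond the paper's exposition is the explicit verification of left-disjointness and the remark that $\ell_{p,\infty}$-quasinorms are monotone under passage to subsequences, both of which are correct.
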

\begin{proof}
 We prove the first inequality, the second can be proved analogously. Let $f \in \ell_{p,\infty}(L_\infty).$
 Since $\{\psi_j\}_{j=1}^\infty$ is a partition of unity, we have
 \[
 f = \sum_{j=0}^\infty \psi_jf = \sum_{k=0}^N \sum_{j=0}^\infty \psi_{j}^{(k)}f.
 \]
 By the quasi-triangle inequality, there exists $C_{N,p}$ such that
 \[
 \big\|M_f(1-\Delta)^{-M}\big\|_{p,\infty} \leq C_{N,p}\sum_{k=0}^N \Bigg\| \sum_{j=0}^\infty M_{\psi_j^{(k)}f} (1-\Delta)^{-M}\Bigg\|_{p,\infty}.
 \]
 The operators $\big\{M_{\psi_j^{(k)}f}(1-\Delta)^{-M}\big\}_{j=0}^\infty$ are left-disjoint. Hence, Corollary~\ref{disjointification_corollary} implies that
 there exists $q>0$ such that
 \[
 \big\|M_f(1-\Delta)^{-M}\big\|_{p,\infty} \leq C_{N,p,q}\sum_{k=0}^N \big\| \big\{\big\|M_{f\psi_j^{(k)}}(1-\Delta)^{-M}\big\|_{q,\infty}\big\}_{j=0}^\infty\big\|_{\ell_{p,\infty}}.
 \]
 From~\eqref{weak_bounded_cwikel_estimate}, it follows that if $M$ is sufficiently large (depending on $q$), we have
 \[
 \big\|M_f(1-\Delta)^{-M}\big\|_{p,\infty}\leq C_{N,p,q}\big\| \big\{\|f\|_{L_{\infty}(B(x_j,r_0))}\big\}_{j=0}^\infty\big\|_{\ell_{p,\infty}}.
 \]
 The latter is the definition of the $\ell_{p,\infty}(L_\infty)$ quasinorm.
\end{proof}

\begin{Corollary} \label{C:Cwikel} Let $0<p<2$. Let $P$ be a self-adjoint, lower-bounded $P \in EBD^m(X)$. We have $\exp(-tP)M_f \in \mathcal{L}_{p,\infty}$ whenever $f\in \ell_{p,\infty}(L_\infty)$, and $\exp(-tP)M_f \in \mathcal{L}_{p}$ whenever $f\in \ell_{p}(L_\infty)$.
\end{Corollary}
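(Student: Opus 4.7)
The plan is to use the factorisation
\[
 \exp(-tP) M_f = \bigl[\exp(-tP)(1-\Delta)^M\bigr]\bigl[(1-\Delta)^{-M}M_f\bigr],
\]
with $M$ chosen sufficiently large as dictated by the previous theorem. The first bracketed factor, a~priori only defined on $H^{2M}(X)$, will be shown to extend to a bounded operator on $L_2(X)$, while the second lies in $\mathcal{L}_{p,\infty}$ (respectively $\mathcal{L}_p$). Since these Lorentz classes are two-sided ideals in $\mathcal{B}(L_2(X))$, this yields the corollary at once.

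For the second factor I would pass to the adjoint. Because $\mu(T)=\mu(T^*)$, the Lorentz quasi-norms satisfy $\|T\|_{p,\infty}=\|T^*\|_{p,\infty}$ and $\|T\|_p=\|T^*\|_p$, and
\[
 (1-\Delta)^{-M}M_f = \bigl(M_{\bar{f}}(1-\Delta)^{-M}\bigr)^*.
\]
Since $\|\bar{f}\|_{\ell_{p,\infty}(L_\infty)}=\|f\|_{\ell_{p,\infty}(L_\infty)}$ and likewise for the $\ell_p(L_\infty)$ quasi-norm, the preceding theorem gives, for $M$ large enough depending on $p$, that $M_{\bar{f}}(1-\Delta)^{-M}$ lies in the appropriate Lorentz ideal with quasi-norm controlled by that of $f$, so the same holds for $(1-\Delta)^{-M}M_f$.

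For the first factor it suffices to prove that $\exp(-tP)\colon L_2(X)\to H^{2M}(X)$ is bounded, for then $(1-\Delta)^M\exp(-tP)$ is bounded on $L_2(X)$, and by self-adjointness of both $P$ and $(1-\Delta)^M$ the adjoint of this composition furnishes the desired bounded extension of $\exp(-tP)(1-\Delta)^M$. To establish the smoothing bound, I would combine the functional calculus for $P$ with the uniform elliptic regularity of Kordyukov cited in Section~\ref{S: Manifolds}. Since $P$ is lower-bounded self-adjoint, there is $c\in\mathbb{R}$ with $\mathrm{spec}(P)\subseteq[c,\infty)$, and for every $k\geq 0$ the function $s\mapsto {\rm e}^{-ts}(1+s-c)^k$ is bounded on this spectrum, so $\exp(-tP)(1+P-c)^k$ is bounded on $L_2(X)$. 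Since $P\in EBD^m(X)$ is uniformly elliptic, the $BD^m$-mapping properties give that $(1+P-c)^{-k}$ maps $L_2(X)$ into $H^{mk}(X)$ boundedly; composing and choosing $k$ with $mk\geq 2M$ yields the required estimate.

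The main obstacle is the uniform Sobolev estimate $\|\exp(-tP)v\|_{H^{2M}(X)}\lesssim \|v\|_{L_2(X)}$, which crucially uses the uniform ellipticity assumption $P\in EBD^m(X)$ rather than merely pointwise ellipticity; this is standard in the uniform pseudodifferential calculus of Kordyukov and Shubin. Once granted, the two-sided ideal property of $\mathcal{L}_{p,\infty}$ and $\mathcal{L}_p$ finishes the proof.
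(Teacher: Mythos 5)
Your proof is correct and follows essentially the same route as the paper: both factor $\exp(-tP)M_f$ through $(1-\Delta)^{-M}M_f\in\mathcal{L}_{p,\infty}$ (via the adjoint of the preceding theorem) and a bounded remainder, and both establish the boundedness of that remainder by combining functional calculus for $\exp(-tP)(P+C)^N$ with Kordyukov's uniform elliptic regularity to trade powers of $(P+C)^{-1}$ for Sobolev smoothness. The only difference is presentational: you state the smoothing estimate $\exp(-tP)\colon L_2(X)\to H^{2M}(X)$ explicitly and then pass to the adjoint, whereas the paper directly proves $(P+C)^{-N}(1-\Delta)^M$ is bounded on $L_2(X)$ and composes; the underlying ingredients are identical.
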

\begin{proof}
By the preceding theorem, we already have that for $f\in \ell_{p,\infty}(L_\infty)$ and sufficiently large~$M$, $M_f(1-\Delta)^{-M} \in \mathcal{L}_{p,\infty}$. Noting that $P+C>0$ and is therefore an invertible operator on~$L_2(X)$ for some $C\in \mathbb{R}$, by~\cite[Proposition~4.4]{Kordyukov1991} it follows that $(P+C)^{-1}$ maps~$H_s(X)$ boundedly into $H_{s+m}(X)$. By~\cite[Theorem~3.9]{Kordyukov1991}, $(1-\Delta)^M$ maps $H_s(X)$ boundedly into~$H_{s-2M}(X)$. Therefore, we can find $N\in \mathbb{N}$ large enough such that $(P+C)^{-N}(1-\Delta)^M$ is a~bounded operator on $L_2(X)$. Noting that $\exp(-tP) (P+C)^N$ is a bounded operator for any $N$ by the functional calculus on unbounded operators, the claim follows. The case for $f \in \ell_p(L_\infty)$ is proven analogously.
\end{proof}

This corollary will eventually make it possible to apply Theorem~\ref{main_theorem} on $P$ and $W = M_w$ for some $w \in \ell_{1,\infty}(L_\infty)$. Finding such a $w$ is the first step. The second step is to show that also $\exp(-tP)[P,M_w] \in \mathcal{L}_1$, but this will require geometric conditions on the manifold $X$.

\begin{Lemma}\label{Grimaldi}
Let $(X,g)$ be a complete connected Riemannian manifold of bounded geometry. Then for any fixed $r\in \mathbb{R}$,
\[
\liminf_{R \to \infty} \frac{ \abs{B(x_0, R-r)}}{ \abs{B(x_0, R+r)}} > 0.
\]
\begin{proof}
The paper~\cite{GrimaldiPansu2011} proves that for manifolds of bounded geometry, we have for fixed $R\geq 1$,
\[
\frac{1}{L} \leq \abs{B(x_0, R+2)}-\abs{B(x_0, R+1)} \leq L(\abs{B(x_0, R+1)} - \abs{B(x_0, R))}
\]
for some constant $L>0$ independent of $R$.

Now with this inequality, one can show by induction that
\[
\frac{\abs{B(x_0,R+k)}}{\abs{B(x_0,R)}} \leq (1+L)^k, \qquad R \geq 1.
\] For $k=1$,
\begin{align*}
& \frac{\abs{B(x_0,R+1)}}{\abs{B(x_0,R)}}\leq 1 + L \frac{\abs{B(x_0,R)} - \abs{B(x_0,R-1)}}{\abs{B(x_0,R)}}\leq 1 + L.
\end{align*}
Suppose that $\frac{\abs{B(x_0,R+k)}}{\abs{B(x_0,R)}} \leq (1+L)^k$ for some $k$, then \begin{align*}
 \frac{\abs{B(x_0,R+k+1)}}{\abs{B(x_0,R)}} &\leq \frac{\abs{B(x_0,R+k)}}{\abs{B(x_0,R)}} + L \frac{\abs{B(x_0,R+k)} - \abs{B(x_0,R+k-1)}}{\abs{B(x_0,R)}}\\
 &\leq (1+L)^k + L (1+L)^k = (1+L)^{k+1}.
\end{align*}

Therefore,
\begin{align*}
 &\liminf_{R \to \infty} \frac{ \abs{B(x_0, R-r)}}{ \abs{B(x_0, R+r)}} = \liminf_{R \to \infty} \frac{ \abs{B(x_0, R)}}{ \abs{B(x_0, R+2r)}} > \frac{1}{(1+L)^K},
\end{align*}
where $K$ is some integer greater than $2r$.
\end{proof}
\end{Lemma}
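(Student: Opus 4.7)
The statement asks for a uniform-in-$R$ positive lower bound on the ratio of volumes of concentric balls whose radii differ by a bounded amount. The natural strategy is to reduce this to a multiplicative control of the type $|B(x_0,R+k)| \leq C^k |B(x_0,R)|$ for integers $k$, and then apply it with $k$ the integer ceiling of $2r$.

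My plan is to invoke the key estimate from Grimaldi--Pansu for Riemannian manifolds of bounded geometry, which says there exists a constant $L>0$ (depending only on the geometry) such that for all $R\geq 1$,
\[
|B(x_0,R+2)|-|B(x_0,R+1)| \leq L\bigl(|B(x_0,R+1)|-|B(x_0,R)|\bigr).
\]
This is precisely the kind of shell-to-shell growth control that one expects from bounded geometry (and which fails without it, where volumes can double arbitrarily fast on annuli of unit width). I would accept this result as a black box.

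From this inequality, I would argue by induction on $k\geq 1$ that for all $R\geq 1$,
\[
\frac{|B(x_0,R+k)|}{|B(x_0,R)|} \leq (1+L)^{k}.
\]
The base case $k=1$ is immediate from the Grimaldi--Pansu inequality applied at radius $R-1$ (or directly from the fact that $|B(x_0,R+1)|-|B(x_0,R)| \leq L \cdot |B(x_0,R)|$, using the shell bound). The inductive step combines the hypothesis at level $k$ with a fresh application of the shell inequality to bound the $(k{+}1)$-st shell by $L$ times the $k$-th one, summing to give $(1+L)^{k+1}$.

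With this multiplicative bound in hand, the conclusion is immediate: for any fixed $r\in\mathbb{R}$, choose an integer $K$ with $K>2|r|$. Then for $R$ large enough that $R-r\geq 1$,
\[
\frac{|B(x_0,R-r)|}{|B(x_0,R+r)|} = \frac{|B(x_0,R-r)|}{|B(x_0,(R-r)+2r)|} \geq (1+L)^{-K},
\]
which is a positive constant independent of $R$, so the $\liminf$ is bounded below by $(1+L)^{-K}>0$. The only substantive step is the appeal to Grimaldi--Pansu; everything else is a short induction and a case analysis that $r$ might be negative (in which case the ratio is $\geq 1$ and there is nothing to prove).
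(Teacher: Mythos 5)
Your proposal is correct and follows essentially the same path as the paper: both invoke the Grimaldi--Pansu shell inequality as a black box, prove the multiplicative bound $\lvert B(x_0,R+k)\rvert/\lvert B(x_0,R)\rvert \leq (1+L)^k$ by the same induction, and conclude by choosing an integer $K$ exceeding $2r$. The only (cosmetic) difference is that you explicitly dispose of the case $r<0$, which the paper leaves implicit.
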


\begin{Lemma}\label{wl1infty}
Let $(X,g)$ be a complete connected Riemannian manifold of bounded geometry. Then the function
\[
 w(x) = (1+\abs{B(x_0,d_X(x,x_0))})^{-1},\qquad x \in X,
\]
is an element of $\ell_{1,\infty}(L_\infty)(X).$
\begin{proof}
We denote $d_X(x,x_0)$ by $r(x)$. Note that for any $x \in B(x_k, r_0)$, we have $r(x) \geq \abs{x_k}-r_0$ by the triangle inequality, and hence
\[
\|w\|_{L_\infty(B(x_k,r_0))} \leq (1+\abs{B(x_0,r(x_k)-r_0)})^{-1}.
\]
Without loss of generality, order the $x_j$ such that $r(x_1) \leq r(x_2) \leq \cdots$. We only need to show that
\[
 \frac{1}{1+|B(x_0,r(x_k)|} = O\big(k^{-1}\big)
\]
or equivalently that
\[
 \inf_{k\geq 1} k|B(x_0,r(x_k))| > 0.
\]

Note that as in Remark~\ref{R: metric balls lower bound}, we have that $\inf_{j \in \mathbb{N}} \abs{B(x_j, r_0)} > 0$ and hence
\[
 \abs{B(x_0,\abs{x_k}-r_0)} = \inf_{j \in \mathbb{N}} \abs{B(x_j, r_0)} \cdot \frac{ \abs{B(x_0, \abs{x_k}-r_0)}}{ \abs{B(x_0, \abs{x_k}+r_0)}} \cdot \frac{\abs{B(x_0, \abs{x_k}+r_0)}}{\inf_{j \in \mathbb{N}} \abs{B(x_j, r_0)}}.
\]
By the ordering of the $x_j$, we know that all the balls $B(x_j, r_0)$ from $j=1$ up to and including~${j=k}$ are contained in $B(x_0, \abs{x_k}+r_0)$.
Hence,
\begin{align*}
& k \cdot \inf_{m\in \mathbb{N}} \abs{B(x_m, r_0)} \leq \sum_{j=1}^k \abs{B(x_j, k)} \leq (N+1) \abs{B(x_0, \abs{x_k}+r_0)},
\end{align*}
since any ball can only intersect at most $N$ other balls. We thus get
\[
\abs{B(x_0, \abs{x_k}+r_0)} \geq \frac{k}{N+1}\inf_{j \in \mathbb{N}} \abs{B(x_j, r_0)}.
\]
We will now show that for $k$ large enough, $\frac{ \abs{B(x_0, \abs{x_k}-r_0)}}{ \abs{B(x_0, \abs{x_k}+r_0)}}$ is bounded below away from zero. By Lemma~\ref{Grimaldi}, we have
\[
\liminf_{R \to \infty} \frac{ \abs{B(x_0, R-r_0)}}{ \abs{B(x_0, R+r_0)}} > 0,
\]
and so there must be some $R_0$ such that for $R \geq R_0$, we have $\frac{ \abs{B(x_0, R-r_0)}}{ \abs{B(x_0, R+r_0)}} > \delta > 0$. We claim that we can take $K$ large enough such that $\abs{x_k} \geq R_0$ for $k \geq K$. Indeed, by analogous reasoning as before, at most $K:= (N+1)\frac{\abs{B(x_0, R_0+r_0)}}{\inf_j \abs{ B(x_j, r_0)}}$ points $x_j$ can be inside the ball $\abs{B(x_0, R_0)}$, thus~$\abs{x_k} > R_0$ for $k > K$.

Gathering the results above, we get the existence of some constant $C$ such that for $k \geq K$ we have
\[
\abs{B(x_0,\abs{x_k}-r_0)} \geq C k.
\]
This means that
\[
\|w\|_{L_\infty(B(x_k,r_0))} \leq (1+C k)^{-1},
\]
proving that $w \in \ell_{1,\infty}(L_\infty)(X).$
\end{proof}
\end{Lemma}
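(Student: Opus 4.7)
The plan is to show that the sequence $a_j := \|w\|_{L_\infty(B(x_j,r_0))}$ has decreasing rearrangement satisfying $\mu(k,a) = O((k+1)^{-1})$, which is exactly the statement $w\in \ell_{1,\infty}(L_\infty)(X)$. Since $w$ is a non-increasing function of $d_X(\,\cdot\,,x_0)$, the triangle inequality gives, for any $x\in B(x_j,r_0)$, the bound $d_X(x,x_0)\geq d_X(x_j,x_0)-r_0$, whence
\[
a_j \leq \bigl(1+\abs{B(x_0,\, d_X(x_j,x_0)-r_0)}\bigr)^{-1}.
\]
After relabelling so that $r_j := d_X(x_j,x_0)$ is non-decreasing in $j$, it therefore suffices to produce a constant $C>0$ such that $\abs{B(x_0, r_k - r_0)}\geq C k$ for all sufficiently large $k$.

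To obtain this lower bound I would exploit two structural features of the covering $\{B(x_j,r_0)\}_{j=0}^\infty$ coming from bounded geometry: (i) the uniform volume lower bound $v := \inf_j \abs{B(x_j,r_0)} > 0$ from Remark \ref{R: metric balls lower bound}, and (ii) the finite order property, namely that each point of $X$ lies in at most $N+1$ of the balls. Because the first $k$ balls in the reordering all sit inside $B(x_0, r_k + r_0)$, integrating the inequality $\sum_{j=0}^{k-1}\chi_{B(x_j,r_0)} \leq (N+1)\chi_{B(x_0,r_k+r_0)}$ yields
\[
k v \leq \sum_{j=0}^{k-1} \abs{B(x_j,r_0)} \leq (N+1)\abs{B(x_0, r_k+r_0)}.
\]

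Next I would invoke Lemma \ref{Grimaldi} to pass from $r_k+r_0$ to $r_k - r_0$ at the cost of a multiplicative constant, once $r_k$ is large enough. Verifying that $r_k\to\infty$ is a separate but easy consequence of the same packing argument: only finitely many $x_j$ can lie in any fixed ball $B(x_0,R_0)$, again because $v>0$ and each point is covered at most $N+1$ times. Combining all of this gives $\abs{B(x_0, r_k - r_0)}\gtrsim k$ for $k$ large, hence $a_j \lesssim (k+1)^{-1}$ in the reordering, establishing membership in $\ell_{1,\infty}$.

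The only point requiring any care is verifying Lemma \ref{Grimaldi}'s hypothesis that $r_k\to\infty$, and confirming that the volume comparison from Lemma \ref{Grimaldi} applies uniformly from some index onward; there is no substantive analytic obstacle, as the argument is essentially a volume-packing estimate in which bounded geometry plays the sole role of providing the uniform lower bound $v>0$ and the finite multiplicity $N$ of the cover.
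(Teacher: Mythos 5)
Your proof is correct and follows essentially the same route as the paper's: you bound $a_j$ using the triangle inequality, reorder so $r_j$ is non-decreasing, derive the volume lower bound $kv \leq (N+1)\abs{B(x_0,r_k+r_0)}$ from the packing/finite-multiplicity property of the cover, invoke Lemma~\ref{Grimaldi} to compare $\abs{B(x_0,r_k-r_0)}$ with $\abs{B(x_0,r_k+r_0)}$, and verify $r_k\to\infty$ by the same packing estimate. The integration of the pointwise multiplicity inequality $\sum_{j<k}\chi_{B(x_j,r_0)}\leq(N+1)\chi_{B(x_0,r_k+r_0)}$ is a clean restatement of the paper's counting argument, but the substance is identical.
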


We will now mold Property~(D) into the form that we will apply it in.
\begin{Lemma}\label{L:NewPropD}
Let $(X,g)$ be a complete connected Riemannian manifold of bounded geometry. If $X$ has Property~$(D)$, that is if
\begin{equation}\label{eq3:Condition1}
\left \{ \frac{\abs{\partial B(x_0, k)}}{\abs{B(x_0,k)}}\right\}_{k\in \mathbb{N}} \in \ell_2(\mathbb{N})
\end{equation}
and
\begin{equation}\label{eq3:Condition2}
 \lim_{R\to \infty} \frac{\frac{\partial}{\partial r}\big |_{r=R} |\partial B(x_0, r)|}{|\partial B(x_0,R)|} = 0,
\end{equation}
then also for every $h > 0$,
\begin{equation}
 \label{eq:Condition1Mod}
\left \{ \frac{\sup_{s \in [0, h]}\abs{\partial B(x_0, k+s)}}{\abs{B(x_0,k)}}\right\}_{k\in \mathbb{N}} \in \ell_2(\mathbb{N}).
\end{equation}
\end{Lemma}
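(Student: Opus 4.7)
The plan is to show that on large scales the numerator $\sup_{s \in [0,h]} |\partial B(x_0, k+s)|$ is bounded by a constant multiple of $|\partial B(x_0, k)|$. Once this is established, the $\ell_2$ summability of $\{|\partial B(x_0, k)|/|B(x_0, k)|\}_k$ granted by condition \eqref{eq3:Condition1} immediately propagates to \eqref{eq:Condition1Mod}, up to a finite tail adjustment.

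Set $f(r) := |\partial B(x_0, r)|$. Condition \eqref{eq3:Condition2} says precisely that the logarithmic derivative $f'(r)/f(r)$ tends to zero. Fix any $\epsilon > 0$; there exists $R_0 > 0$ such that $|f'(r)/f(r)| < \epsilon/h$ for every $r \geq R_0$. Integrating between $r$ and $r+s$ for $r \geq R_0$ and $s \in [0,h]$ yields
\[
\bigl|\log f(r+s) - \log f(r)\bigr| \;=\; \left|\int_r^{r+s} \frac{f'(u)}{f(u)} \, {\rm d}u\right| \;\leq\; \epsilon,
\]
so that $\sup_{s\in[0,h]} f(r+s) \leq e^\epsilon f(r)$ for all $r \geq R_0$. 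In particular, for every integer $k \geq R_0$,
\[
\frac{\sup_{s \in [0, h]} |\partial B(x_0, k+s)|}{|B(x_0, k)|} \;\leq\; e^\epsilon \cdot \frac{|\partial B(x_0, k)|}{|B(x_0, k)|},
\]
and the right-hand side lies in $\ell_2(\mathbb{N})$ by condition \eqref{eq3:Condition1}. The finitely many indices $k < R_0$ contribute only a finite sum, since on a manifold of bounded geometry the function $r \mapsto |\partial B(x_0, r)|$ is locally bounded (e.g.\ as a consequence of the exponential bound $|B(x_0, r)| \leq C e^{cr}$ combined with the coarea identity $\frac{{\rm d}}{{\rm d}r}|B(x_0, r)| = |\partial B(x_0, r)|$).

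The one subtlety I expect is the regularity of $f$: the statement of Property~(D) presumes the pointwise derivative in \eqref{eq3:Condition2} exists, but a rigorous integration of $f'/f$ requires slightly more than pointwise differentiability at large radii. This can be handled by observing that for $r$ outside the cut locus of $x_0$ the function $f$ is smooth, while the cut locus is negligible in the sense needed for the fundamental theorem of calculus; alternatively, one may interpret \eqref{eq3:Condition2} as a statement about the (a.e.\ defined) logarithmic derivative, in which case the above Lebesgue integration is immediately justified. This is the only real obstacle; the rest of the argument is a direct bookkeeping comparison between the two sequences.
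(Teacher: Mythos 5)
Your proof is correct and follows essentially the same strategy as the paper: both arguments quantify condition~\eqref{eq3:Condition2} to control the oscillation of $r\mapsto |\partial B(x_0,r)|$ over length-$h$ windows, then transfer the $\ell_2$-summability from~\eqref{eq3:Condition1} to~\eqref{eq:Condition1Mod}, treating the finitely many small indices separately.

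The one genuine difference is technical. You integrate the logarithmic derivative $f'/f$ and exponentiate, obtaining the multiplicative bound $\sup_{s\in[0,h]} f(r+s)\leq e^\epsilon f(r)$ directly. The paper instead integrates $f'$ itself, bounding $\int_r^{r+\delta} f'(s)\,{\rm d}s \leq \varepsilon\delta \sup_{\gamma\in[0,\delta]} f(r+\gamma)$, and then rearranges the resulting self-referential inequality (a Gr\"onwall-style bootstrap) to get the constant $(1-\varepsilon h)^{-1}$. Both routes need the fundamental theorem of calculus for $f$ on large radii, and your worry about the cut locus applies equally to the paper's calculation; in fact the very formulation of~\eqref{eq3:Condition2} already presupposes differentiability of $f$ at large $R$, so the concern is implicitly absorbed into Property~(D) itself. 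Your log-derivative route is if anything slightly more direct, and neither version gains generality over the other. One small point of care: the claim that $r\mapsto|\partial B(x_0,r)|$ is ``locally bounded'' does not follow merely from $|B(x_0,r)|\leq Ce^{cr}$ and the a.e.\ coarea identity; but since Property~(D) and the lemma statement already treat $\sup_{s\in[0,h]}|\partial B(x_0,k+s)|$ as a finite quantity, the required local boundedness is effectively part of the standing hypotheses, so the finitely many initial terms pose no real problem.
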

\begin{proof}
 First, let $\varepsilon >0$ and choose $R$ large enough so that $r \geq R$ implies that
 \[
 \frac{\partial}{\partial s}\bigg|_{s=r} |\partial B(x_0, s)| \leq \varepsilon|\partial B(x_0, r)|,
 \]
 which is possible due to equation~\eqref{eq3:Condition2}. Then, for $r \geq R$, $\delta > 0$,
 \begin{align*}
 \frac{|\partial B(x_0, r + \delta)|}{|B(x_0, r)|} &= \frac{|\partial B(x_0, r)|}{|B(x_0, r)|} + \frac{\int_r^{r+\delta} \frac{\partial}{\partial s} |\partial B(x_0, s)| \,{\rm d}s}{|B(x_0,r)|}\\
 &\leq \frac{|\partial B(x_0, r)|}{|B(x_0, r)|} + \varepsilon \delta \sup_{\gamma \in [0,\delta]}\frac{ |\partial B(x_0, r+\gamma)| }{|B(x_0,r)|}.
 \end{align*}
 Taking the supremum over $\delta \in [0, h]$ on both sides and rearranging gives
 \[
 (1-\varepsilon h) \sup_{s \in [0,\delta]}\frac{ |\partial B(x_0, r+s)| }{|B(x_0,r)|} \leq \frac{|\partial B(x_0, r)|}{|B(x_0, r)|}.
 \]
 This shows that~\eqref{eq3:Condition1} and~\eqref{eq3:Condition2} together imply~\eqref{eq:Condition1Mod}
\end{proof}

\begin{Lemma}\label{L:Property_D}
 Let $(X,g)$ be a non-compact Riemannian manifold with bounded geometry and Property~$($D$)$. Then \begin{equation}\label{Condition1}
 \sum_{k=0}^\infty \frac{\sup_{s \in [-1, 2]}\abs{\partial B(x_0, (k+s)r_0)}^2}{(1+\abs{B(x_0,(k-1)r_0)})^2} <\infty.
\end{equation}
\end{Lemma}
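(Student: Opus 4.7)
The plan is to reduce the claim to Lemma~\ref{L:NewPropD} via two elementary operations: (i) replace the continuous parameter $(k-1)r_0$ in the denominator and the sup-interval in the numerator by integer anchor points, using monotonicity of $r\mapsto|B(x_0,r)|$ and enlarging the sup-window; and (ii) regroup the resulting sum over $k$ by the integer anchor, losing only a finite multiplicative constant because $r_0\le 1$ forces only boundedly many $k$ to share the same integer anchor.

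More concretely, fix $k\geq \lceil 1/r_0\rceil+1$ and set $m:=\lfloor (k-1)r_0\rfloor$, so that $m\le (k-1)r_0<m+1$. For $s\in[-1,2]$ we have, using $r_0\le 1$,
\[
(k+s)r_0 \;=\; (k-1)r_0+(s+1)r_0 \;\in\; [m,\,m+4],
\]
so
\[
\sup_{s\in[-1,2]}|\partial B(x_0,(k+s)r_0)| \;\le\; \sup_{s\in[0,4]}|\partial B(x_0,m+s)|.
\]
Monotonicity of $r\mapsto|B(x_0,r)|$ gives $1+|B(x_0,(k-1)r_0)| \ge 1+|B(x_0,m)|$. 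Hence the $k$-th summand is at most
\[
\frac{\sup_{s\in[0,4]}|\partial B(x_0,m+s)|^2}{(1+|B(x_0,m)|)^2}.
\]

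For each $m\in\mathbb{N}$ the preimage $\{k: \lfloor (k-1)r_0\rfloor = m\}$ has cardinality at most $\lceil 1/r_0\rceil+1$, so after regrouping
\[
\sum_{k=0}^\infty \frac{\sup_{s\in[-1,2]}|\partial B(x_0,(k+s)r_0)|^2}{(1+|B(x_0,(k-1)r_0)|)^2}
\;\le\; C_{r_0}\sum_{m=0}^\infty \frac{\sup_{s\in[0,4]}|\partial B(x_0,m+s)|^2}{(1+|B(x_0,m)|)^2} + R,
\]
where $R$ absorbs the finitely many initial terms with $(k-1)r_0<0$; these are bounded because $|\partial B(x_0,r)|$ is bounded on any compact range of $r$ (a consequence of bounded geometry).

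Finally, Lemma~\ref{L:NewPropD} applied with $h=4$ yields
\[
\sum_{m=1}^\infty \left(\frac{\sup_{s\in[0,4]}|\partial B(x_0,m+s)|}{|B(x_0,m)|}\right)^2 < \infty,
\]
and since $X$ is non-compact of bounded geometry we have $|B(x_0,m)|\to\infty$, so $1+|B(x_0,m)|$ and $|B(x_0,m)|$ are comparable for all sufficiently large $m$; the contribution from small $m$ is a finite sum of finite terms. This yields the desired estimate. The argument is essentially bookkeeping, and the main (mild) obstacle is simply keeping the interval rescaling $r\mapsto (k+s)r_0$ aligned with the integer-indexed statement of Lemma~\ref{L:NewPropD}, which is handled by the floor-function grouping above.
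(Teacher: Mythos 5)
Your proposal is correct and follows essentially the same strategy as the paper's proof: replace the $r_0$-scaled radii with integer anchors (using monotonicity for the denominator and enlarging the sup-window for the numerator), regroup by anchor losing a constant of order $1/r_0$, and invoke Lemma~\ref{L:NewPropD}. The only substantive difference is bookkeeping: the paper uses $h=3$ and groups by $N \le kr_0 \le N+1$, whereas you use $h=4$ and anchor at $m=\lfloor(k-1)r_0\rfloor$; your version is in fact slightly cleaner, since the paper's intermediate bound $\sup_{s\in[-1,2]}|\partial B(x_0,(k+s)r_0)|\le \sup_{s\in[-1,2]}|\partial B(x_0,N+s)|$ does not quite hold as written (the radius $(k+s)r_0$ can reach up to $N+3$ while $N+s$ only reaches $N+2$), a gap that disappears with your wider window $[m,m+4]$.
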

\begin{proof}
 Since $X$ has Property~(D), Lemma~\ref{L:NewPropD} gives that
 \[
 \sum_{k=1}^\infty \frac{\sup_{s \in [0, 3]}\abs{\partial B(x_0, k+s)}^2}{\abs{B(x_0,k)}^2} < \infty.
 \]
 Recall that we can assume $r_0 \leq 1$. Then
 \begin{align*}
 &\sum_{k=0}^\infty \frac{\sup_{s \in [-1, 2]}\abs{\partial B(x_0, (k+s)r_0)}^2}{(1+\abs{B(x_0,(k-1)r_0)})^2} \\
 &\qquad\quad{} = \sum_{N=0}^\infty \sum_{N \leq kr_0 \leq N+1} \frac{\sup_{s \in [-1, 2]}\abs{\partial B(x_0, (k+s)r_0)}^2}{(1+\abs{B(x_0,(k-1)r_0)})^2}\\
 &\qquad\quad{} \leq \sum_{N=0}^\infty \sum_{N \leq kr_0 \leq N+1} \frac{\sup_{s \in [-1, 2]}\abs{\partial B(x_0, N+s)}^2}{(1+\abs{B(x_0,N-1)})^2}\\
 &\qquad\quad{} \leq \left\lceil \frac{1}{r_0}\right\rceil \left(2\cdot \sup_{s \in [0, 3]}\abs{\partial B(x_0, s)}^2+ \sum_{k=1}^\infty \frac{\sup_{s \in [0, 3]}\abs{\partial B(x_0, k+s)}^2}{\abs{B(x_0,k)}^2}\right) <\infty.
\tag*{\qed}
\end{align*}
\renewcommand{\qed}{}
\end{proof}

Note that in the next lemma, we do not need uniform ellipticity of the differential operators considered, although we will consider operators $L \in BD^2(X)$ which lack a constant term. The meaning of this condition
is that in a system of normal coordinates $(U,\phi),$ where $U = B(x,r_0)$, we have
\[
 L = \sum_{0<|\alpha|\leq 2} a_{\alpha,x}(y)D^{\alpha}.
\]
Equivalently, $L1 = 0.$ The important feature of these operators is that if $P \in BD^2(X)$ and $f$ is a smooth function with uniformly bounded derivatives, then $[P,M_f] = [L, M_f]$ for a differential operator $L\in BD^2(X)$ with no constant term.

\begin{Lemma}\label{L:Dwl1}
Let $(X,g)$ be a non-compact Riemannian manifold with bounded geometry satisfying Property~$($D$)$ $($Definition~$\ref{D:Property_D})$. Then $Lw \in \ell_1(L_\infty)(X)$ for any $L \in BD^2(X)$ that lacks a~constant term.
\end{Lemma}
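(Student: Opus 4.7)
The plan is to write $w = g \circ r$, where $r(y) := d_X(x_0, y)$ and $g(t) := (1+|B(x_0,t)|)^{-1}$, apply the chain rule to express $Lw$ in terms of $g'(r)$, $g''(r)$ and derivatives of $r$, and then sum over $j$ using Lemma~\ref{L:Property_D}.

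First, using $\frac{{\rm d}}{{\rm d}t}|B(x_0,t)| = |\partial B(x_0,t)|$, one computes $g'(t) = -|\partial B(x_0,t)|/(1+|B(x_0,t)|)^2$ and
\[
 |g''(t)| \leq \frac{\bigl|\frac{{\rm d}}{{\rm d}t}|\partial B(x_0,t)|\bigr|}{(1+|B(x_0,t)|)^2} + 2\frac{|\partial B(x_0,t)|^2}{(1+|B(x_0,t)|)^3}.
\]
In the normal coordinate chart around $x_j$, any $L \in BD^2(X)$ without constant term has the form $\sum_{1\leq|\alpha|\leq 2} a_{\alpha,x_j}(y) D^\alpha$ with coefficients uniformly bounded in $j$. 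Since $|\nabla r|\leq 1$ almost everywhere, and Hessian comparison on a manifold of bounded geometry gives a uniform bound on $|\nabla^2 r|$ on $\{r\geq r_0\}$ away from the cut locus of $x_0$ (a set of measure zero), the chain rule yields
\[
 |Lw(y)| \leq C\bigl(|g'(r(y))| + |g''(r(y))|\bigr)
\]
almost everywhere on $\{r\geq r_0\}$.

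Next, I partition $\{x_j\}_{j\geq 0}$ into shells $S_k := \{j : kr_0 \leq r(x_j) < (k+1)r_0\}$. For $y\in B(x_j,r_0)$ with $j\in S_k$ one has $r(y)\in[(k-1)r_0,(k+2)r_0]$. Because the balls $B(x_j,r_0)$ with $j\in S_k$ lie in the annulus $B(x_0,(k+2)r_0)\setminus B(x_0,(k-1)r_0)$, have volumes bounded below by $\inf_\ell|B(x_\ell,r_0)|>0$, and the covering is of finite order $N$, one deduces
\[
 |S_k| \leq C\bigl(|B(x_0,(k+2)r_0)|-|B(x_0,(k-1)r_0)|\bigr) \leq C'\sup_{s\in[(k-1)r_0,(k+2)r_0]}|\partial B(x_0,s)|.
\]
Combining these bounds, and using $(1+|B(x_0,r)|)^{-1}\leq (1+|B(x_0,(k-1)r_0)|)^{-1}$ for $r$ in the relevant range, the quantity $\sum_j\|Lw\|_{L_\infty(B(x_j,r_0))}$ is controlled, modulo a finite contribution from $j$ with $r(x_j)<2r_0$, by a sum whose leading term is
\[
 \sum_k \frac{\bigl(\sup_{s\in[(k-1)r_0,(k+2)r_0]}|\partial B(x_0,s)|\bigr)^2}{(1+|B(x_0,(k-1)r_0)|)^2},
\]
which is finite by Lemma~\ref{L:Property_D}. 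The cubic contribution coming from $|\partial B|^2/(1+|B|)^3$ is dominated by the same sum since $|\partial B|/|B|$ is bounded (it tends to $0$ by Property~(D)), and the term containing $\frac{{\rm d}}{{\rm d}r}|\partial B|$ is, for large $k$, bounded by $\varepsilon|\partial B|$ times the same quantity, via the second condition of Property~(D).

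The main obstacle is that the distance function $r$ is only Lipschitz and fails to be $C^2$ on the cut locus of $x_0$. I would resolve this by noting that the cut locus has measure zero, so an almost-everywhere $L_\infty$ bound on $Lw$ is enough to conclude $Lw\in\ell_1(L_\infty)(X)$, and by invoking Hessian comparison in bounded geometry to get the required uniform control of $|\nabla^2 r|$ on $\{r\geq r_0\}$ off this negligible set.
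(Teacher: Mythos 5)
Your proposal is correct and follows essentially the same route as the paper's proof: write $w=\tilde w\circ r$, apply the chain rule using $\|\nabla r\|\leq 1$ a.e.\ and a uniform Hessian bound on $r$ (the paper cites Petersen for this), decompose $\{x_j\}$ into radial shells of width $r_0$, bound the cardinality of each shell by the annulus volume via the finite-order covering and the uniform lower bound on ball volumes, and sum using Lemma~\ref{L:Property_D}. The only cosmetic difference is that you make the cut-locus/regularity issue explicit, whereas the paper handles it more tersely by noting smoothness a.e.\ and the essential-supremum nature of the $L_\infty$ norm; your remarks on absorbing the $\frac{d}{dR}|\partial B|$ and $|\partial B|^2/(1+|B|)^3$ terms into the leading term are the same reductions the paper carries out under ``analogous calculations.''
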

\begin{proof}
For any $x \in X$, we can take a neighbourhood of normal coordinates $(U, \phi)$ in which $L$ takes the following form by definition:
\[
\sum_{|\alpha|= 1,2} a_{\alpha, x}(y) D^\alpha,
\]
where for any multi-index $\beta$
\[
\big\lvert\partial^\beta a_{\alpha, x}(0)\big\rvert \leq C_{\alpha, \beta}, \qquad x \in X.
\]
Now denote $r(x) := d_X(x_0,x)$, $\tilde{w}(r) = (1+|B(x_0,r)|)^{-1}$ so that $w = \tilde{w} \circ r$. Note that $\tilde{w}$ is a smooth function on $[0,\infty),$ and the combination of~\cite[Lemma~III.4.4]{Sakai1996} and~\cite[Proposition~III.4.8]{Sakai1996} gives that $r$ is smooth almost everywhere with $\norm{\nabla r} \leq 1$ almost everywhere. We therefore have almost everywhere
\begin{align*}
&\abs{Lw(x)}
 \leq \bigg | \sum_{|\alpha|= 1} a_{\alpha, x}(0) \big(\partial^\alpha \big(\tilde{w}\circ r \circ \phi^{-1}\big)\big)(0) \bigg |
 + \bigg | \sum_{\substack{|\alpha|=2\\\alpha = \beta + \gamma\\|\beta|=|\gamma|= 1}} a_{\alpha, x}(0) \big(\partial^\beta \partial^\gamma \big(\tilde{w}\circ r \circ \phi^{-1}\big)\big)(0)\bigg|\\
& \qquad{}= \bigg | \sum_{|\alpha|= 1} a_{\alpha, x}(0) \big(\partial^\alpha \big(r \circ \phi^{-1}\big)\big)(0) \tilde{w}'(r(x)) \bigg |\\
&\qquad\quad{} + \bigg | \sum_{\substack{|\alpha|=2\\\alpha = \beta + \gamma\\|\beta|=|\gamma|= 1}} a_{\alpha, x}(0) \big( \tilde{w}''(r(x)) \cdot \partial^\beta \big(r \circ \phi^{-1}\big)(0) \cdot \partial^\gamma \big(r \circ \phi^{-1}\big)(0) \\
&\qquad \quad{} \hphantom{\bigg | \sum_{\substack{|\alpha|=2\\\alpha = \beta + \gamma\\|\beta|=|\gamma|= 1}}}{}
 + \tilde{w}'(r(x)) \cdot \partial^\beta \partial^\gamma \big(r \circ \phi^{-1}\big)(0)\big) \bigg |\\
&\qquad{}\leq \bigg | \sum_{|\alpha|= 1} a_{\alpha, x}(0) \bigg | \cdot \big|\tilde{w}'(r(x))\big| + \bigg |\sum_{\substack{|\alpha|=2\\\alpha = \beta + \gamma\\|\beta|=|\gamma|= 1}} a_{\alpha,x}(0) \bigg|\cdot \big|\tilde{w}''(r(x))\big| \\
&\qquad\quad{}+ C\bigg |\sum_{\substack{|\alpha|=2\\\alpha = \beta + \gamma\\|\beta|=|\gamma|= 1}} a_{\alpha,x}(0) \bigg|\cdot \big|\tilde{w}'(r(x))\big| \\
&\qquad {}\leq (1+C)\cdot\big( \big|\tilde{w}'(r(x))\big|+\big|\tilde{w}''(r(x))\big|\big),
\end{align*}
where we have used the chain rule and
\[
\big|\big(\partial^\alpha \bigl(r \circ \phi^{-1}\bigr)\big)(0)\big| = \left|\frac{\partial}{\partial x^\alpha}\bigg|_x (r)\right| \leq \norm{\nabla r(x)} \leq 1
\]
\big(because $(\nabla r)_\alpha = \sum_{\beta}g_{\alpha \beta} \partial^\beta r$ and in normal coordinates at zero, $g_{\alpha,\beta}(0) = \delta_{\alpha,\beta}$\big)
in addition to
\[
\big|\partial^\beta \partial^\gamma \big(r \circ \phi^{-1}\big)(0)\big| \leq \| \mathrm{Hess}\ r \| \leq C
\]
for some constant $C$, since the Hessian of $r$ is uniformly bounded~\cite[Theorem~6.5.27]{Petersen2006}.

Since $w$ and $\tilde{w}$ are smooth, we have everywhere
\[
\abs{Lw(x)} \leq (1+C)\cdot\big(\big|\tilde{w}'(r(x))\big| + \big|\tilde{w}''(r(x))\big|\big).
\]
 Therefore,
\begin{align*}
 &\norm{Lw}_{L_\infty(B(x_j, r_0))} \leq (1+C)\cdot \Big(\sup_{s \in [-r_0, r_0]} \abs{\tilde{w}'(r(x_j)+s)} + \abs{\tilde{w}''(r(x_j)+s)} \Big)\\
 &\qquad{}\leq (1+C)\cdot \Bigg(\sup_{s \in [-r_0, r_0]} \frac{\abs{\partial B(x_0, r(x_j)+s)}}{(1+\abs{B(x_0,r(x_j)+s)})^2} + \sup_{s \in [-r_0, r_0]} \frac{\frac{{\rm d}}{{\rm d}R}\Big|_{R= r(x_j)+s}\abs{\partial B(x_0, R)}}{(1+\abs{B(x_0,r(x_j)+s)})^2} \\
 &\qquad\quad{} + 2 \sup_{s \in [-r_0, r_0]} \frac{\abs{\partial B(x_0, r(x_j)+s)}^2}{(1+\abs{B(x_0,r(x_j)+s)})^3}\Bigg).
\end{align*}
Next, we calculate
\begin{align*}
 &\sum_{j\in \mathbb{N}} \sup_{s \in [-r_0, r_0]} \frac{\abs{\partial B(x_0, r(x_j)+s)}}{(1+\abs{B(x_0,r(x_j)+s)})^2}\\
 &\qquad {}= \sum_{k=0}^\infty \sum_{\{j\colon kr_0 \leq r(x_j)< (k+1)r_0\}} \sup_{s \in [-r_0, r_0]} \frac{\abs{\partial B(x_0, r(x_j)+s)}}{(1+\abs{B(x_0,r(x_j)+s)})^2}\\
 &\qquad {}\leq \sum_{k=0}^\infty \abs{\{j\colon k r_0 \leq r(x_j)< (k+1)r_0\}}\cdot \sup_{l \in [-1, 2]} \frac{\abs{\partial B(x_0, (k+l)r_0}}{(1+\abs{B(x_0,(k+l)r_0)})^2}.
\end{align*}
With a similar calculation as in the proof of Lemma~\ref{wl1infty}, we have
\begin{align*}
 &\inf_{m \in \mathbb{N}} \abs{B(x_m, r_0)} \cdot \abs{\{j: k r_0 \leq r(x_j)< (k+1)r_0\}} \\
 &\qquad {}= \sum_{\{j : k r_0 \leq r(x_j)< (k+1)r_0\}} \inf_{m \in \mathbb{N}}\abs{B(x_m, r_0)}\\
 &\qquad {}\leq \sum_{\{j : k r_0 \leq r(x_j)< (k+1)r_0\}} \abs{B(x_j, r_0)}\\
 &\qquad {}\leq (N+1) (\abs{B(x_0, (k+2)r_0)} - \abs{B(x_0, (k-1)r_0)})\\
 &\qquad {}\leq 3(N+1)r_0 \sup_{l\in [-1,2]} \abs{\partial B(x_0, (k+l)r_0)},
\end{align*}
since all the balls $B(x_j, r_0)$ with $kr_0 \leq r(x_j)\leq (k+1)r_0$ are contained in the annulus $B(x_0, (k+2)r_0)\setminus B(x_0, (k-1)r_0)$, and balls can intersect at most $N$ other balls.

Using Lemma~\ref{L:Property_D}, we can infer that the expression~\eqref{Condition1} is finite, and so
\begin{align*}
 \sum_{j\in \mathbb{N}} \sup_{s \in [-r_0, r_0]} \frac{\abs{\partial B(x_0, r(x_j)+s)}}{(1+\abs{B(x_0,r(x_j)+s)})^2} &\leq C' \sum_{k=0}^\infty \frac{\sup_{l \in [-1, 2]}\abs{\partial B(x_0, (k+l)r_0}^2}{(1+\abs{B(x_0,(k-1)r_0)})^2} < \infty.
\end{align*}

With analogous calculations, we also have that
\[
\sum_{j\in \mathbb{N}}\sup_{s \in [-r_0, r_0]} \frac{\frac{{\rm d}}{{\rm d}R}\Big|_{R= r(x_j)+s}\abs{\partial B(x_0, R)}}{(1+\abs{B(x_0,r(x_j)+s)})^2} <\infty
\]
and
\[
\sum_{j \in \mathbb{N}}\sup_{s \in [-r_0, r_0]} \frac{\abs{\partial B(x_0, r(x_j)+s)}^2}{(1+\abs{B(x_0,r(x_j)+s)})^3} < \infty.
\]
We conclude that
\[
\|Lw\|_{\ell_1(L_\infty)} = \sum_{j\in \mathbb{N}} \| Lw \|_{L_{\infty}(B(x_j,r_0))} < \infty.
\tag*{\qed}
\]
\renewcommand{\qed}{}
\end{proof}

\begin{Corollary}\label{C:CwikelL1}
Let $P\in EBD^2(X)$ be a self-adjoint lower-bounded operator. Let $X$ be a~non-compact Riemannian manifold with Property~$($D$)$. Then $\exp(-tP)[P, M_w] \in \mathcal{L}_1$.
\end{Corollary}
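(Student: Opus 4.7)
The plan is to decompose $[P,M_w]$ into a zeroth-order and a first-order piece via the Leibniz rule, and reduce each to an application of the Cwikel-type estimate of Corollary \ref{C:Cwikel}. Let $L := P - M_{a_0}$, where $a_0 = P\mathbf{1}$ is the zeroth-order coefficient of $P$; then $L \in BD^2(X)$ has no constant term and $[P,M_w] = [L,M_w]$. Writing $L = \sum_{1\le |\alpha|\le 2} a_\alpha D^\alpha$ in normal coordinates on each ball $B(x_j,r_0)$ and expanding, one obtains
\[
[L,M_w] = M_{Lw} + \sum_{i=1}^d M_{g_i} D_i,
\]
where each $g_i$ is a linear combination, with bounded coefficients coming from the principal symbol of $P$, of first derivatives of $w$. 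Each $g_i$ is of the form $L'w$ for some $L'\in BD^1(X)\subset BD^2(X)$ without constant term, and each $D_i g_i$ is of the form $L''w$ for some $L''\in BD^2(X)$ without constant term. Applying Lemma \ref{L:Dwl1} three times therefore places $Lw$, $g_i$, and $D_ig_i$ all in $\ell_1(L_\infty)(X)$.

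The zeroth-order piece $\exp(-tP)M_{Lw}$ lies in $\mathcal{L}_1$ by Corollary \ref{C:Cwikel} at $p=1$. For each first-order piece, use the Leibniz identity $M_{g_i}D_i = D_iM_{g_i} - M_{D_ig_i}$; the corrective term $\exp(-tP)M_{D_ig_i}$ is again in $\mathcal{L}_1$ by Corollary \ref{C:Cwikel}. The remaining piece is treated by writing $\exp(-tP)D_i = D_i\exp(-tP) + [\exp(-tP),D_i]$. The easier summand $D_i\exp(-tP)M_{g_i}$ factors as $[D_i\exp(-tP/2)]\cdot[\exp(-tP/2)M_{g_i}]$: the second factor is in $\mathcal{L}_1$ by Corollary \ref{C:Cwikel}, and the first is bounded, because $D_i$ maps $L_2$ into $H_{-1}$ while $\exp(-tP/2)$ maps $H_{-1}$ back into $L_2$ by the same ellipticity-plus-functional-calculus argument used in Corollary \ref{C:Cwikel} (write $D_i \exp(-tP/2) = D_i(P+C)^{-N}\cdot (P+C)^N\exp(-tP/2)$ for large $N$).

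The main obstacle is controlling $[\exp(-tP),D_i]M_{g_i}$ in $\mathcal{L}_1$. The cleanest route is to prove a first-order analogue of the Cwikel estimate underlying Corollary \ref{C:Cwikel}: for any $D\in BD^1(X)$ with uniformly bounded coefficients and any $f\in \ell_p(L_\infty)$ with $0<p<2$, one has $\|M_f D(1-\Delta)^{-M}\|_p \lesssim \|f\|_{\ell_p(L_\infty)}$ once $M$ is large enough. The proof follows the same outline as the zeroth-order version in the paper: localize using the partition of unity $\{\psi_j\}$, transport each summand to the model ball via the partial isometries $V_j$, observe that $D(1-\Delta_{g^{x_j}}^D)^{-M}$ lies uniformly in $\mathcal{L}_{q,\infty}(L_2(B))$ for $q$ arbitrarily small (by Lemma \ref{l:uniformlaplacianbound} combined with the Weyl asymptotics for elliptic operators of order $2M-1$ on a bounded domain), and sum via Corollary \ref{disjointification_corollary}. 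Granted this first-order Cwikel bound, one writes $\exp(-tP)M_{g_i}D_i = [\exp(-tP)(1-\Delta)^M]\cdot [(1-\Delta)^{-M}M_{g_i}D_i]$: the first factor is bounded by the argument of Corollary \ref{C:Cwikel}, while the second belongs to $\mathcal{L}_1$ by the extended Cwikel estimate (pass to the adjoint and apply the identity $D_iM_{g_i} = M_{g_i}D_i + M_{D_ig_i}$). Summing over $i$ and combining with the zeroth-order piece yields $\exp(-tP)[P,M_w]\in\mathcal{L}_1$.
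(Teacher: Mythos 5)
Your proposal is essentially the paper's proof, repackaged. Both you and the paper start by replacing $P$ with the constant-free part, localize against the partition of unity, apply Leibniz to split the commutator into a zeroth-order multiplication piece and first-order pieces of the form $M_g D_i$, use Lemma~\ref{L:Dwl1} to get $\ell_1(L_\infty)$ control of the relevant weights, and finally pass from a negative power of $1-\Delta$ to $\exp(-tP)$ via ellipticity and functional calculus. The genuinely different thing you do is isolate the first-order piece as a stand-alone ``first-order Cwikel lemma'' $\|M_f D(1-\Delta)^{-M}\|_p\lesssim\|f\|_{\ell_p(L_\infty)}$ rather than estimating $(1-\Delta)^{-M-1}[P,M_w]M_{\psi_j}$ inline as the paper does; this is a cleaner abstraction and is a valid strategy. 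However, two remarks. First, you propose to prove the first-order lemma by repeating the full localization/transport argument of the zeroth-order one, which is unnecessary work: it follows immediately from the zeroth-order estimate by factoring $M_f D(1-\Delta)^{-M}=\big[M_f(1-\Delta)^{-M+1}\big]\big[(1-\Delta)^{M-1}D(1-\Delta)^{-M}\big]$, whose second factor is bounded on $L_2$ by the Sobolev mapping properties of uniform differential operators (Kordyukov). Second, the display $[L,M_w]=M_{Lw}+\sum_i M_{g_i}D_i$ is not a globally meaningful identity on a manifold, since the $D_i$ and the functions $g_i$, $D_ig_i$ are only defined chart by chart in normal coordinates; you need to insert the partition of unity $\psi_j$ (as the paper does) before applying this decomposition, after which everything you wrote goes through.

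A smaller stylistic point: the middle part of your argument --- splitting $M_{g_i}D_i=D_iM_{g_i}-M_{D_ig_i}$, then $\exp(-tP)D_i=D_i\exp(-tP)+[\exp(-tP),D_i]$ --- is a detour that you ultimately abandon when you introduce the first-order Cwikel estimate and apply it directly to $\exp(-tP)M_{g_i}D_i$. You should delete that detour; it identifies $[\exp(-tP),D_i]M_{g_i}$ as a ``main obstacle'' that your final argument never actually needs to control.
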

\begin{proof}
Take $P \in EBD^2(X).$ Then, using the expression $P = \sum_{|\alpha|\leq 2} a_{\alpha,x_j}(y)\partial_y^{\alpha}$ in the coordinate chart $B_j$, we have
\begin{align*}
&\big\|(1-\Delta)^{-M-1}[P,M_w] M_{\psi_j}\big\|_1 \\
&\qquad{}
  \leq \bigg\|(1-\Delta)^{-M-1}\sum_{|\alpha|\leq 2} \big[M_{a_{\alpha,x_j}} \partial^\alpha,M_w\big] M_{\psi_j}\bigg\|_1\\
&\qquad {}\leq \big\|(1-\Delta)^{-M-1} M_{\psi_j} M_{D w}\big\|_1\\
&\qquad\quad{} + \bigg\|(1-\Delta)^{-M-1}\sum_{\substack{|\alpha|=2\\\alpha = \beta + \gamma\\|\beta|=|\gamma|= 1}} M_{a_{\alpha,x_j}} \big(\partial^\gamma M_{\partial^\beta w}+\partial^\beta M_{\partial^\gamma w}\big) M_{\psi_j}\bigg\|_1\\
&\qquad {}\leq \big\|(1-\Delta)^{-M-1} M_{\psi_j} M_{D w}\big\|_1\\
&\qquad\quad{} + \bigg\|(1-\Delta)^{-M-1}\sum_{|\beta|=1}\sum_{|\gamma|=1} M_{a_{\beta+\gamma,x_j}} \partial^\gamma M_{\partial^\beta w} M_{\psi_j}\bigg\|_1\\
&\qquad\quad{} + \bigg\|(1-\Delta)^{-M-1}\sum_{|\beta|=1} M_{a_{2\beta,x_j}} \partial^\beta M_{\partial^\beta w} M_{\psi_j}\bigg\|_1\\
&\qquad {}=: I + II + III,
\end{align*}
where we have denoted $D$ for the differential operator given near $x_j$ by
\[
 D = \sum_{0<|\alpha|\leq 2} a_{\alpha,x_j}(y)\partial_y^{\alpha}.
\]
That is, $D$ is equal to $P$ without constant terms. Since $\psi_j$ is by definition supported in $B_j$, we have
\begin{align*}
I &= \big\|(1-\Delta)^{-M-1} M_{\psi_j} M_{D w}\big\|_1\\
&= \big\|(1-\Delta)^{-M-1} M_{\psi_j} M_{\chi_{B_j}}M_{D w}\big\|_1\\
&\leq \big\| (1-\Delta)^{-1}\big\|_\infty \big\|(1-\Delta)^{-M} M_{\psi_j}\big\|_{1} \| M_{\chi_{B_j}}M_{D w} \|_\infty\\
&=\big\| (1-\Delta)^{-1}\big\|_\infty \big\|(1-\Delta)^{-M} M_{\psi_j}\big\|_{1} \| Dw \|_{L_\infty (B_j)}.
\end{align*}
By Lemma~\ref{L:Dwl1}, $\big\{\| Dw \|_{L_\infty (B_j)} \big\}_{j\in \mathbb{N}}\in \ell_1$,
\begin{align*}
II &= \bigg\|(1-\Delta)^{-M-1}\sum_{|\beta|=1}\sum_{|\gamma|=1} M_{a_{\beta+\gamma,x}} \partial^\gamma M_{\partial^\beta w} M_{\psi_j}\bigg\|_1\\
&\leq \bigg\|\sum_{|\beta|=1}\sum_{|\gamma|=1} (1-\Delta)^{-M-1} [M_{a_{\beta+\gamma,x}}, \partial^\gamma] M_{\chi_{B_j}} M_{\partial^\beta w} M_{\psi_j}\bigg\|_1\\
& \quad + \bigg\|\sum_{|\beta|=1}\sum_{|\gamma|=1} (1-\Delta)^{-M-1} \partial^\gamma M_{\chi_{B_j}}M_{a_{\beta+\gamma,x}} M_{\partial^\beta w} M_{\psi_j}\bigg\|_1\\
&= \bigg\|\sum_{|\beta|=1}\sum_{|\gamma|=1} (1-\Delta)^{-M-1} M_{\partial^\gamma a_{\beta+\gamma,x}} M_{\chi_{B_j}} M_{\partial^\beta w} M_{\psi_j}\bigg\|_1\\
& \quad + \bigg\|\sum_{|\beta|=1}\sum_{|\gamma|=1} (1-\Delta)^{-M-1} \partial^\gamma M_{\chi_{B_j}}M_{a_{\beta+\gamma,x}} M_{\partial^\beta w} M_{\psi_j}\bigg\|_1\\
&\leq \sum_{|\beta|=1}\sum_{|\gamma|=1} \big\| (1-\Delta)^{-1} \partial^\gamma M_{\chi_{B_j}} \big\|_{\infty} \big\| (1-\Delta)^{-M}M_{\psi_j} \big\|_1 \| a_{\beta+\gamma,x}\|_{L_\infty (B_j)} \big\| \partial^\beta w\big\|_{L_\infty(B_j)}\\
& \quad + \sum_{|\beta|=1}\sum_{|\gamma|=1} \big\|(1-\Delta)^{-1}\big\|_\infty \big\| (1-\Delta)^{-M} M_{\psi_j}\big\|_{1} \big\|\partial^\gamma a_{\beta+\gamma,x}\big\|_{L_\infty (B_j)} \big\| \partial^\beta w \big\|_{L_\infty (B_j)}.
\end{align*}
Suppose $y$ is in the normal neighbourhood $(U_x,\phi_x)$ of $x$. Then, $D$ being a local operator, at $y$ (taking $\tilde{y} = \phi_x(y)$), there are two different expressions of $D$ in normal coordinates:
\[
D = \sum_{|\alpha| \leq 2} a_{\alpha, x}(\tilde{y}) \partial_x^\alpha = \sum_{|\beta|\leq 2} a_{\beta, y}(0) \partial_y^\beta.
\]
Hence, we can express $a_{\alpha, x}(\tilde{y})$ in terms of $a_{\beta, y}(0)$ and transition functions $\frac{\partial^\alpha y}{\partial x^\alpha}$. These are uniformly bounded by the definition of a differential operator with bounded coefficients and \cite[Proposition~1.3]{Kordyukov1991}. Therefore, $\|a_{\beta+\gamma,x}\|_{L_\infty (B_j)} $ and also $\|\partial^\gamma a_{\beta+\gamma,x}\|_{L_\infty (B_j)} $ are uniformly bounded in $j$. Likewise (taking $|\beta| = 1$),
\[
\big|\partial^\beta_x w (y)\big| = \bigg|\sum_{|\alpha|=1} \frac{\partial^\alpha y}{\partial x^\alpha} \partial^\alpha_y w(y) \bigg|
\leq C_d \| \nabla r\| \big|\tilde{w}'(y)\big|,
\]
and hence $\big\{\big\| \partial^\beta w \big\|_{L_\infty(B_j)}\big\}_{j \in \mathbb{N}} \in \ell_1$ by the arguments in the proof of Lemma~\ref{L:Dwl1}.

Finally,
\begin{align*}
&\big\| (1-\Delta)^{-1} \partial^\gamma M_{\chi_{B_j}} \big\|_{\infty}\leq \sup_{ |\alpha|=1} \bigg\|\frac{\partial^\alpha y}{\partial x^\alpha}\bigg\|_{L_{\infty}(B_j)} \big\| (1-\Delta)^{-1} |\nabla| \big\|_\infty < \infty.
\end{align*}
The same estimates hold for $III$.

Combining everything, we have
\begin{align*}
\big\|(1-\Delta)^{-M-1}[P,M_w]\big\|_1 &\leq \sum_{j \in \mathbb{N}} \big\|(1-\Delta)^{-M-1}[D,M_w] M_{\psi_j}\big\|_1 < \infty.
\end{align*}
By the arguments in the proof of Corollary~\ref{C:Cwikel}, this implies that also $\exp(-tP)[P,M_w] \in \mathcal{L}_1$.
\end{proof}

Gathering all results in this section, let $X$ be a non-compact manifold of bounded geometry with Property~(D). Let $P\in EBD^2(X)$ be self-adjoint and lower-bounded. Then $\exp(-tP)M_w \in \mathcal{L}_{1,\infty}$ by the Cwikel estimate in Corollary~\ref{C:Cwikel} and Lemma~\ref{wl1infty}. Corollary~\ref{C:CwikelL1} states that $\exp(-tP)[P,M_w] \in \mathcal{L}_1$. Theorem~\ref{main_theorem} then gives that
 \[
 \mathrm{Tr}_{\omega}\big({\rm e}^{-tP}M_w\big) = \lim_{\varepsilon\to 0} \varepsilon \mathrm{Tr}\big({\rm e}^{-tP}\chi_{[\varepsilon,\infty)}(M_w)\big)
 \]
if the limit on the right-hand side exists. If we assume that $P$ admits a density of states, we do in fact get the existence of the limit
\begin{align*}
&\int_{\mathbb{R}} {\rm e}^{-t\lambda}\, {\rm d}\nu_P(\lambda)=\lim_{R\to\infty} \frac{1}{|B(x_0,R)|}\mathrm{Tr}\big({\rm e}^{-tP}M_{\chi_{B(x_0,R)}}\big)
 = \lim_{\varepsilon\to 0} \varepsilon \mathrm{Tr}\big({\rm e}^{-tP}\chi_{[\varepsilon,\infty)}(M_w)\big).
\end{align*}

Note that the above calculation assumes that the volume of $X$ is infinite, which is equivalent with $X$ being non-compact (see Remark~\ref{R: metric balls lower bound}).

Hence,
\[
\mathrm{Tr}_{\omega}\big({\rm e}^{-tP}M_w\big) = \int_{\mathbb{R}} {\rm e}^{-t\lambda}\, {\rm d}\nu_P(\lambda),\qquad t>0.
\]
From this, we can easily deduce by a density argument that
\[
\mathrm{Tr}_{\omega}(f(P)M_w) = \int_{\mathbb{R}} f(\lambda)\, {\rm d}\nu_P(\lambda), \qquad f \in C_c(\mathbb{R}).
\]
For details on the required density argument, see~\cite[Remark 6.3]{AMSZ}.
This concludes the proof of Theorem~\ref{T: main manifold thm}.

\section{Roe's index theorem}\label{S:Roe}

The results of the preceding section were stated for operators $P$ acting on scalar valued functions on $X.$ Identical results, with the same proofs, apply to operators acting between sections of vector bundles of bounded geometry. In the terminology of Shubin~\cite{Shubin1992}, a rank $N$ vector bundle $\pi\colon S\to X$ is said to have bounded geometry if in every coordinate chart $\{B(x_j,r_0)\}_{j=0}^\infty$ (as defined in Section~\ref{S: Manifolds}) $E$ has a trivialisation \[
 \pi^{-1}(B(x_j,r_0)) \approx B(x_j,r_0)\times \mathbb{R}^N
\]
such that the transition functions
\[
 t_{j,k}\colon \ B(x_j,r_0)\times \mathbb{R}^N\cap B(x_k,r_0)\times \mathbb{R}^N \to B(x_j,r_0)\times \mathbb{R}^N\cap B(x_k,r_0)\times \mathbb{R}^N
\]
have uniformly bounded derivatives in the exponential normal coordinates around $x_j$ or $x_k.$ See also Eichhorn~\cite[p.~65]{Eichhorn2008}.

For our purposes, we will assume that $S\to X$ is equipped with a Hermitian metric $h,$ which is assumed to be a $C^\infty$-bounded section of the bundle $\overline{S}\otimes S$ in the terminology of Shubin~\mbox{\cite[Appendix 1]{Shubin1992}}. We denote $L_2(X,S)$ for the Hilbert space of square integrable sections of $S$ with respect to the volume form of $X$ and the Hermitian metric $h.$

Shubin defines elliptic differential operators
acting on sections of vector bundles of bounded geometry.
Given a vector bundle $S$ of bounded geometry, define $EBD^m(X,S)$ as the space of differential operators $P$ such that in the exponential normal coordinates $y$ around $x\in X$, we have
\[
 P = \sum_{|\alpha|\leq m} a_{\alpha,x}(y)\partial_y^{\alpha},
\]
where $a_{\alpha,x}(y)$ are $N\times N$ matrices, identified with sections of $\mathrm{End}(S)$ in the local trivialisation of $S$ and a synchronous frame, and
\[
 \big\|\partial_y^{\beta} a_{\alpha,x}(0)\big\| \leq C_{\alpha,\beta},\qquad |\alpha|\leq m,
\]
where $\|\cdot\|$ is the norm on the fibre $\mathrm{End}(S)_x$ defined by the Hermitian metric $h.$

Given such an operator $P \in EBD^2(X,S)$, we say by analogy with the scalar-valued case that~$P$ has a density of states $\nu_P$ if for every $t$ there exists the limit
\[
 \lim_{R\to\infty} \frac{1}{|B(0,R)|}\mathrm{Tr}\big({\rm e}^{-tP}M_{\chi_{B(0,R)}}\big) = \int_{\mathbb{R}} {\rm e}^{-t\lambda}\,{\rm d}\nu_P(\lambda).
\]
Here, the trace is now with respect to the Hilbert space $L_2(X,S)$ rather than $L_2(X).$ A verbatim repetition of the proof of Theorem~\ref{T: main manifold thm} shows the following.
\begin{Theorem}\label{T: vector bundle dos thm}
 Let $S\to X$ be a vector bundle of bounded geometry over a non-compact Riemannian manifold of bounded geometry with Property~$($D$)$. If $P \in EBD^2(X,S)$ is a self-adjoint lower bounded operator having a density of states $\nu_P,$ then for $f\in C_c(\mathbb{R})$, we have
 \[
 \mathrm{Tr}_{\omega}(f(P)M_w) = \int_\mathbb{R} f(\lambda)\,{\rm d}\nu_P(\lambda),
 \]
 where $w(x) = (1+|B(x_0,d(x,x_0))|)^{-1}.$
 Similarly, for all $t>0$, we have
 \[
 \mathrm{Tr}_{\omega}(\exp(-tP)M_w) = \int_\mathbb{R} \exp(-t\lambda)\,{\rm d}\nu_P(\lambda).
 \]
\end{Theorem}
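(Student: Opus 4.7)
The plan is to follow the scalar-case proof verbatim, replacing $L_2(X)$ by $L_2(X,S)$ and carefully checking that each ingredient has a bundle-valued analogue that reduces to the scalar one via the local trivialisations afforded by the bounded geometry assumption on $S$.

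First I would extend the Cwikel-type estimate (Corollary~\ref{C:Cwikel}) to the bundle setting. Working inside a trivialising chart $B(x_j,r_0)\times \mathbb{R}^N$, sections of $S$ are identified with $\mathbb{C}^N$-valued functions, and the proof of Lemma~\ref{l:uniformlaplacianbound} only uses the quadratic-form comparison with the flat Dirichlet Laplacian; this is unaffected by taking $N$ copies, so the uniform bound $\sup_j \|(1-\Delta_{g^{x_j}}^D)^{-1}\|_{\mathcal{L}_{d/2,\infty}}<\infty$ persists, with $\Delta$ replaced by any fixed bounded-geometry Laplace-type operator on $S$ (e.g., a connection Laplacian). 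The commutator argument establishing $\sup_j\|M_{\psi_j}(1-\Delta)^{-M}\|_{q,\infty}<\infty$ transfers because $[(1-\Delta)^M,M_{\psi_j}]$ remains a uniform differential operator of order $2M-1$ with coefficients uniform in $j$, now acting on $L_2(X,S)$; Shubin's bundle-valued Sobolev theory (see \cite{Shubin1992}) provides the analogues of Kordyukov's mapping properties. Combined with $w\in\ell_{1,\infty}(L_\infty)$ from Lemma~\ref{wl1infty} (which is purely geometric and hence unchanged), and the fact that $\exp(-tP)(P+C)^N$ is bounded by the functional calculus for any $N$, we obtain $\exp(-tP)M_w\in\mathcal{L}_{1,\infty}$.

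Next I would verify that $\exp(-tP)[P,M_w]\in\mathcal{L}_1$. Since $w$ is a scalar function and $P\in EBD^2(X,S)$ is expressed locally as $\sum_{|\alpha|\leq 2}a_{\alpha,x}(y)\partial_y^\alpha$ with matrix-valued coefficients having uniformly bounded derivatives, the pointwise commutator $[P,M_w]$ only picks up derivatives of $w$ (the multiplication by $w$ commutes with the endomorphism-valued coefficients). Consequently Lemma~\ref{L:Dwl1} applies unchanged: $Lw\in\ell_1(L_\infty)$ for the scalar differential operator $L$ with no constant term which captures the action of $P$ on $w$, because the matrix norms of the $a_{\alpha,x}$ enter only as uniform scalar bounds. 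Running the computation of Corollary~\ref{C:CwikelL1} with $M_{\chi_{B_j}}$ inserted inside the bundle-valued trace estimates is then verbatim, yielding $(1-\Delta)^{-M-1}[P,M_w]\in\mathcal{L}_1$ and hence, by the same functional-calculus sandwiching as in Corollary~\ref{C:Cwikel}, $\exp(-tP)[P,M_w]\in\mathcal{L}_1$.

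Having the two hypotheses of Theorem~\ref{main_theorem}, I would conclude
\[
\mathrm{Tr}_\omega(\exp(-tP)M_w)=\lim_{\varepsilon\to 0}\varepsilon\,\mathrm{Tr}(\exp(-tP)\chi_{[\varepsilon,\infty)}(M_w))
\]
whenever the right-hand limit exists. Since $w$ is a monotone function of $d_X(x,x_0)$, $\chi_{[\varepsilon,\infty)}(M_w)=M_{\chi_{B(x_0,R(\varepsilon))}}$ for an explicit $R(\varepsilon)$ with $|B(x_0,R(\varepsilon))|\sim\varepsilon^{-1}$, so the existence of a density of states $\nu_P$ on $L_2(X,S)$ turns the right-hand side into $\int_\mathbb{R} e^{-t\lambda}\,d\nu_P(\lambda)$. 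Finally, a density argument as in \cite[Remark~6.3]{AMSZ}, starting from the exponentials $\{e^{-t\lambda}\}_{t>0}$ which separate points and exploiting the uniform $\mathcal{L}_{1,\infty}$ bound on $f(P)M_w$ for $f\in C_c(\mathbb{R})$, upgrades the identity to all compactly supported continuous $f$.

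The main obstacle will be the bookkeeping in step one: confirming that each of the bounded-geometry Sobolev and elliptic regularity statements of Kordyukov that we invoked in the scalar case (Theorem~3.9, Proposition~4.4 of \cite{Kordyukov1991}) admits a bundle-valued analogue with constants uniform in $j$, so that the disjointification machinery of Corollary~\ref{disjointification_corollary} still applies to the left-disjoint family $\{M_{\psi_j}\exp(-tP)\}_{j=0}^\infty$ acting on $L_2(X,S)$. Once this is in place, the remainder of the proof is a mechanical repetition.
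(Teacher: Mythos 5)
Your proposal matches the paper's own proof, which is simply stated as a verbatim repetition of the scalar-case argument for Theorem~\ref{T: main manifold thm}; your detailed tracking of the bundle-valued adaptations (in particular that $w$ is scalar so $[P,M_w]$ commutes with the matrix-valued coefficients $a_{\alpha,x}$ and only picks up derivatives of $w$, and that Lemma~\ref{l:uniformlaplacianbound} extends to $N$ copies unchanged) identifies precisely the points that need checking. One small slip in your final paragraph: the family $\{M_{\psi_j}\exp(-tP)\}_j$ is not itself left-disjoint since the supports of the $\psi_j$ overlap; as in the scalar case one first splits $\{\psi_j\}_j$ into $N+1$ disjointly-supported subfamilies and applies Corollary~\ref{disjointification_corollary} to $\{M_{\psi_j^{(k)}f}(1-\Delta)^{-M}\}_j$ for each $k$.
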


In~\cite{Roe1988a}, Roe considers (orientable) manifolds of bounded geometry that have a regular exhaustion. In this section, we will only consider manifolds that satisfy the assumptions of Theorem~\ref{T: main manifold thm}. In particular, the assumptions imply that $\lim_{R\to \infty}\frac{|\partial B(x_0, R)|}{|B(x_0, R)|} = 0$, i.e., any increasing sequence of metric balls $\{B(x_0, R_i)\}_{i=0}^\infty$ where $R\to\infty$ forms a regular exhaustion.

Denote the Banach space of $C^1$ uniformly bounded $n$-forms on $X$ by $\Omega^n_{\beta}(X)$. An element $m$ in the dual space of $\Omega^n_{\beta}(X)$ is said to be associated to the regular exhaustion $\{B(x_0, R_i)\}$ if for each bounded $n$-form~$\alpha$ \[\liminf_{i\to \infty} \bigg| \langle \alpha, m \rangle - \frac{1}{|B(x_0, R_i)|} \int_{B(x_0, R_i)} \alpha\bigg| = 0.\]
The algebra $\mathcal{U}_{-\infty}(X)$ consists of operators $A\colon C_c^\infty(X) \to C_c^\infty(X)$ such that for each $s,k\in \mathbb{R},$ $A$ has a continuous extension to a quasilocal operator from $H^s(X) \to H^{s-k}(X)$. In Roe's terminology, an operator $A\colon H^s(X) \to H^{s-k}(X)$ is quasilocal if for each $K \subset X$ and each $u\in H^k(X)$ supported within $K$,
\[
\| Au\|_{H^{s-k}(X\setminus \text{Pen}^+(K,r))} \leq \mu(r) \| u \|_{H^s(X)},
\]
where $\mu\colon \mathbb{R}^+ \to \mathbb{R}^+$ is a function such that $\mu(r) \to 0$ as $r \to \infty$, and $\text{Pen}^+(K,r)$ is the closure of $\cup \{B(x,r) \colon x\in K\}$.

Operators $A \in \mathcal{U}_{-\infty}(X)$ are represented by uniformly bounded smoothing kernels, i.e.,
\[
Au(x) = \int k_A(x,y)u(y) \text{vol}(y).
\]
Roe then defines traces on $\mathcal{U}_{-\infty}(X)$ coming from functionals associated to our regular exhaustion by{\samepage
\[
\tau(A) = \langle \alpha_A, m\rangle,
\]
where $\alpha_A$ is the bounded $n$-form defined by $x\to k_A(x,x)\text{vol}(x)$.}

Recall from the introduction that the trace $\tau$ extends to a trace on $M_n\big(\mathcal{U}_{-\infty}^+\big)$, and hence descends to a dimension-homomorphism
\[
\dim_\tau\colon \ K_0(\mathcal{U}_{-\infty}) \to \mathbb{R}.
\]
Furthermore, since Roe showed that elliptic differential operators are invertible modulo $\mathcal{U}_{-\infty}$~\cite{Roe1988a}, one can define an abstract index of an elliptic differential operator acting on sections of a Clifford bundle as an element of $K_0(\mathcal{U}_{-\infty})$ via standard $K$-theory constructions. The Roe index theorem then states the following.
\begin{Theorem}[Roe index theorem]
Let $X$ be a Riemannian manifold, $S$ a graded Clifford bundle on $X$, both with bounded geometry. Let $D$ be the Dirac operator of $S$. Let $m$ and $\tau$ be defined as above. Then
\[
 \dim_\tau (\operatorname{Ind}(D)) = m(\mathbf{I}(D)),
\]
where $\mathbf{I}(D)$ is the integrand in the Atiyah--Singer index theorem.
\end{Theorem}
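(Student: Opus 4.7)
The plan is to combine the heat-kernel McKean--Singer formula with the local index theorem of Atiyah--Singer--Patodi--Gilkey, using the averaging functional $m$ in place of the integral over a compact manifold.

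First, one verifies that for every $t>0$ the operator $\exp(-tD^2)$ lies in $\mathcal{U}_{-\infty}(X,S)$: bounded geometry together with standard Gaussian off-diagonal estimates for Dirac-type heat kernels provide both the required smoothness of the Schwartz kernel and the quasi-locality in Roe's sense. Since $\operatorname{Ind}(D)$ is a well-defined class in $K_0(\mathcal{U}_{-\infty})$ and $\tau$ is a trace, a standard McKean--Singer argument (invariance under homotopy of $D$ together with the trace property of $\tau$) yields
\[
\dim_\tau\bigl(\operatorname{Ind}(D)\bigr) = \tau\bigl(\eta\exp(-tD^2)\bigr),\qquad t>0.
\]

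Next, unwind $\tau$ via its defining formula: if $k_t(x,y)$ denotes the Schwartz kernel of $\exp(-tD^2)$ and $\mathrm{str}$ the pointwise supertrace on $\mathrm{End}(S)_x$ induced by $\eta$, set
\[
\alpha_t(x) := \mathrm{str}\bigl(\eta\,k_t(x,x)\bigr)\,\mathrm{vol}(x).
\]
Bounded geometry, combined with standard uniform parabolic estimates, guarantees $\alpha_t\in\Omega^n_\beta(X)$ for every fixed $t>0$, and by the definition of $\tau$ we have $\tau\bigl(\eta\exp(-tD^2)\bigr) = \langle\alpha_t,m\rangle$.

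The next key step is the classical local index theorem, enhanced by uniformity: as $t\downarrow 0$, the $n$-form $\alpha_t$ converges to the Atiyah--Singer integrand $\mathbf{I}(D)$, and bounded geometry makes this convergence uniform in $x\in X$ with uniform $C^1$-control. Standard Getzler scaling performed in a synchronous frame at each $x$, with all error terms controlled by the uniform curvature bounds and uniform bounds on the connection coefficients, gives the uniformity. Since $\dim_\tau(\operatorname{Ind}(D))$ does not depend on $t$ and $m$ is continuous on $\Omega^n_\beta(X)$, passing to the limit
\[
\dim_\tau\bigl(\operatorname{Ind}(D)\bigr) = \lim_{t\downarrow 0}\langle\alpha_t,m\rangle = \langle\mathbf{I}(D),m\rangle = m\bigl(\mathbf{I}(D)\bigr)
\]
completes the proof.

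The principal obstacle is exactly the uniformity in the small-time heat kernel asymptotics: on a compact manifold one only needs pointwise convergence and dominated convergence, whereas here every constant in Getzler's expansion must be shown to depend only on the $C^\infty$-bounds of the geometry of $X$ and $S$, independent of the base point $x$. As a byproduct, combining this identity with the vector-bundle version of Theorem~\ref{T: vector bundle dos thm} yields the Dixmier trace reformulation advertised in the introduction, namely $\dim_\tau(\operatorname{Ind}(D)) = \mathrm{Tr}_\omega(\eta\exp(-tD^2)M_w)$ whenever $D^2$ admits a density of states.
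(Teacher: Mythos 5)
This theorem is not proved in the paper at all: the authors state it as background and cite Roe~\cite{Roe1988a} for the proof, so there is no ``paper's own proof'' to compare against. What you have written is a faithful reconstruction of the argument Roe gives in the original paper: (i) show that $\exp(-tD^2)$ lies in $\mathcal{U}_{-\infty}$; (ii) establish the McKean--Singer formula $\dim_\tau(\operatorname{Ind}(D)) = \tau\big(\eta\exp\bigl(-tD^2\bigr)\big)$ (this is precisely Roe's Proposition~8.1, which the present paper quotes and uses); (iii) express $\tau\big(\eta\exp\bigl(-tD^2\bigr)\big)$ as $\langle\alpha_t, m\rangle$ via the fiberwise supertrace of the heat kernel on the diagonal; and (iv) take $t\downarrow 0$ using a uniform version of the local index theorem, pushing the limit through $m$ by continuity. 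You have also correctly identified the point where the open-manifold setting differs from the compact case, namely that the Getzler-rescaling/parametrix error terms must be controlled uniformly in $x\in X$, and that bounded geometry is exactly the hypothesis that buys this uniformity. So your outline is sound and agrees with the standard proof.

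Two small remarks. First, a notational slip: if $\mathrm{str}$ already denotes the pointwise supertrace on $\mathrm{End}(S)_x$ associated to the grading $\eta$, then the relevant density is $\mathrm{str}\big(k_t(x,x)\big)\,\mathrm{vol}(x)$ or, equivalently, $\mathrm{tr}\big(\eta\, k_t(x,x)\big)\,\mathrm{vol}(x)$; writing $\mathrm{str}\big(\eta\, k_t(x,x)\big)$ would insert $\eta$ twice. Second, the phrase ``invariance under homotopy of $D$ together with the trace property of $\tau$'' compresses the nontrivial content of Roe's Proposition~8.1; in a full write-up you would need to justify why $\tau$ vanishes on graded commutators arising in the McKean--Singer argument and why the relevant path of idempotents stays inside $M_n\big(\mathcal{U}_{-\infty}^{+}\big)$. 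Neither of these is a gap in the plan, only points that would need spelling out.
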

The basis for the proof of the theorem is the following McKean--Singer formula
\[
 \dim_{\tau}(\operatorname{Ind}(D)) = \tau\big(\eta {\rm e}^{-tD^2}\big),
\]
where $\eta$ is the grading operator on $S,$ and $\tau$ is now defined on operators on acting on sections of $S$.

The following lemma relates Roe's $\tau$ functional to the density of states. We will use the fact that if $P$ is elliptic and self-adjoint, then the mapping
\[
 f\mapsto \tau(f(P))
\]
is continuous on $f \in C_c(\mathbb{R}).$ Indeed, by the Sobolev inequality the uniform norm of the integral kernel of $f(P)$ is bounded above by the norm of $f(P)$ as an operator from a Sobolev space of sufficiently negative smoothness to a Sobolev space of sufficiently positive smoothness. Since~$f$ is compactly supported, by functional calculus, the operator $f(P)(1+P)^N$ is bounded on~$L_2(X,S)$ for every $N,$ with norm depending on the width of the support of $f$ and the uniform norm of $f.$ Since $P$ is elliptic, it follows from these arguments that if $f$
is supported in $[-K,K]$ then there is a constant $C_K$ such that
\[
 |\tau(f(P))| \leq C_K\|f\|_{\infty}.
\]
See the related arguments in~\cite[Propositions 2.9 and~2.10]{Roe1988a}.

\begin{Lemma}\label{L: tau is dos}
 Let $S\to X$ be a vector bundle of bounded geometry,
 and let $P \in EBD^m(X,S)$ for some $m>0.$
 Assume that $P$ has a density of states $\nu_P$
 with respect to the base-point $x_0\in X.$ If $\tau$ is associated to the exhaustion $\{B(x_0,R_i)\}_{i=0}^\infty$ for some sequence $R_i\to\infty,$ then
 \[
 \tau(f(P)) = \int_{\mathbb{R}} f\,{\rm d}\nu_P,\qquad f \in C_c(\mathbb{R}).
 \]
 Similarly,
 \[
 \tau(\exp(-tP)) = \int_{\mathbb{R}} \exp(-t\lambda)\,{\rm d}\nu_P(\lambda),\qquad t > 0.
 \]
\end{Lemma}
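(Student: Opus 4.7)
The plan is to reduce the statement to an application of the defining property of the trace $\tau$ and the functional $m$, combined with the density of states hypothesis. First, I would take $f \in C_c^\infty(\mathbb{R})$. Because $P \in EBD^m(X,S)$ is elliptic of bounded geometry and $f$ is a Schwartz function, standard functional calculus arguments (as sketched immediately before the lemma, using Kordyukov's boundedness of $f(P)(1+P)^N$ between Sobolev spaces of arbitrary order) show that $f(P) \in \mathcal{U}_{-\infty}(X,S)$, with a jointly continuous, uniformly bounded Schwartz kernel $k_{f(P)}$. By the definition of $\tau$ given above, we then have
\[
\tau(f(P)) = \langle \alpha_{f(P)},m\rangle,\qquad \alpha_{f(P)}(x) = \mathrm{tr}_{S_x} k_{f(P)}(x,x)\,\mathrm{vol}(x),
\]
and $\alpha_{f(P)}$ is a bounded $d$-form.

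The second step is to identify the averages that $m$ sees. Since $k_{f(P)}$ is continuous on the diagonal and $f(P)$ is trace class on any relatively compact set, the obvious identity
\[
\int_{B(x_0,R_i)} \mathrm{tr}_{S_x} k_{f(P)}(x,x)\,d\mathrm{vol}(x) = \mathrm{Tr}\bigl(f(P)M_{\chi_{B(x_0,R_i)}}\bigr)
\]
holds. Dividing by $|B(x_0,R_i)|$ and using the assumed existence of the density of states $\nu_P$ (via the definition involving $e^{-tP}$, extended to $C_c(\mathbb{R})$ in the usual way from the introduction), the right-hand side converges to $\int_\mathbb{R} f\,d\nu_P$ as $R_i \to \infty$. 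The fact that $m$ is associated to the exhaustion $\{B(x_0,R_i)\}$ means that $\langle \alpha_{f(P)},m\rangle$ is a subsequential limit of these averages, and since the full limit exists and equals $\int f\,d\nu_P$, every such subsequential limit is the same. Hence $\tau(f(P)) = \int_\mathbb{R} f\,d\nu_P$ for all $f \in C_c^\infty(\mathbb{R})$.

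To upgrade to $f \in C_c(\mathbb{R})$, I would use the continuity bound $|\tau(f(P))| \leq C_K \|f\|_\infty$ for $f$ supported in $[-K,K]$ recorded just before the lemma, together with the fact that the Borel measure $\nu_P$ defines a continuous linear functional on $C_c(\mathbb{R})$ (uniform norm on functions with support in a fixed compact set). Since $C_c^\infty(\mathbb{R})$ is dense in $C_c(\mathbb{R})$ in this topology (standard mollification), the identity extends. For the second statement about $\exp(-tP)$, I would approximate $\lambda \mapsto e^{-t\lambda}$ by a sequence $f_n \in C_c(\mathbb{R})$ equal to $e^{-t\lambda}$ on $[-n,n]$; since $P$ admits a DOS one verifies $\exp(-tP) \in \mathcal{U}_{-\infty}$ (using that $\exp(-tP)$ is smoothing and quasilocal via the heat-kernel estimates of bounded geometry, as $P$ is lower bounded so that the Laplace transform makes sense), and both sides of the identity are continuous under this cutoff procedure (the tails of $\nu_P$ decay faster than any power by the heat-trace condition).

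The main obstacle is the verification that the diagonal Schwartz kernel of $f(P)$ is smooth, uniformly bounded, and interacts with $\tau$ via the localised trace formula above; this is a standard but nontrivial consequence of elliptic regularity in the bounded geometry setting, and it is what makes the $\liminf$ in the definition of $m$ collapse into an actual limit via the DOS assumption. Everything else (continuity, density, exponential approximation) is routine once this kernel-trace correspondence is in place.
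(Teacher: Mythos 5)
Your proof is correct and rests on the same core observations that the paper uses: the kernel--trace identity $\int_{B}\mathrm{tr}_{S_x}\,k_T(x,x)\,\mathrm{vol}(x)=\mathrm{Tr}(TM_{\chi_B})$, and the fact that the $\liminf$ in the definition of ``$m$ is associated to the exhaustion'' collapses to the genuine limit once the DOS guarantees that limit exists. The difference lies in the order of extension. The paper proves the heat-kernel case $\tau(\exp(-tP))=\int {\rm e}^{-t\lambda}\,{\rm d}\nu_P$ first, using only the smoothing properties of the heat semigroup, and then deduces the $C_c(\mathbb{R})$ case abstractly: the continuity bound $|\tau(f(P))|\le C_K\|f\|_\infty$ gives a Radon measure $\mu_{\tau,P}$ with $\tau(f(P))=\int f\,{\rm d}\mu_{\tau,P}$ (Riesz), and uniqueness of Laplace transforms identifies $\mu_{\tau,P}=\nu_P$. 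You instead prove the $C_c^\infty$ case directly by verifying that $f(P)$ is uniformly smoothing with bounded diagonal kernel, then extend to $C_c$ by density, and finally recover the heat-kernel formula by truncating $\lambda\mapsto {\rm e}^{-t\lambda}$ to compact support. Both routes work; the paper's Laplace-uniqueness argument is shorter because the kernel mapping properties need only be checked for ${\rm e}^{-tP}$, whereas your approach requires the (standard but not entirely free) verification that $f(P)\in\mathcal{U}_{-\infty}$ for $f\in C_c^\infty$ and a separate cutoff estimate for the exponential tail; on the other hand, your route is more concrete and avoids the mild abstraction of matching two measures by their Laplace transforms. One small point of care: the continuity bound in the paper is quoted with a constant $C_K$ that grows with the support $[-K,K]$, so your cutoff argument for $\exp(-tP)$ should be phrased as a factorisation through a partial heat kernel (e.g.\ writing ${\rm e}^{-tP}-f_n(P) = {\rm e}^{-tP/4}h_n(P){\rm e}^{-tP/4}$ with $\|h_n\|_\infty\to 0$), rather than applying the $C_K$ bound directly to increasingly wide supports.
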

\begin{proof}
By Theorem~\ref{T: vector bundle dos thm}, we have
\begin{align*}
\mathrm{Tr}_\omega (\exp(-tP) M_w)&= \int_{\mathbb{R}} {\rm e}^{-t\lambda} \,{\rm d}\nu_P(\lambda)\\
&= \lim_{R\to \infty} \frac{1}{|B(x_0, R)|} \mathrm{Tr}(\exp(-tP)M_{\chi_{B(x_0,R)}})\\
&= \lim_{R\to \infty} \frac{1}{|B(x_0, R)|} \int_{B(x_0,R)}\mathrm{tr}_{\mathrm{End}(S_x)}(K_{\exp(-tP)}(x,x)) \text{vol}(x)\\
&= \tau(\exp(-tP)).
\end{align*}
Since
\[
 f\mapsto \tau(f(P)),\qquad f \in C_c(\mathbb{R})
\]
is continuous in the sense described in the paragraph preceding the theorem, it follows from the Riesz theorem that there exists a measure $\mu_{\tau,P}$ on $\mathbb{R}$ such that
\[
 \tau(f(P)) = \int_{\mathbb{R}} f\,{\rm d}\mu_{\tau,P},\qquad f \in C_c(\mathbb{R}).
\]
Since $\mu_{\tau,P}$ and $\nu_P$ have identical Laplace transform, it follows that $\mu_{\tau,P} = \nu_P.$
\end{proof}

A combination of Lemma~\ref{L: tau is dos} and Theorem~\ref{T: vector bundle dos thm} immediately yields the following:
\begin{Theorem}
Let $X$ be a manifold that satisfies the assumptions of Theorem~$\ref{T: main manifold thm}$, and let $S\to X$ be a vector bundle of bounded geometry. Let $P \in EBD^2(X,S)$, be self-adjoint and lower-bounded, and assume it admits a density of states $\nu_P$ at $x_0$. Let $w$ be the function on $X$ defined by
\[
w(x) = (1+\abs{B(x_0,d_X(x,x_0))})^{-1},\qquad x \in X.
\]
Then for any $f\in C_c(\mathbb{R})$, we have
\[
\tau(f(P)) = \mathrm{Tr}_\omega(f(P)M_w)
\]
for any $\tau$ associated to the regular exhaustion $\{B(x_0,R_i)\}_{i\in \mathbb{N}}$ where $R_i\to\infty,$ and for any extended limit $\omega.$
Similarly,
\[
 \tau(\exp(-tP)) = \mathrm{Tr}_\omega(\exp(-tP)M_w).
\]
\end{Theorem}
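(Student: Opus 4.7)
The theorem is an immediate consequence of combining the two results stated immediately before it in the excerpt, namely Theorem~\ref{T: vector bundle dos thm} and Lemma~\ref{L: tau is dos}. The plan is to observe that each side of the claimed identity has, by one of these results, been computed to equal $\int_{\mathbb{R}} f\,{\rm d}\nu_P$, so that equating the two evaluations gives the equality directly.

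Concretely, the hypotheses of the theorem match those of Theorem~\ref{T: vector bundle dos thm} verbatim (non-compact Riemannian manifold of bounded geometry with Property~(D), vector bundle of bounded geometry $S\to X$, and $P\in EBD^2(X,S)$ self-adjoint, lower-bounded, admitting a density of states $\nu_P$). That theorem therefore yields
\[
\mathrm{Tr}_\omega(f(P)M_w) = \int_{\mathbb{R}} f \, {\rm d}\nu_P, \qquad f\in C_c(\mathbb{R}),
\]
together with the analogous Laplace-transform identity for $\exp(-tP)$. Next, since $\tau$ is assumed to be associated to a regular exhaustion $\{B(x_0, R_i)\}_{i\in\mathbb{N}}$ with $R_i\to\infty$ centred at the same base-point $x_0$ defining $\nu_P$, Lemma~\ref{L: tau is dos} applies and gives
\[
\tau(f(P)) = \int_{\mathbb{R}} f\,{\rm d}\nu_P, \qquad f\in C_c(\mathbb{R}),
\]
and likewise for the heat operator. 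Comparing the two pairs of identities proves both formulas in the theorem.

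There is essentially no obstacle: the entire content of the theorem is the conceptual observation that Roe's averaged trace $\tau$ and the weighted Dixmier trace $\mathrm{Tr}_\omega(\,\cdot\,M_w)$ are two distinct operator-theoretic realisations of integration against the density of states $\nu_P$, and so must coincide whenever $\nu_P$ exists. The only compatibility point worth checking is that the base-point $x_0$ appearing implicitly in the weight $w$, in the exhaustion defining $\tau$, and in the definition of $\nu_P$ is the same across all three, which is already encoded in the statement of the theorem.
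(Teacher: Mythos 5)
Your proof is correct and is exactly the argument the paper intends: the paper introduces the theorem with the remark that it is ``a combination of Lemma~\ref{L: tau is dos} and Theorem~\ref{T: vector bundle dos thm},'' which is precisely the two-sided evaluation against $\int_{\mathbb{R}} f\,{\rm d}\nu_P$ that you spell out. Your additional observation about matching the base-point $x_0$ across $w$, the exhaustion, and $\nu_P$ is a sensible sanity check and consistent with the paper.
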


The preceding theorem is proved under the strong assumption that $P$ admits a density of states, which in particular implies that $\tau(\exp(-tP))$ is independent of the choice of functional~$m$ used to define $\tau.$ Addressing the question of determining which traces $\tau$ and which extended limits $\omega$ are related in this way in general is beyond the scope of this article.

Roe~\cite{Roe1988a} defines an algebra $\mathcal{U}_{-\infty}(E)$ of operators acting on sections of a vector bundle $E,$ and $\tau$ is extended to $\mathcal{U}_{-\infty}(E)$ essentially by composing $\tau$ with the pointwise trace on $\mathrm{End}(E),$ see~\cite{Roe1988a} for details.

\begin{Theorem}\label{roe_index_formula_theorem}
Let $X$ be a non-compact Riemannian manifold of bounded geometry with Property~$($D$)$, with a graded Clifford bundle $S \to X$ of bounded geometry. Let $D$ be a Dirac operator on $X$ associated with the Dirac complex
\[
C^\infty\big(S^+\big) \xrightarrow{D_+} C^\infty(S^-),
\]
where $D_+$ is the restriction of $D$ to the sections of $S^+$ and $D_- = D_+^*$ is its adjoint $($cf.~{\rm \cite[Chapter~11]{Roe1998})}.
 Let $D^2$ admit a density of states both when considered as an operator restricted to $L_2\big(S^+\big)$ and $L_2(S^-)$, in the sense that
 \[
 \lim_{R\to \infty} \frac{1}{|B(x_0,R)|} \mathrm{Tr}(\exp(-tD_-D_+)M_{\chi_{B(x_0,R)}}) = \int_{\mathbb{R}} {\rm e}^{-t\lambda} {\rm d}\nu_{D_-D_+}(\lambda), \qquad t>0,
 \]
 for a Borel measure $\nu_{D_-D_+}$ and similarly for $D_+D_-$. Then for any $f\in C_c(\mathbb{R})$ such that $f(0)=1$, we have
\[
\dim_\tau(\mathrm{Ind} D) = \mathrm{Tr}_\omega\big(\eta f\big(D^2\big)M_w\big).
\]
\end{Theorem}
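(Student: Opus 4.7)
The plan is to combine the Roe-setting McKean--Singer formula with Lemma~\ref{L: tau is dos} and Theorem~\ref{T: vector bundle dos thm} via a Laplace transform uniqueness argument, which upgrades the statement from the specific function $e^{-t\lambda}$ to any $f \in C_c(\mathbb{R})$ with $f(0)=1$.

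First, I would invoke the McKean--Singer formula stated just before the theorem, $\dim_\tau(\mathrm{Ind}(D)) = \tau(\eta e^{-tD^2})$, valid for all $t>0$. Writing $S = S^+ \oplus S^-$ with $\eta = \mathrm{diag}(1,-1)$ and noting that $D^2 = D_-D_+ \oplus D_+D_-$ is block diagonal, this becomes
\[
\dim_\tau(\mathrm{Ind}(D)) = \tau\bigl(e^{-tD_-D_+}\bigr) - \tau\bigl(e^{-tD_+D_-}\bigr),
\]
where the traces are computed on $L_2(S^+)$ and $L_2(S^-)$ respectively. Since $D_\pm D_\mp \in EBD^2(X,S^\pm)$ is self-adjoint, lower-bounded, and admits a density of states by hypothesis, Lemma~\ref{L: tau is dos} (applied in its vector bundle form, which is verbatim identical) gives $\tau(e^{-tD_\pm D_\mp}) = \int e^{-t\lambda}\,d\nu_{D_\pm D_\mp}(\lambda)$. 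Setting $c := \dim_\tau(\mathrm{Ind}(D))$, we obtain
\[
\int_{[0,\infty)} e^{-t\lambda}\,d(\nu_{D_-D_+} - \nu_{D_+D_-})(\lambda) = c, \qquad t>0.
\]

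The signed Radon measure $\mu := \nu_{D_-D_+} - \nu_{D_+D_-}$ is supported in $[0,\infty)$, since $D_-D_+ = D_+^*D_+\geq 0$ and similarly $D_+D_- \geq 0$. The displayed equation says the Laplace transform of $\mu$ agrees on $(0,\infty)$ with that of $c\delta_0$. By uniqueness of the Laplace transform for signed measures on $[0,\infty)$ with exponentially integrable variation, we conclude $\mu = c\delta_0$. Consequently, for any $f \in C_c(\mathbb{R})$ with $f(0)=1$,
\[
\int f\,d\mu = c\cdot f(0) = c = \dim_\tau(\mathrm{Ind}(D)).
\]
On the other hand, Theorem~\ref{T: vector bundle dos thm} applied separately to $D_-D_+$ and $D_+D_-$ gives
\[
\int f\,d\mu = \mathrm{Tr}_\omega\bigl(f(D_-D_+)M_w\bigr) - \mathrm{Tr}_\omega\bigl(f(D_+D_-)M_w\bigr) = \mathrm{Tr}_\omega\bigl(\eta f(D^2)M_w\bigr),
\]
and comparing the two expressions yields the claimed identity.

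The main technical subtlety is the Laplace transform uniqueness step. One must verify that $\mu$, which is only known a priori as a difference of two positive Radon measures, is regular enough for vanishing Laplace transform on $(0,\infty)$ to force vanishing of the measure itself. This is a standard fact whose applicability is guaranteed by the existence of both densities of states: each $\nu_{D_\pm D_\mp}$ has finite Laplace transform for every $t>0$, and thus $\mu$ does as well, so the classical Müntz-type uniqueness (equivalently, analytic continuation of the Laplace transform in the right half-plane together with Stone--Weierstrass) applies without further work.
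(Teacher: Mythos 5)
Your proof is correct, and it actually does something the paper's own proof glosses over. The paper proves only the identity $\dim_\tau(\operatorname{Ind} D) = \mathrm{Tr}_\omega\big(\eta\, {\rm e}^{-tD^2} M_w\big)$ for $t>0$ by applying Theorem~\ref{T: vector bundle dos thm} to $D_-D_+$ and $D_+D_-$ separately, and then declares the theorem proved --- but $\lambda \mapsto {\rm e}^{-t\lambda}$ is not compactly supported, so the stated claim for arbitrary $f\in C_c(\mathbb{R})$ with $f(0)=1$ is not a direct corollary. Your Laplace-transform uniqueness step, establishing that the signed measure $\mu = \nu_{D_-D_+}-\nu_{D_+D_-}$ equals $c\,\delta_0$ where $c = \dim_\tau(\operatorname{Ind} D)$, is precisely the bridge needed to pass from the exponential case to general $f$, and you have justified it adequately: both measures are positive Radon measures on $[0,\infty)$ (as $D_\mp D_\pm \geq 0$) with finite Laplace transforms for every $t>0$, so one may rewrite the identity as an equality of Laplace transforms of two positive measures (moving $c\delta_0$ to whichever side makes it positive) and invoke classical uniqueness. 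Your subsequent application of Theorem~\ref{T: vector bundle dos thm} to both blocks to identify $\int f\,{\rm d}\mu$ with $\mathrm{Tr}_\omega(\eta f(D^2)M_w)$ is exactly the paper's mechanism. One minor point of agreement in your favour: your sign convention $\tau(\eta {\rm e}^{-tD^2}) = \tau({\rm e}^{-tD_-D_+}) - \tau({\rm e}^{-tD_+D_-})$ matches the block decomposition $D^2|_{L_2(S^+)} = D_-D_+$ correctly, whereas the paper writes the opposite order, apparently a typographical slip. In short, same strategy, but your write-up is the more complete of the two.
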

\begin{proof}
By~\cite[Proposition~8.1]{Roe1988a}, we have that
\[
\dim_\tau(\text{Ind} D) = \tau\bigl(\eta \exp\bigl(-tD^2\bigr)\bigr), \qquad t>0,
\]
where $\eta$ is the grading operator
\[
 \eta = \begin{pmatrix} 1 & \hphantom{-} 0 \\ 0 & -1\end{pmatrix}
\]
with respect to the orthogonal direct sum $L_2(S) = L_2\big(S^+\big)\oplus L_2(S^-).$

The proof of the theorem amounts to showing that
\[
 \tau\big(\eta {\rm e}^{-tD^2}\big) = \mathrm{Tr}_{\omega}\big(\eta {\rm e}^{-tD^2}M_w\big).
\]
The left-hand side is the same as
\[
 \tau\big({\rm e}^{-tD_+D_-}\big)-\tau\big({\rm e}^{-tD_-D_+}\big)
\]
while the right-hand side is
\[
 \mathrm{Tr}_{\omega}\big({\rm e}^{-tD_+D_-}M_w\big)-\mathrm{Tr}_{\omega}\big({\rm e}^{-tD_-D_+}M_w\big).
\]
Applying Theorem~\ref{T: vector bundle dos thm} to $D_+D_-$ and $D_-D_+$ individually proves the result.
\end{proof}

\begin{Remark}
The index $\dim_{\tau}(\operatorname{Ind}(D))$ is computed by a version of the McKean--Singer formula~\cite[Proposition 8.1]{Roe1988a}
\[
 \mathrm{dim}_{\tau}(\mathrm{Ind}(\mathrm{D})) = \tau\big(\eta \exp\bigl(-tD^2\bigr)\big)
\]
for arbitrary $t>0.$ One of the motivations in developing the present theorem was
to give a new explanation of why the function
\[
 t\mapsto\tau\big(\eta\exp\bigl(-tD^2\bigr)\big)
\]
is independent of $t.$ If the assumptions of Theorem~\ref{roe_index_formula_theorem} hold, then
\[
 \tau\big(\eta \exp\bigl(-tD^2\bigr)\big) = \mathrm{Tr}_\omega\big(\eta {\rm e}^{-tD^2}M_w\big).
\]
Formally differentiating the right-hand side with respect to $t$ and using the tracial property of~$\mathrm{Tr}_\omega$ yields
\[
 \frac{{\rm d}}{{\rm d}t}\mathrm{Tr}_\omega\big(\eta {\rm e}^{-tD^2}M_w\big) = -\mathrm{Tr}_{\omega}\big(\eta {\rm e}^{-tD^2}D[D,M_w]\big).
\]
Our conditions ensure that ${\rm e}^{-tD^2}D[D,M_w]$ is trace class, and hence that
\[
 \frac{{\rm d}}{{\rm d}t}\mathrm{Tr}_{\omega}\big(\eta {\rm e}^{-tD^2}M_w\big) = 0.
\]
It is interesting that $\mathrm{Tr}_{\omega}\big(\eta {\rm e}^{-tD^2}M_w\big)$ and the traditional heat supertrace $\mathrm{Tr}\big(\eta {\rm e}^{-tD^2}\big)$ on a compact manifold are both independent of $t$ for apparently different reasons.
\end{Remark}

\section{An example with a random operator}\label{S: Example}
The assumptions in Theorem~\ref{T: main manifold thm} appear quite strong, especially the existence of the density of states. The following example of a random operator on a non-compact manifold where the density of states exists was given by Lenz, Peyerimhoff and Veseli\'{c}~\cite{LenzPeyerimhoffVeselic2007}, generalising earlier examples in~\cite{LenzPeyerimhoff2004,PeyerimhoffVeselic2002}.

\begin{Example}[{\cite[Example (RSM)]{LenzPeyerimhoffVeselic2007}}]
 Let $(X,g_0)$ be the connected Riemannian covering of a compact Riemannian manifold $M = X/\Gamma,$ where $\Gamma$ is an infinite group acting freely and properly discontinuously on $X$ by isometries. Assume that there is an ergodic action $\alpha$ of $\Gamma$ on a probability space $(\Omega,\Sigma,\mathbb{P}),$ and let $\{g_\omega\}_{\omega\in \Omega}$ be a measurable family of metrics on $X$ which are uniformly comparable with $g_0,$ in the sense that there exists $A>0$ such that
 \[
 \frac1A g_0(v,v) \leq g_\omega(v,v)\leq Ag_0(v,v).
 \]
 for all tangent vectors $v$ to $X.$ Assume that the action $\alpha$ of $\Gamma$ on $\Omega$ is compatible with the action on $\Gamma$ on $X$ in the sense that $g_{\alpha_{\gamma}^{-1}\omega}$ is the pullback of $g_{\omega}$ under the automorphism defined by $\gamma.$

 Similarly, it is assumed that there is a measurable family $\{V_{\omega}\}_{\omega\in \Omega}$ of smooth functions on $X$ such that
 \[
 V_{\omega}\circ \gamma = V_{\alpha^{-1}_\gamma\omega}.
 \]
 Let $\nu^\omega$ denote the Riemannian volume form on $X$ corresponding to $g_{\omega},$ and let $\Delta_{\omega}$ be the Laplace--Beltrami operator on $(X,g_{\omega}).$ Then~\cite{LenzPeyerimhoffVeselic2007} consider the operator on $L_2(X,\nu^{\omega})$ given by
 \[
 H_{\omega} = -\Delta_{\omega}+M_{V_\omega}.
 \]
 What is shown in~\cite[equation~(27)]{LenzPeyerimhoffVeselic2007} is that if $\Gamma$ is amenable, then for every tempered F{\o}lner sequence $\{A_n\}_{n=0}^\infty$ in $\Gamma,$ the limit
 \[
 \lim_{n\to\infty} \frac{1}{|A_n|}\mathrm{Tr}_{L_2(X,\nu^{\omega})}\big(M_{\chi_{A_nF}}{\rm e}^{-tH_{\omega}}\big)
 \]
 exists for almost every $\omega,$ where $F$ is a fundamental domain for $\Gamma.$

 Recall that we say that a F{\o}lner sequence $\{A_n\}_{n=0}^\infty$ is tempered if there is a constant $C>0$ such that for every $n\geq 1$, we have
 \[
 \bigg|\bigcup_{k<n} A_k^{-1}A_{n}\bigg| \leq C|A_{n}|.
 \]
\end{Example}

We will make one further assumption: that the measure $\nu^{0}$ associated to $g_0$ has a doubling property. That is, there exists a constant $C$ such that
\[
 \nu^0(B_{g_{0}}(x_0,2R)) \leq C\nu^{0}(B_{g_{0}}(x_0,R)),\qquad R>0.
\]
Note that since the identity function is a bi-Lipschitz continuous map from $(X,g_0)$ to $(X,g_{\omega}),$ the same holds for the measures $\nu^{\omega}$ associated to the metrics $g_{\omega}.$

\begin{Proposition}
 Let $(X, g_0)$ be as above. If $X$ satisfies Property~$($D$)$, the density for $H_\omega$ exists almost surely, and therefore
 \[
 \mathrm{Tr}_{\varpi}(f(H_\omega)M_w) = \int_{\mathbb{R}} f(\lambda)\, {\rm d}\nu_{H_\omega}(\lambda), \qquad f\in C_c(\mathbb{R})
 \]
 for every extended limit $\varpi.$
\end{Proposition}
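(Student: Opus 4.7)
The strategy has two parts: produce the density of states $\nu_{H_\omega}$ almost surely, then apply Theorem~\ref{T: main manifold thm} to the pair $\bigl((X,g_\omega), H_\omega\bigr)$. The second part is essentially bookkeeping: $(X,g_\omega)$ inherits bounded geometry and Property~(D) from $(X,g_0)$ via the uniform comparability $A^{-1}g_0\le g_\omega\le Ag_0$, since balls and their volumes in $g_\omega$ sandwich those in $g_0$ up to multiplicative constants and the two ratios appearing in Definition~\ref{D:Property_D} are stable under such bi-Lipschitz distortion. The operator $H_\omega=-\Delta_\omega+M_{V_\omega}$ lies in $EBD^2(X,g_\omega)$ because $V_\omega$ is uniformly bounded and the Laplace--Beltrami operator of a bounded-geometry metric is uniformly elliptic of order $2$ with bounded coefficients. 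Once the DOS is produced, Theorem~\ref{T: main manifold thm} then yields the stated identity for every extended limit $\varpi$.

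To produce $\nu_{H_\omega}$ I would reinterpret the F{\o}lner-average limit from equation~(27) of \cite{LenzPeyerimhoffVeselic2007} as a ball-centred limit. Fix a compact fundamental domain $F\subset X$ and set
\[
A_n:=\{\gamma\in\Gamma\colon \gamma F\cap B(x_0,n)\neq\emptyset\},\qquad n\in\mathbb{N}.
\]
Then $A_nF\supseteq B(x_0,n)$ and $A_nF\subseteq B(x_0,n+\mathrm{diam}(F))$, so Lemma~\ref{Grimaldi} and Property~(D) together give
\[
\frac{|A_n|\cdot|F|}{|B(x_0,n)|}\longrightarrow 1,\qquad \frac{|A_nF\,\triangle\, B(x_0,n)|}{|B(x_0,n)|}\longrightarrow 0.
\]
For any finite symmetric generating set $S\subset\Gamma$, the symmetric difference $SA_n\triangle A_n$ is indexed by elements $\gamma$ whose orbit $\gamma F$ meets a $C$-tube around $\partial B(x_0,n)$; Lemma~\ref{L:NewPropD} shows this tube has relative volume tending to $0$, proving that $\{A_n\}$ is F{\o}lner. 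Temperedness follows from the fact that Property~(D) forces subexponential volume growth (as already observed in the introduction, which also gives amenability of $\Gamma$), so $\bigcup_{k<n}A_k^{-1}A_n$ is contained in $A_{2n+O(1)}$ and remains comparable in size to $A_n$.

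The final step is to transfer the F{\o}lner-average convergence to a ball-average convergence. The pointwise heat-kernel bound available from bounded geometry gives $\| M_{\chi_E}e^{-tH_\omega}\|_1\le C_t\,|E|$ for any measurable $E$ of finite measure, hence
\[
\bigl|\mathrm{Tr}\bigl(M_{\chi_{A_nF}}e^{-tH_\omega}\bigr)-\mathrm{Tr}\bigl(M_{\chi_{B(x_0,n)}}e^{-tH_\omega}\bigr)\bigr|\le C_t\,|A_nF\,\triangle\, B(x_0,n)|=o\bigl(|B(x_0,n)|\bigr).
\]
Combined with $|A_n|\cdot|F|\sim|B(x_0,n)|$ and the Lenz--Peyerimhoff--Veseli\'c limit, this produces almost-sure existence of the ball-centred limit along integer radii. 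Property~(D) supplies the slow-variation estimate $|B(x_0,R)|/|B(x_0,\lfloor R\rfloor)|\to 1$, which upgrades the integer-radius limit to a genuine continuous $R\to\infty$ limit, giving the Borel measure $\nu_{H_\omega}$ required by Theorem~\ref{T: main manifold thm}. I expect the main technical obstacle to be running the tube estimates uniformly: Property~(D) must simultaneously control the F{\o}lner defect of $\{A_n\}$ in $\Gamma$ and the boundary correction $|A_nF\triangle B(x_0,n)|$ in $X$, and verifying that the derivative-of-surface-area condition in Definition~\ref{D:Property_D} survives the bi-Lipschitz distortion to $g_\omega$ requires a careful (but routine) coordinate computation.
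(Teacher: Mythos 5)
Your overall strategy matches the paper's: construct a ball-indexed F{\o}lner sequence in $\Gamma$, invoke the Lenz--Peyerimhoff--Veseli\'c almost-sure convergence along that sequence, compare F{\o}lner averages with ball averages, and then feed the resulting density of states into Theorem~\ref{T: main manifold thm}. Your specific choice $A_n=\{\gamma\in\Gamma:\gamma F\cap B(x_0,n)\neq\emptyset\}$ is a harmless variant of the paper's $A_k=\{\gamma\in\Gamma:\mathrm{dist}(\gamma x_0,x_0)<k-h\}$; both sandwich $A_nF$ between two concentric balls, and your F{\o}lner and ball-comparison estimates via Property~(D), Lemma~\ref{Grimaldi} and Lemma~\ref{L:NewPropD} are on the right track.

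There is, however, a genuine gap in your temperedness argument. You assert that ``Property~(D) forces subexponential volume growth\ldots so $\bigcup_{k<n}A_k^{-1}A_n$ is contained in $A_{2n+O(1)}$ and remains comparable in size to $A_n$.'' The containment is correct, but the comparability $|A_{2n}|\lesssim|A_n|$ is a volume-doubling statement, and subexponential growth does \emph{not} imply doubling: Property~(D) admits growth of the form $|B(x_0,r)|\sim\exp\bigl(r^{1/2-\varepsilon}\bigr)$, for which $|B(x_0,2r)|/|B(x_0,r)|\to\infty$. This is exactly why the paper imposes a separate hypothesis, stated immediately before the proposition, that the reference measure $\nu^0$ has the doubling property (and observes that this transfers to each $\nu^\omega$ by bi-Lipschitz comparability of the metrics); the doubling hypothesis is part of ``$(X,g_0)$ as above'' and is invoked precisely to prove that $\{A_k\}$ is tempered. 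Without it you cannot apply the Lindenstrauss-type pointwise ergodic theorem underlying \cite[equation (27)]{LenzPeyerimhoffVeselic2007}, so the almost-sure existence of the F{\o}lner-average limit --- and hence of the density of states --- is not established. To repair the argument you must treat the doubling condition as an additional hypothesis, not as a consequence of Property~(D).
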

\begin{proof}
 For brevity, we will denote the measure $\nu^{\omega}$ by $|\cdot|$
 and $B(x_0,R)$ for $B_{g_{\omega}}(x_0,R).$

 We will show that Property~(D) implies that $\Gamma$ admits a tempered F{\o}lner sequence $\{A_n\}_{n=1}^\infty$, and that for any bounded measurable function $g$ on $X$, we have
 \begin{equation}\label{equivalence_of_limits}
 \lim_{k\to\infty} \frac{1}{|A_kF|} \int_{A_kF}g\,{\rm d}\nu^{\omega} = \lim_{R\to\infty} \frac{1}{|B(x_0,R)|}\int_{B(x_0,R)} g\,{\rm d}\nu^{\omega}
 \end{equation}
 if either limit exists. Recall that $F$ is a fundamental domain for the action of $\Gamma.$
 Together with the results of~\cite{LenzPeyerimhoffVeselic2007}, this implies that the limit
 \[
 \lim_{R\to\infty} \frac{1}{|B(x_0,R)|}\mathrm{Tr}\big(M_{\chi_{B(x_0,R)}}{\rm e}^{-tH_{\omega}}\big)
 \]
 exists for every $t>0$, and hence the assumptions of Theorem~\ref{T: main manifold thm} are satisfied.

 Let $h>2\mathrm{diam}(F).$ For $k\geq h,$ let
 \[
 A_k = \{\gamma \in \Gamma\colon \mathrm{dist}(\gamma x_0,x_0) < k-h \}.
 \]
 We claim that $A_k$ is a tempered F{\o}lner sequence.

 Note that $\mathrm{dist}(\gamma x_0,x_0) = \mathrm{dist}\big(\gamma^{-1}x_0,x_0\big),$ and so automatically $A_k=A_k^{-1}.$
 Define
 \[
 B_k := A_kF = \bigcup_{\gamma\in A_k} \gamma F.
 \]
 First, we show that $B(x_0,k-2h)\subseteq B_k.$ Indeed, if $p \in B(x_0,k-2h)$ there exists some $\gamma\in \Gamma$ such that $p \in \gamma F,$ so $\mathrm{dist}(p,x_0) \leq \mathrm{diam}(F) < h,$ and thus $\mathrm{dist}(\gamma x_0,x_0) < k-2k+h =k-h.$ On the other hand, since $F$ has diameter smaller than $\frac{h}{2},$ if $p\in B_k$ then $\mathrm{dist}(p,x_0)\leq \frac{h}{2}+k-h < k.$
 That is, for all $k\geq 2h$, we have
 \[
 B(x_0,k-2h)\subset B_k \subset B(x_0,k).
 \]
 Since the action of $\Gamma$ is free, the union of the translates of $F$ is disjoint, and
 \[
 |B_k| = |A_k||F|,\qquad k\geq 0
 \]
 and hence
 \[
 |B(x_0,k-2h)| \leq |F||A_k| \leq |B(x_0,k)|.
 \]
 As in the proof of Lemma~\ref{Grimaldi}, we know that for each $h>0$ there is a constant $C_h$, so that we have
 \[
 \frac{|B(x_0, k)|}{|B(x_0, k-2h)|} \leq C_h, \qquad k > 1+2h.
 \]
 Therefore,
 \[
 |A_k| \approx |B_k| \approx |B(x_0,k)|
 \]
 uniformly in $k>1+2h.$

 To see that $A_k$ is F{\o}lner, let $\gamma\in \Gamma.$ By the triangle inequality, there exists $N>0$ such that
 \[
 \gamma A_k\subset A_{k+N}
 \]
 and, for $k$ sufficiently large,
 \[
 A_k\subset \gamma A_{k-N}
 \]
 and therefore the symmetric difference of $A_k$ and $\gamma A_k$ satisfies
 \[
 (\gamma A_k\setminus A_k)\cup (A_k\setminus \gamma A_k) \subset A_{k+N}\setminus A_{k-N}.
 \]
 It follows that, as $k\to\infty,$
 \[
 \frac{|(\gamma A_k\setminus A_k)\cup (A_k\setminus \gamma A_k)|}{|A_k|} \approx \frac{|B(x_0,k+N)|-|B(0,k-N)|}{|B(x_0,k)|},
 \]
 which is vanishing as $k\to\infty.$ Hence, $\{A_k\}_{k=1}^\infty$ is F{\o}lner.

 To see that $\{A_k\}_{k=1}^\infty$ is tempered, it suffices to show that there is a constant $C$ such that
 \[
 |A_k\cdot A_k| \leq C|A_k|.
 \]
 By the triangle inequality and the fact that $\Gamma$ acts isometrically, we see that
 \[
 A_k\cdot A_k \subseteq A_{2k}.
 \]
 Therefore, by the doubling condition,
 \[
 |A_k\cdot A_k| \leq |A_{2k}| = \frac{1}{|F|}|B(x_0,2k)| \lesssim |B(x_0,k)| \approx |A_k|
 \]
 uniformly in $k$ for sufficiently large $k.$
 Hence, $\{A_k\}_{k=1}^\infty$ is tempered.

Finally, we prove~\eqref{equivalence_of_limits}. Using the fact that $B_k\subseteq B(x_0,k)$, we write
\begin{align*}
 \frac{1}{|B(x_0,k)|}\int_{B(x_0,k)} g\,{\rm d}\nu^{\omega}={}& \frac{1}{|B_k|}\int_{B_k} g\,{\rm d}\nu^{\omega} + \frac{|B_k|-|B(x_0,k)|}{|B(x_0,k)||B_k|}\int_{B_k} g\,{\rm d}\nu^{\omega} \\
 &{} + \frac{1}{|B(x_0,k)|}\int_{B(x_0,k)\setminus B_k} g\,{\rm d}\nu^{\omega}.
\end{align*}
Therefore,
\begin{align*}
 \bigg|\frac{1}{|B(x_0,k)|}\int_{B(x_0,k)} g\,{\rm d}\nu^{\omega} - \frac{1}{|B_k|}\int_{B_k} g\,{\rm d}\nu^{\omega}\bigg|
 &\leq 2\|g\|_{\infty}\frac{|B(x_0,k)|-|B_k|}{|B(x_0,k)|}\\
 &\leq 2\|g\|_{\infty} \frac{|B(x_0,k)|-|B(x_0,k-2h)|}{|B(x_0,k)|}
\end{align*}
and this vanishes as $k\to\infty.$ From here one easily deduces~\eqref{equivalence_of_limits}.
\end{proof}

\subsection*{Acknowledgements}

We are grateful to Fedor Sukochev for his support and encouragement, and to Teun van Nuland for a helpful discussion. We thank the anonymous referees for comments that helped improve the paper.

\pdfbookmark[1]{References}{ref}
\LastPageEnding

\end{document}